\documentclass[smallcondensed,a4paper]{svjour3}
\pdfoutput=1
\smartqed
\usepackage{graphicx}
\usepackage{newtxtext,newtxmath}
\usepackage[utf8]{inputenc}
\usepackage[T1]{fontenc}
\usepackage{llncsstuffthm}
\usepackage{cwdefs}
\usepackage{amsfonts}
\usepackage{bm}
\usepackage{multirow}
\usepackage{longtable}
\usepackage{booktabs}
\usepackage{amsmath}
\usepackage{bigdelim}
\usepackage{comment}
\usepackage{xspace}
\usepackage{booktabs}
\usepackage{setspace}
\usepackage{upgreek}
\usepackage{enumitem}
\usepackage{rotating}
\usepackage{tikz}
\usetikzlibrary{shapes.geometric}
\usetikzlibrary{shapes.symbols}
\usetikzlibrary{positioning}
\usetikzlibrary{arrows}
\usepackage{array}
\usepackage{xcolor}
\usepackage{colortbl}
\usepackage{capt-of}
\usepackage[labelfont=bf,hypcap=false,font=small]{caption}
\usepackage{float}
\usepackage{pdfpages}

\usepackage{hyperref}
\hypersetup{
hyperfootnotes=false,
breaklinks=true,
hidelinks,
linktoc=all,
colorlinks=false
}
\usepackage{plabels}

\tikzset{
itria/.style={
  draw,
  solid,
  thin,
  isosceles triangle,
  isosceles triangle apex angle=60,
  shape border rotate=90,yshift=-6.8ex}
}

\tikzset{
itriaw/.style={
  draw,
  solid,
  thin,
  isosceles triangle,
  isosceles triangle apex angle=90,
  shape border rotate=90,yshift=-5.1ex}
}

\newcommand{\triaw}[2]
{{ node[itriaw]
    {\small \makebox[#1]{\raisebox{-1.5ex}[0pt][5pt]{\rule[-0.55ex]{0pt}{2.2ex}$\,$#2} }}}}

\newcommand{\tableauscale}{0.7}

\newcommand{\vbar}{\raisebox{-0.60ex}{\rule{0pt}{2.35ex}}}
\definecolor{tcolbbbbg}{rgb}{0.8,0.8,0.8}
\definecolor{tcolaaabg}{rgb}{1.0,1.0,1.0}
\newcommand{\taaa}[1]{\colorbox{tcolaaabg}{$#1\vbar$}}
\newcommand{\tbbb}[1]{\colorbox{tcolbbbbg}{$#1\vbar$}}

\newcommand{\tbbbtxt}[1]{\setlength{\fboxsep}{1pt}\colorbox{tcolbbbbg}{#1\vbar}}
\newcommand{\tccc}[1]{\fbox{$#1\vbar$}}
\newcommand{\tccctxt}[1]{{\setlength{\fboxsep}{1pt}\fbox{#1\vbar}}}
\newcommand{\extabld}{9ex}
\newcommand{\extabscale}{0.85}

\newcommand{\nhphantom}[1]{\sbox0{#1}\hspace*{-\the\wd0}}
\newcommand{\nannot}[1]%
           {\hspace{0.2em}{[}#1{]}\nhphantom{\hspace{0.2em}{[}#1{]}}}
\newcommand{\nannotw}[1]%
           {\hspace{0.6em}{[}#1{]}\nhphantom{\hspace{0.6em}{[}#1{]}}}

\newcommand{\nannotmm}[1]%
           {\hspace{0.2em}\underline{{[}#1{]}}\nhphantom{\hspace{0.2em}\underline{{[}#1{]}}}}
\newcommand{\nannotwmm}[1]%
           {\hspace{0.6em}\underline{{[}#1{]}}\nhphantom{\hspace{0.6em}\underline{{[}#1{]}}}}

\title{Craig Interpolation with Clausal First-Order Tableaux}

\author{Christoph Wernhard}
\institute{\email{info@christophwerhard.com}}
\date{Version: March 21, 2021}

\hyphenation{tab-leaux}
\hyphenation{tab-leau}

\makeatletter
\newlength{\negph@wd}
\DeclareRobustCommand{\negphantom}[1]{%
  \ifmmode
    \mathpalette\negph@math{#1}%
  \else
    \negph@do{#1}%
  \fi
}
\newcommand{\negph@math}[2]{\negph@do{$\m@th#1#2$}}
\newcommand{\negph@do}[1]{%
  \settowidth{\negph@wd}{#1}%
  \hspace*{-\negph@wd}%
}
\makeatother

\newcommand{\ex}{\exists}
\newcommand{\all}{\forall}

\newcommand{\invsubst}[2]{#1 \la #2^{-1} \ra}
\newcommand{\invsubstpost}[1]{\la #1^{-1} \ra}

\newcommand{\emptysubst}{\varepsilon}

\newcommand{\m}[1]{\mathit{#1}}

\newcommand{\dom}[1]{\m{\mathcal{D}\hspace{-0.08em}om}(#1)}
\newcommand{\rng}[1]{\m{\mathcal{R}\hspace{-0.02em}ng}(#1)}
\newcommand{\varrng}[1]{\m{\mathcal{VR}\hspace{-0.02em}ng}(#1)}

\newcommand{\var}[1]{\m{\mathcal{V}\hspace{-0.11em}ar}(#1)}
\newcommand{\free}[1]{\m{\mathcal{F}\hspace{-0.11em}ree}\m{\mathcal{V}\hspace{-0.11em}ar}(#1)}
\newcommand{\voc}[1]{\m{\mathcal{V}\hspace{-0.11em}oc}^\pm{(#1)}}
\newcommand{\vocplain}[1]{\m{\mathcal{V}\hspace{-0.11em}oc}{(#1)}}

\newcommand{\pred}[1]{\m{\mathcal{P}\hspace{-0.11em}red}^\pm(#1)}
\newcommand{\predplain}[1]{\m{\mathcal{P}\hspace{-0.11em}red}(#1)}
\newcommand{\fun}[1]{\m{\mathcal{F}\hspace{-0.18em}un}(#1)}
\newcommand{\lit}[1]{\m{\mathcal{L}\hspace{-0.01em}iterals}(#1)}

\newcommand{\sterm}{\text{-term}}
\newcommand{\sterms}{\text{-terms}}

\renewcommand{\miff}{\textit{iff}}
\renewcommand{\mimp}{\textit{implies}}

\newcommand{\algoinput}{\smallskip \noindent\textsc{Input: }}
\newcommand{\algooutput}{\smallskip \noindent\textsc{Output: }}
\newcommand{\algomethod}{\smallskip \noindent\textsc{Method: }}

\newenvironment{arrayprf}
{\begin{array}{Z{2.5em}@{\hspace{1em}}X{32em}}}
{\end{array}}

\newenvironment{arrayprfeq}
{\begin{array}{Z{2.5em}@{\hspace{1em}}Y{1.5em}X{30.5em}}}
{\end{array}}

\newenvironment{arrayprfmeq}
{\begin{array}{Z{2.5em}@{\hspace{1em}}Z{3.0em}@{\hspace{1em}}X{28.0em}}}
{\end{array}}

\newcommand{\nlit}[1]{\f{lit}(#1)}
\newcommand{\nclause}[1]{\f{clause}(#1)}

\newcommand{\nside}[1]{\f{side}(#1)}
\newcommand{\ntgt}[1]{\f{tgt}(#1)}
\newcommand{\nipol}[1]{\f{ipol}(#1)}
\newcommand{\npath}[2]{\f{path}_{#1}(#2)}
\newcommand{\ncopy}[1]{\f{copy}(N)}

\newcommand{\stv}{\upsilon}   
\newcommand{\stren}{\rho}     
\newcommand{\stmerge}{\delta} 
\newcommand{\stsk}{\phi}      
\newcommand{\stz}{\theta}     
\newcommand{\sth}{\eta}       
\newcommand{\stt}{\sigma}     

\renewcommand{\stv}{\theta_{\textsc{exp}}}   
\renewcommand{\stren}{\theta_{\textsc{lft}}} 
\renewcommand{\stmerge}{\phi_{\textsc{exp}}} 
\renewcommand{\stsk}{\phi_{\textsc{lft}}}    
\newcommand{\stski}{\phi}     
\renewcommand{\stt}{\sigma}               
\renewcommand{\stz}{\eta_{\textsc{lft}}}     
\renewcommand{\sth}{\eta_{\textsc{exp}}}     

\newcommand{\FA}{F_{\textsc{top}}}
\newcommand{\FE}{F_{\textsc{exp}}}
\newcommand{\FG}{F_{\textsc{Grd}}}
\newcommand{\FQ}{F_{\textsc{lft}}}
\newcommand{\GE}{G_{\textsc{exp}}}
\newcommand{\HG}{H_{\textsc{grd}}}
\newcommand{\HQ}{H_{\textsc{lft}}}

\newcommand{\symset}[1]{\mathcal{#1}}

\renewcommand{\us}{\symset{U}}
\renewcommand{\vs}{\symset{V}}
\newcommand{\xwf}{\symset{X}}  
\newcommand{\ywg}{\symset{Y}}  
\newcommand{\fgs}{\symset{F\!G}}
\newcommand{\ffs}{\symset{F}}
\newcommand{\ggs}{\symset{G}}

\newcommand{\XS}{\symset{X}}


\newcommand{\al}{\raisebox{-0.48ex}{\rule{0pt}{1.95ex}}}



\newcommand{\CTIG}{CTI\xspace}
\newcommand{\CTIF}{CTIF\xspace}


\newcommand{\aaa}{\f{F}}
\newcommand{\bbb}{\f{G}}
\newcommand{\sided}{two-sided\xspace}
\newcommand{\Sided}{Two-Sided\xspace}

\newcommand{\npathL}[1]{\npath{\aaa}{#1}}
\newcommand{\npathR}[1]{\npath{\bbb}{#1}}

\newcommand{\FL}{F}

\newcommand{\cpi}[1]{\f{pi}(#1)}
\newcommand{\huang}[1]{\f{pi}(#1)}
\newcommand{\CUTS}{\mathit{CUTS}}
\newcommand{\fct}{\f{ct}}
\newcommand{\ct}[1]{\fct(#1)}

\newcommand{\mm}[1]{#1}
 
\newcommand{\LBase}{Lifting Base\xspace}
\newcommand{\LBases}{Lifting Bases\xspace}

\newcommand{\lbase}{lifting base\xspace}

\newcounter{ibcounter}

\newcommand{\ibn}[1]{\prl{#1}}
\newcommand{\ibm}[1]{(\prefNumber{#1}^\prime)}
\newcommand{\ibrefm}[1]{(\prefGlobalNumber{def-ib:#1}$'$)}
\newcommand{\ibref}[1]{(\prefGlobalNumber{def-ib:#1})}

\newcommand{\hypertableau}{hypertableau\xspace}
\newcommand{\hypertableaux}{hypertableaux\xspace}
\newcommand{\Hypertableau}{Hypertableau\xspace}

\newcommand{\strictsubterm}[2]{#1 \lhd #2}

\renewcommand{\GR}{G}
\newcommand{\lnotGR}{\lnot \GR}

\newcommand{\NQ}{\overline{Q}}

\newcommand{\w}{v'}
\newcommand{\R}{Q'}

\newcommand{\CMProver}{\name{CMProver}\xspace}

\makeatletter
\def\denseparagraph{\@startsection{paragraph}{4}{\z@}%
  {-9pt plus-2pt}{\z@}{\normalsize\itshape}}
\makeatother

\begin{document}

\maketitle

\begin{abstract}
  We develop foundations for computing Craig-Lyndon interpolants of two given
  formulas with first-order theorem provers that construct clausal tableaux.
  Provers that can be understood in this way include efficient
  machine-oriented systems based on calculi of two families: goal-oriented
  such as model elimination and the connection method, and bottom-up such as
  the \hypertableau calculus. We present the first interpolation method for
  first-order proofs represented by closed tableaux that proceeds in two
  stages, similar to known interpolation methods for resolution proofs.  The
  first stage is an induction on the tableau structure, which is sufficient to
  compute propositional interpolants.  We show that this can linearly simulate
  different prominent propositional interpolation methods that operate by an
  induction on a resolution deduction tree. In the second stage, interpolant
  lifting, quantified variables that replace certain terms (constants and
  compound terms) by variables are introduced.  We justify the correctness of
  interpolant lifting (for the case without built-in equality) abstractly on
  the basis of Herbrand's theorem and for a different characterization of the
  formulas to be lifted than in the literature.  In addition, we discuss
  various subtle aspects that are relevant for the investigation and practical
  realization of first-order interpolation based on clausal tableaux.

  \keywords{Craig interpolation
    \and first-order theorem proving
    \and  clausal tableaux
    \and connection method
  \and inductive interpolation
  \and simulation between calculi
  \and two-stage interpolation
  \and interpolant lifting
  }
\end{abstract}

\section{Introduction}
\label{sec-intro}

By Craig's interpolation theorem \cite{craig:linear,craig:uses},\footnote{The
  development of Craig's interpolation theorem is described in
  \cite{craig:2008:road}.} for two first-order formulas $F$ and $G$ such that
$F$ entails $G$ there exists a third first-order formula $H$ that is entailed
by $F$, entails $G$ and is such that all predicate and function symbols and
all free variables occurring in it occur in both $F$ and $G$. Such a
\name{Craig interpolant}~$H$ can be \emph{constructed} from given formulas $F$
and $G$ with a calculus that allows to extract~$H$ from a proof that $F$
entails $G$, or, equivalently, a proof that $F \imp G$ is valid, or, again
equivalently, a refutation of $F \land \lnot G$.  Automated construction of
interpolants has many applications, in the area of computational logic most
notably in symbolic model checking, initiated with \cite{mcmillan:2003}, and
in query reformulation
\cite{segoufin:vianu:determinacy:05,marx:2007,nash:2010,borgida:2010,toman:wedell:book,benedikt:guarded,benedikt:etal:2014:generating,toman:2015:tableaux,benedikt:book,benedikt:2017,toman:2017}.
The foundation for the latter application field is the observation that a
reformulated query can be viewed as a \emph{definiens} of a given query where
only symbols from a given set, the target language of the reformulation, occur
in the definiens.  The existence of such definientia, that is, definability
\cite{tarski:35}, or \name{determinacy} as it is called in the database
context, can be expressed as validity and their synthesis as interpolant
construction \cite{craig:uses}. For example, a \name{definiens~$H$ of a unary
  predicate~$\fp$ within a first-order formula~$F$} can be characterized by
the following conditions:
\begin{enumerate}
\item $F$ entails $\forall x\, (\fp(x) \equi H)$.
\item $\fp$ does not occur in  $H$.
\end{enumerate}
The variable $x$ is allowed there to occur free in $H$. We further assume that
$x$ does not occur free in $F$ and let $F^\prime$ denote $F$ with $\fp$
replaced by a fresh symbol $\fp^\prime$. Now the characterization of
\name{definiens} by the two conditions given above can be equivalently
expressed as
\begin{center}
\name{$H$ is a
  Craig interpolant of the two formulas $F \land \fp(x)$ and $\lnot (F^\prime
  \land \lnot \fp^\prime(x))$}.
\end{center}
A definiens $H$ exists if and only if it is valid that the first formula
implies the second one.

There are many known ways to strengthen Craig's interpolation theorem by
ensuring that if formulas~$F$ and~$G$ where $F$ entails $G$ satisfy certain
syntactic restrictions, then there exists an interpolant~$H$ that also
satisfies certain syntactic restrictions. We consider here in particular that
predicates occur in~$H$ only with \emph{polarities} with which they occur in
both $F$ and $G$.  (A predicate occurs with \name{positive} (\name{negative})
polarity in a formula if it occurs there in the scope of an even (odd) number
of negation operators.)  The respective strengthened interpolation theorem has
been explicated by Lyndon \cite{lyndon}, hence we call Craig interpolants that
meet this additional restriction \name{Craig-Lyndon interpolants}.

\subsection{Craig Interpolation and Clausal Tableaux}
\label{sec-intro-tab}
  
The construction of Craig interpolants of given first-order formulas has been
elegantly specified in the framework of analytic tableaux by Smullyan
\cite{smullyan:book:68,fitting:book}. Although this has been taken as
foundation for applications of interpolation in query reformulation
\cite{toman:wedell:book,benedikt:book}, it has been hardly used as a basis for
the practical computation of first-order interpolants with automated reasoning
systems, where so far mainly the interpolant extraction from resolution
proofs, also with paramodulation and superposition, has been considered (see
\cite{bonacina:15:ground,bonacina:15:on,kovacs:17} for recent overviews and
discussions).  Some of these techniques only work on specially constrained
proofs.  In general they support equality handling with superposition and
integrated theory reasoning, targeted at applications in verification.

Here we approach the computation of interpolants from another paradigm of
automated reasoning, the framework of \defname{clausal tableaux} (or
\name{clause tableau} \cite{handbook:ar:haehnle})
\cite{letz:clausal:deduktion,letz:habil,handbook:tableaux:letz,letz:stenz:handbook}. It
has been established in the 1990s as a tool to develop and investigate methods
for fully automated (in contrast to interactive) first-order theorem proving
that operate not by manipulating clause \emph{sets} (like resolution), but by
creating a \emph{tree structure}, the clausal tableau, that arranges
\emph{copies of the input clauses}.  It is this explicit representation of the
proof as a structure, the clausal tableau, that lends itself to inductive
extraction of an interpolant.

Expectations are that, on the one hand, the elegance of Smullyan's
interpolation method for analytic tableaux can be utilized and, on the other
hand, a foundation for efficient practical implementations can be laid.

The practically efficient theorem proving methods that can be viewed as
operating by constructing a clausal tableau can be roughly divided into two
major families. First, methods that are goal-sensitive, typically proceeding
with the tableau construction ``top-down'', by ``backward reasoning'',
starting with clauses from the theorem in contrast to the axioms.  Aside of
clausal tableaux in the literal sense, techniques to specify and investigate
such methods include model elimination \cite{loveland:1978}, the connection
method \cite{bibel:ar:1982,bo:C46}, and the Prolog Technology Theorem Prover
\cite{pttp}.  One of the leading first-order proving systems of the 1990s,
\name{SETHEO} \cite{setheo:92}, followed that approach.  The \name{leanCoP}
system \cite{leancop} along with its recent derivations
\cite{kaliszyk15:tableaux,femalecop} as well as the \CMProver component of
\name{PIE} \cite{cw-mathlib,cw-pie,cw-pie:2020} are implementations in active
duty today.

The second major family of methods constructs clausal tableaux ``bottom-up'',
in a ``forward reasoning'' manner, by starting with positive axioms and
deriving positive consequences.  With the focus of their suitability to
construct model representations, these methods have been called
\name{bottom-up model generation (BUMG)} methods \cite{bumg}.  They include,
for example, \name{SATCHMO} \cite{satchmo} and the \hypertableau calculus
\cite{hypertab}, with implementations such as \name{Hyper}, formerly called
\name{\mbox{E-KRHyper}} \cite{cw-ekrhyper,cw-krhyper,hyper:2013}.
\Hypertableau methods are also used in high-per\-for\-mance description logic
reasoners \cite{dl:hypertab}.  It appears that the family of chase methods
from the database field \cite{chase:maier:79,chase:fagin:2005}, which recently
got attention anew in knowledge representation (see, e.g.,
\cite{grau:2013:acyclicity}), can also be understood as such a bottom-up
tableau construction.

Depending on the method, the constructed clausal tableaux have particular
structural characteristics, such as for top-down techniques typically the
connectedness condition, which ensures that each inner node in the tableau has
a child with complementary literal label.  Bottom-up techniques are often
applied in a way such that nodes labeled with a negative literal only appear
as leaves.  For a systematic overview of different variants of tableau
structures and methods, including clausal tableaux realizing both major
paradigms considered above see \cite{handbook:ar:haehnle}.

An essential feature that makes tableau methods particularly suitable for
interpolation compared to methods based on resolution and
paramodulation/superposition is that they generate only formulas which are
instances of subformulas of the input.  For \emph{clausal} tableaux this means
that only \emph{instances of input clauses} are created. More generally,
methods of the instance-based approach to theorem proving (see
\cite{jacobs:inst,baumgartner:2010:ibased} for overviews) realize this
principle. A resolution step, in contrast, recombines fragments of different
clauses into a new clause.
Nevertheless, clausal tableau methods might be complemented by preprocessors
that perform operations which recombine clause fragments, for example
predicate elimination, where all clauses with a given predicate are replaced
by all non-tautological resolvents upon literals with that predicate if this
results in a smaller clause set (see, e.g,
\cite{biere:elim,blocked:fol:2017,kiesl:suda:clause:elim:2017}).

An essential feature that distinguishes clausal from analytic tableau is that
with the clausal form only a particularly simple formula structuring has to be
considered, in essence sets of clauses. Through preprocessing with conversion
to prenex form and Skolemization, the handling of quantifications amounts for
clausal tableau methods just to the handling of free variables, which are
considered implicitly as universally quantified.  There are different
approaches for this: In \name{free-variable} tableaux variables are treated in
a \name{rigid} manner, which means their scope is the whole tableau and they
are considered as placeholders for arbitrary ground terms (see, e.g.,
\cite[Sect.~4]{handbook:tableaux:letz}).  This is typical for the mentioned
top-down methods.  It may, however, complicate tableau construction, because
the instantiation of a variable has effect on all its occurrences throughout
the partially constructed tableau.  Hence the mentioned bottom-up methods
typically handle variables in other ways (see, e.g., \cite{hypertab} and
\cite[Sect.~3.2.4]{handbook:ar:haehnle}).  For our interpolant construction
from given clausal tableaux it seems that the rigid view of variables is the
most simple and straightforward.

For propositional logic there are various known methods to extract an
interpolant by an induction on a tree representation of a resolution proof
(e.g., \cite{huang:95,mcmillan:2003}), which are surveyed and presented in a
common framework in \cite{bonacina:15:ground}.  As we will see, these
interpolation methods for resolution proofs can be linearly simulated by the
extraction of interpolants with an induction on a clausal tableau.  These
simulations are based on a known linear simulation of tree resolution, that
is, resolution with the proof constrained to form a tree in contrast to an
unrestricted dag, by clausal tableaux \cite[Sect.~7.3]{letz:habil}, which is
related to earlier results \cite[Sect.~5.6]{reckhow:thesis}. Different
interpolation methods for resolution proofs are reflected in slight variations
of the involved translation of resolution deduction trees into clausal
tableaux.

\subsection{The Two-Stage Approach to First-Order Interpolation}
\label{sec-intro-lifting}

The \name{two-stage approach} to interpolation proceeds by first computing
with an induction on a proof structure an intermediate formula that is
constrained in certain ways and can be \name{lifted} in a second stage to an
interpolant. \name{Lifting} means there to replace terms by variables and
prepending a quantifier prefix upon these. The two-stage approach seems to
originate in \cite{huang:95}, has been systematically investigated in
\cite{bonacina:15:on}, and is used also in \cite{baaz:11,kovacs:17}.  The
lifting step is justified in \cite{huang:95,baaz:11} on the basis of a proof
structure, with resolution/paramodulation and natural deduction, respectively,
as underlying calculi. In \cite{bonacina:15:on} it is justified independently
of a calculus, but applies only if the terms to be replaced by variables are
constants.

It will be shown here that also the general case, where the terms to be
replaced can be compound terms, leading to dependency constraints on the
quantifications in the computed prefix, can be justified independently of a
calculus and of proof structures on fundamental techniques of automated
theorem proving, Skolemization and Herbrand's theorem.

The intermediate formula computed in the first stage is called in
\cite{huang:95,baaz:11,bonacina:15:on} \name{relational}, \name{weak} and
\name{provisional interpolant}, respectively, and is characterized with
respect to the two interpolation inputs as satisfying the semantic constraints
of a Craig interpolant, but the syntactic constraints only with respect to
predicates (relational and weak interpolant) or not at all (provisional
interpolant).  Our view of lifting based on Herbrand's theorem allows to
characterize the intermediate formula computed in the first stage as an actual
Craig-Lyndon interpolant of two formulas that relate to the interpolation
inputs.  To justify interpolant lifting it is just required that these two
formulas \emph{exist}. They need not to be materialized at interpolant
computation.

In the first stage of the two-stage method presented here a ground interpolant
is extracted from a closed clausal ground tableau that represents a proof of
the unsatisfiability of a set of first-order clauses.  The construction
process of the tableau is entirely independent from its use for interpolation,
permitting to use existing provers without modification and utilizing their
optimizations.  Variables in the tableau returned from the prover must be
rigid.  Before the inductive interpolant extraction they are instantiated by
ground terms, which is then a simple linear operation.  Preprocessing has to
be restricted, as for interpolation it is not sufficient to preserve just
unsatisfiability of the input clauses.

\subsection{Summary of Contributions and Structure of the Paper}

The contributions of the paper can be summarized as:
\begin{enumerate}

\item An extension of the two-stage approach to Craig interpolation to
  first-order tableaux-based methods with rigid variables.
  
\item A method for ground Craig-Lyndon interpolation by an induction on
  clausal tableaux obtained as proof representation from clausal first-order
  theorem provers.  The method adapts a known interpolation method for
  analytic tableaux to clausal tableaux.

\item For propositional logic: Linear simulations of various known inductive
  interpolation methods for resolution proofs with the ground interpolation
  method based on clausal tableaux.  Differences in the methods for resolution
  proofs correspond to variations in the translation of resolution deduction
  trees into clausal tableaux.
  
\item In the context of the two-stage approach to interpolation: A new
  justification for the second stage, interpolant lifting, that is based on
  Herbrand's theorem and independent of a calculus or proof structure. It
  provides a new characterization of the results of the formulas computed in
  the first stage as actual Craig-Lyndon interpolants of certain intermediate
  formulas for which a construction can be specified, but which need not to be
  actually constructed in the computation of first-order interpolants.
  
\item Discussion of various subtle aspects that are relevant for the
  investigation and practical realization of first-order interpolation, such
  as adequacy of the color-based terminology, interpolation-specific choices
  in the algorithms, and adaptations of preprocessing.

\item A basis for implementations of interpolation with efficient
  machine-oriented theorem provers for first-order logic that can be
  understood as constructing clausal tableaux. With methods and systems of two
  main families, goal-oriented ``top-down'' and forward reasoning
  ``bottom-up'', there is a wide range of potential applications.
\end{enumerate}

The rest of the paper is structured as follows: After notation and basic
terminology have been specified in Sect.~\ref{sec-notation}, precise accounts
of \name{clausal tableau} and related notions are given in
Sect.~\ref{sec-tableaux}.  In Sect.~\ref{sec-ipol-basic} the inductive
extraction of ground interpolants from closed clausal ground tableaux is
specified and proven correct.  The relationship to other propositional
interpolation methods including the simulation of methods based on resolution
proofs is shown in Sect.~\ref{sec-simul}.  We then turn to the lifting stage.
In Sect.~\ref{sec-lift-intro} additional notation is specified and the main
theorem about interpolant lifting is stated in Sect.~\ref{sec-lift-thm}.  It
is applied in Sect.~\ref{sec-lift-procedure} to specify a two-stage
interpolation procedure and proven in
Sect.~\ref{sec-lift-proof}. Section~\ref{sec-cli-related} discusses related
work, refinements of our method and some open issues for further research.
Section~\ref{sec-conclusion} concludes the paper, summarizing its main
contributions and indicating related additional research perspectives.

A work-in-progress poster of this research at an earlier stage was
presented at the\linebreak \name{TABLEAUX~2017} conference.

\enlargethispage{0.1cm}
\vspace{-0.1cm}
\section{Notation and Basic Terminology}
\label{sec-notation}

\paragraph{Formulas.}
We basically consider first-order logic without
equality.\footnote{\label{foot-eq}This does not preclude to represent equality
  as a predicate with axioms that express reflexivity, symmetry, transitivity
  and substitutivity.}  Atoms are of the form $p(t_1,\ldots,t_n)$, where $p$
is a \defname{predicate symbol} (briefly \defname{predicate}) with associated
arity $n \geq 0$ and $t_1,\ldots,t_n$ are terms formed from \defname{function
  symbols} (briefly \name{functions}) with associated arity~$\geq 0$ and
\defname{individual variables} (briefly \name{variables}).  Function symbols
with arity $0$ are also called \defname{individual constants} (briefly
\name{constants}).  Unless especially noted, a \defname{formula} is understood
as a formula of first-order logic without equality, constructed from atoms,
constant operators $\true$, $\false$, the unary operator $\lnot$, binary
operators $\land, \lor$ and quantifiers $\forall, \exists$ with their usual
meaning.  Further binary operators~$\imp$, $\equi$ as well as $n$-ary versions
of $\land$ and $\lor$ can be understood as meta-level shorthands.  Also
quantification upon a finite set of variables is used as shorthand for
successive quantification upon each of its elements.  The operators $\land$
and~$\lor$ bind stronger than $\imp$ and $\equi$. The scope of $\lnot$, the
quantifiers, and the $n$-ary connectives is the immediate subformula to the
right.  Formulas in which no predicates with arity larger than zero and no
quantifiers occur are called \defname{propositional}.

\paragraph{Clausal Formulas.}

A \defname{literal} is an atom or a negated atom.  If $A$ is an atom, then the
\name{complement} of $A$ is $\lnot A$ and the complement of $\lnot A$ is $A$.
The complement of a literal~$L$ is denoted by~$\du{L}$.  A literal and its
complement are said to be \defname{complementary}.  A \defname{clause} is a
(possibly empty) disjunction of literals.  A \defname{clausal formula} is a
(possibly empty) conjunction of clauses, called the \defname{clauses in} or
the \defname{clauses of} the formula.  A clausal formula is a special case of
a formula.

\paragraph{Variables, Vocabulary and Literals of a Formula.}

If $E$ is a term or a quantifier-free formula, then the set of variables
occurring in $E$ is denoted by $\var{E}$.  If $F$ is a formula, then the set
of variables that occur \defname{free} in~$F$ is denoted by $\free{F}$.  A
formula without free variables is called a \defname{sentence}.  A term or
quantifier-free formula in which no variable occurs is called
\defname{ground}. A ground formula is thus a special case of a sentence.
Symbols not present in the formulas and other items under discussion are
called \defname{fresh}.

If $E$ is a term or a formula, then the set of functions (including constants)
occurring in~$E$ is denoted by $\fun{E}$.  A subformula occurrence has in a
given formula \defname{positive (negative) polarity}, or is said to occur
\defname{positively (negatively)} in the formula, if it is in the scope of an
even (odd) number of negations.  If $F$ is a formula, then the set of
predicates occurring in $F$ is denoted by $\predplain{F}$, and $\pred{F}$
denotes the set of pairs $\la p, \mathit{pol}\ra$, where $p$ is a predicate
and $\mathit{pol} \in \{{+},{-}\}$, such that an atom with predicate~$p$
occurs in $F$ with the polarity indicated by $\mathit{pol}$.  For example,
$\pred{\fp} = \{\la \fp, {+}\ra\}$ and $\pred{\fp \lor \lnot \fp} = \{\la \fp,
{+}\ra, \la \fp, {-}\ra \}$.  We define $\vocplain{F} \eqdef \predplain{F}
\cup \fun{F}$ and $\voc{F} \eqdef \pred{F} \cup \fun{F}$.

If $F$ is a formula, then the set of pairs $\la A, \mathit{pol}\ra$, where $A$
is an atom and $\mathit{pol} \in \{{+},{-}\}$ such that $A$ occurs in $F$ with
the polarity indicated by $\mathit{pol}$ is denoted by $\lit{F}$.  Observe
that since $\true$ and $\false$ are not considered as atomic formulas but as
constant logical operators it holds that $\lit{\true} = \lit{\false} =
\emptyset$.

The notation $\var{E}$ and $\fun{E}$ is also used with sets~$E$ of terms or
formulas, where it stands for the union of values of the respective function
applied to each member of~$E$.

\paragraph{Semantic Relationships.}

We write $F \entails G$ for \name{$F$ entails $G$}, and $F \equiv G$ for
\name{$F$ is equivalent to $G$}, that is, \name{$F \entails G$ and $G \entails
  F$}.  On occasion we write a sequence of statements with these operators
where the right and left, respectively, arguments of subsequent statements are
identical in a chained way, such as, for example, $F \entails G \entails H$
for \name{$F \entails G$ and $G \entails H$}.

\section{Clausal First-Order Tableaux}
\label{sec-tableaux}

The following definition makes the variant of clausal tableaux that we use as
basis for interpolation precise. It is targeted at modeling tableau structures
produced by efficient fully automated first-order proving systems based on
different calculi.\hspace{-1em}
\begin{defn}[Clausal Tableau and Related Notions]

\sdlab{def-tab}
Let $F$ be a clausal formula.  A \defname{clausal tableau} (briefly
\name{tableau}) \defname{for} $F$ is a finite ordered tree whose nodes~$N$
with exception of the root are labeled with a literal, denoted by $\nlit{N}$,
such that the following condition is met: For each node~$N$ of the tableau the
disjunction of the labels of all its children in their left-to-right order,
denoted by $\nclause{N}$, is an instance of a clause in~$F$. A value of
$\nclause{N}$ for a node~$N$ in a tableau is called a \defname{clause of} the
tableau.

\smallskip

\sdlab{def-tab-closed} A branch of a tableau is called \defname{closed} if and
only if it contains nodes with complementary literals.  A node~$N$ is called
\defname{closed} if and only if all branches through $N$ are closed. A tableau
is called \defname{closed} if and only if its root is closed.

\smallskip

\sdlab{def-tab-closing} A node of a tableau is called \defname{closing} if and
only if it has an ancestor with complementary literal.  With a closing
node~$N$, a particular such ancestor is associated as \defname{target of}~$N$,
written~$\ntgt{N}$. A tableau is called \defname{leaf-closed} if and only if
all of its leaves are closing.

\smallskip
 
\sdlab{def-tab-ground} A tableau is called \defname{ground} if and only if for
all its nodes~$N$ it holds that $\nlit{N}$ is ground.
\end{defn}
The most immediate relationship of clausal tableaux to the semantics of
clausal formulas is that the universal closure of a clausal formula is
unsatisfiable if and only if there exists a closed clausal tableaux for the
clausal formula. Knowing that there are sound and complete calculi that
operate by constructing a closed clausal tableau for an unsatisfiable clausal
formula, and taking into account Herbrand's theorem we can state the following
proposition.%
\begin{prop}[Unsatisfiability and Computation of Closed Clausal Tableaux]
\label{prop-tab-complete}
There is an effective method that computes from a clausal formula~$F$ a closed
clausal tableau for~$F$ if and only if $\forall x_1 \ldots \forall x_n\, F$,
where $\{x_1,\ldots,x_n\} = \var{F}$, is unsatisfiable.  Moreover, this also
holds if terms in the literal labels of tableau nodes are constrained to
ground terms formed from functions occurring in~$F$ and, in case there is no
constant occurring in $F$, an additional fresh constant.

\end{prop}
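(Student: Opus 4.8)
\quad The plan is to split the stated equivalence into a soundness direction (``only if'') and a completeness-with-effectiveness direction (``if''). The soundness direction will cover the forward implication of the main claim and of the ``moreover'' refinement at once, since a ground tableau over the restricted signature is in particular a clausal tableau for~$F$. The converse direction, together with the effectiveness claim, I would derive from Herbrand's theorem in a way that simultaneously produces the restricted ground tableau.

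For soundness I would argue by contraposition. Suppose $\mathcal{I}$ is a model of $\forall x_1 \ldots \forall x_n\, F$ and let $T$ be any clausal tableau for~$F$; fix an arbitrary assignment $\beta$ to the variables occurring in~$T$. For every non-root node~$N$ the clause $\nclause{N}$ is an instance of a clause of~$F$, and $\mathcal{I}$ satisfies the universal closure of every clause of~$F$, so $\nclause{N}$ is true under $\mathcal{I}$ and $\beta$; hence $N$ has a child~$N'$ with $\nlit{N'}$ true under $\mathcal{I}$ and $\beta$. Descending from the root and repeatedly choosing such a child yields a root-to-leaf branch all of whose literal labels are true under $\mathcal{I}$ and $\beta$. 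Such a branch cannot contain a complementary pair $L$, $\du{L}$, so it is not closed, and therefore $T$ is not closed. Consequently the existence of a closed clausal tableau for~$F$ --- with or without the ground and signature restrictions --- forces $\forall x_1 \ldots \forall x_n\, F$ to be unsatisfiable.

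For the converse and effectiveness I would invoke Herbrand's theorem. Let $\mathcal{H}$ be the set of ground terms built from the function symbols occurring in~$F$, enlarged by a single fresh constant in case $F$ contains no constant, so that $\mathcal{H}$ is non-empty; adjoining such a constant leaves the satisfiability of $\forall x_1 \ldots \forall x_n\, F$ unchanged. If this sentence is unsatisfiable, Herbrand's theorem supplies a finite, propositionally unsatisfiable set~$S$ of ground instances over~$\mathcal{H}$ of clauses of~$F$. From~$S$ one obtains a closed clausal tableau all of whose node labels are literals of~$S$ --- hence ground terms over~$\mathcal{H}$ only --- by completeness of ground clausal-tableau construction; concretely, one can read such a tableau off a tree-resolution refutation of~$S$ through the known linear simulation of tree resolution by clausal tableaux, noting that every clause of~$S$ is a ground instance of a clause of~$F$. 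This settles the backward implications. The effective method is then to enumerate the finite subsets of the recursively enumerable set of ground instances over~$\mathcal{H}$ of clauses of~$F$, test each for propositional satisfiability, and output the closed clausal tableau associated with the first unsatisfiable subset found: by the soundness part, whenever the method outputs a closed clausal tableau its input has unsatisfiable universal closure, and by Herbrand's theorem it does output one whenever the input has unsatisfiable universal closure. (Alternatively, existence and computability follow directly from any known sound and complete calculus that constructs closed clausal tableaux for unsatisfiable clausal formulas, and the ``moreover'' part from its ground specialization.)

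The branch-tracing argument and the enumeration are routine; the point I expect to need the most care is the ``moreover'' refinement, namely checking that the passage from the propositionally unsatisfiable ground set~$S$ to a closed clausal tableau \emph{in the precise sense of the definition above} never introduces a term outside~$\mathcal{H}$, and handling cleanly the degenerate case of a constant-free~$F$, where the one fresh constant is exactly what keeps $\mathcal{H}$ --- and hence the supply of admissible node labels --- non-empty.
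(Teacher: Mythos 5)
The paper does not actually give a proof of this proposition: the two sentences preceding it (``there are sound and complete calculi that operate by constructing a closed clausal tableau for an unsatisfiable clausal formula'' plus ``taking into account Herbrand's theorem'') constitute the whole justification, which coincides with your parenthetical fallback at the end. Your detailed write-out of the soundness direction by branch-tracing is fine (modulo the slip that ``non-root'' should read ``node with children'' or simply ``node,'' since $\nclause{N}$ is the clause attached \emph{below}~$N$ and is what matters for the root as well), and the Herbrand-plus-enumeration framing of the effectiveness claim is also sound.

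However, your ``concretely'' step is wrong as stated. The linear simulation of tree resolution by clausal tableaux that this paper uses (from Letz) produces tableaux \emph{with atomic cuts}: some tableau clauses are tautologies $\lnot A \lor A$ that are attached precisely because they are \emph{not} required to be instances of input clauses. Such a tableau is, per Definition~\ref{def-tab}, a clausal tableau for $F \land \textit{cuts}$, not for~$F$ --- so reading off a Letz-style translation of a tree-resolution refutation of~$S$ does not yield a closed clausal tableau for~$F$ in the required sense. Indeed the whole point of the paper's Sect.~\ref{sec-simul} is that atomic cuts are essential to the polynomial simulation. You even flag the right worry at the end (does the passage from~$S$ to a tableau stay ``in the precise sense of the definition''?), but you aim it at the terms rather than at the clauses, and it is the clauses that break. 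To repair the proof, drop the ``concretely'' sentence and rely solely on the completeness of a cut-free ground clausal-tableau calculus (e.g.\ a semantic-tree or model-elimination-style construction on the finite ground set~$S$), which is exactly what the parenthetical at the end of your proof and the paper's one-line justification invoke.
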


Our objective is here interpolant construction on the basis of clausal
tableaux produced by fully automated systems. This has effect on some aspects
of our formal notion of \name{clausal tableau}: All occurrences of variables
in the literal labels of a tableau according to Definition~\ref{def-tab} are
free and the scope of these variables spans all literal labels of the whole
tableau.  In more technical terms, this means that the tableaux are
\name{free-variable tableaux} (see \cite[p.~158ff]{handbook:tableaux:letz})
with \name{rigid} variables (see \cite[p.~114]{handbook:ar:haehnle}).

The \name{closing} property of nodes, the \name{target} function, and the
\name{leaf-closed} property of tableaux are introduced to facilitate the
definition of interpolants by induction on the tableau structure, bottom-up,
with leaves considered in the base cases. It is easy to see that any closing
node is also a closed node and that any leaf-closed tableau is also a closed
tableau.  Any closed tableau for a clausal formula $F$ can be easily converted
to a \defname{leaf-closed} tableau for~$F$ by removing all edges that
originate in nodes that are closing.  Practical proving methods typically
construct a closed tableau that is already leaf-closed, since for the
construction of a closed tableau it is pointless to attach children to a
closing, and hence closed, node.

\section{Ground Interpolation with Clausal Tableaux}
\label{sec-ipol-basic}

Craig's interpolation theorem along with Lyndon's observation ensures the
existence of \name{Craig-Lyndon interpolants}, defined as follows.
\begin{defn}[Craig-Lyndon Interpolant]
\label{def-cli}
Let $F, G$ be formulas such that $F \entails G$.  A \defname{Craig-Lyndon
  interpolant} of $F$ and $G$ is a formula $H$ such that
\begin{enumerate}
\item \label{def-cli-sem} $F \entails H \entails G$.
\item \label{def-cli-pred} 
$\voc{H} \subseteq \voc{F} \cap \voc{G}$.
\item \label{def-cli-var}
$\free{H} \subseteq \free{F} \cap \free{G}$.
\end{enumerate}
\end{defn}
If condition (\ref{def-cli-pred}) is replaced by the weaker condition
$\vocplain{H} \subseteq \vocplain{F} \cap \vocplain{G}$, that is, the polarity
of predicate occurrences in~$H$ is not taken into account, then we call~$H$ a
\defname{Craig interpolant}.  Many automated reasoning techniques, including
the clausal tableau methods considered here, validate an entailment $F
\entails G$ by showing the unsatisfiability of $F \land \lnot G$. In the
literature on interpolation in verification this is reflected in the notion of
\name{reverse interpolant}.
\begin{defn}[Reverse Craig-Lyndon Interpolant]
  Let $F, \GR$ be formulas such that $F \land \GR \entails \false$.  A
  \defname{reverse Craig-Lyndon interpolant} of $F$ and $\GR$ is then defined
  as a Craig-Lyndon interpolant of $F$ and $\lnot \GR$.
\end{defn}
Our interpolant construction is based on a variant of clausal tableaux where
nodes have an additional \name{side} label that is shared by siblings and
indicates whether the tableau clause is an instance of an input clause derived
from the formula $F$ or the formula $G$ of the statement $F \land \GR \entails
\false$ underlying the reverse interpolant.

\begin{defn}[\Sided Clausal Tableau and Related Notions]

\sdlab{def-coltab} Let $\FL, \GR$ be clausal formulas.  A \defname{\sided
  clausal tableau for} $\FL$ \defname{and} $\GR$ (or briefly \defname{tableau}
for the \defname{two} formulas) is a clausal tableau for $\FL \land \GR$ whose
nodes~$N$ with exception of the root are labeled additionally with a
\name{side} $\nside{N} \in \{\aaa, \bbb\}$, such that the following conditions
are met by all nodes $N, N'$ of the tableau:
\begin{enumerate}
\item If $N$ and
$N^\prime$ are siblings, then $\nside{N} = \nside{N^\prime}$.
\item If $N$ has a child $N'$ with $\nside{N^\prime} = \aaa$, then
  $\nclause{N}$ is an instance of a clause in $\FL$.  If~$N$ has a child $N'$
  with $\nside{N^\prime} = \bbb$, then $\nclause{N}$ is an instance of a
  clause in $\GR$.
\end{enumerate}
We also refer to the side of the children of a node~$N$ as \defname{side of}
$\nclause{N}$.

\medskip

\sdlab{def-path} For $\mathit{side} \in \{\aaa,\bbb\}$ and all nodes $N$ of
a \sided clausal tableau define
\[\npath{\mathit{side}}{N}\; \eqdef \bigwedge_{N^\prime \in \mathit{Path} \text{ and }
\nside{N^\prime} = \mathit{side}}\hspace{-3em} \nlit{N^\prime},\] where
$\mathit{Path}$ is the union of the set of the ancestors of $N$ and $\{N\}$.
\end{defn}
For examples of \sided tableaux see Figure~\ref{fig-tab-bu}
and~\ref{fig-tab-td} in Example~\ref{ex-ipol-prop} below (ignore the node
annotations in brackets).  For examples of $\f{path}$ consider
Figure~\ref{fig-tab-td}.  Let $N$ be the rightmost node labeled with~$\fe$
there (above the rightmost leaf labeled with \tbbbtxt{$\lnot \fe$}). Then
$\npath{\aaa}{N} = \lnot \fa \land \fe$ and $\npath{\bbb}{N} = \lnot \fd \land
\lnot \fc$.

\label{page-occurrences}
The literature on interpolation targeted at applications in verification uses
a conceptualization with varying terminology (see \cite{bonacina:15:ground}
for an overview), which is related to the \name{side} labeling but would not
exactly match our needs. The main conceptual difference is that our labeling
applies to \emph{occurrences} of symbols, terms and formulas, identified by
tableau nodes, in contrast to symbols, terms and formulas themselves.
Analogously, first-order interpolation methods based on resolution with
paramodulation are specified with labelings that refer to \emph{occurrences}
in proof trees, \name{is from} \cite{huang:95} and \name{label}
\cite{bonacina:15:on}.%
\footnote{\label{foot-color}Symbols that appear only in $F$ or $\GR$ are
  called \name{$F$-colored} and \name{$\GR$-colored}, respectively, or
  \name{colored} in general (also \name{local} has been used synonymously to
  \name{colored}), while the other symbols are called \name{transparent} (or
  \name{grey}).  It appears that the association with occurrences is in
  particular necessary to take account of predicate polarity required by
  Craig-Lyndon (in contrast to just \name{Craig}) interpolants.  Another
  reason for the occurrence-based labeling is that the possibility that some
  clause $C$ is an instance of a clause in $F$ as well as an instance of a
  clause in $\GR$ should be retained.  An \emph{occurrence} of~$C$ then can be
  associated with either side label.  A perhaps confusing aspect of the
  color-based terminology is that the term \name{colored} on the one hand
  indicates for a \emph{symbol} that it appears only in one of $F$ or $\GR$, a
  property called \name{isolated} in \cite{craig:uses}, while, on the other
  hand, the \name{colored} property for compound structures, that is, terms
  and formulas, permits occurrences of transparent symbols.  We apply the
  color-based terminology in Sect.~\ref{sec-simul} and discuss it further in
  the context of first-order interpolation at the end of
  Sect.~\ref{sec-lift-thm}. Limitations of the color-based approaches are also
  discussed in \cite[Sect.~3]{bonacina:15:on}.}

\begin{defn}[Interpolant Extraction from a Clausal Tableau]
  \label{def-ipol}
Let $N$ be a node of a leaf-closed \sided clausal tableau.  The value of
\defname{$\nipol{N}$} is a quantifier-free formula, defined inductively as
follows:
\begin{enumerate}[label={\roman*}.,leftmargin=2em]
\item If $N$ is a leaf, then the value of $\nipol{N}$ is determined by the
  values of $\nside{N}$ and $\nside{\ntgt{N}}$ as specified in the
  following table:
\end{enumerate}
\[
\begin{array}{c@{\hspace{1em}}c@{\hspace{1em}}c}
\nside{N} &  \nside{\ntgt{N}} & \nipol{N}\\\midrule
\aaa & \aaa & \false\\[0.5ex]
\aaa & \bbb & \nlit{N}\\[0.5ex]
\bbb & \aaa & \du{\nlit{N}}\\[0.5ex]
\bbb & \bbb & \true
\end{array}
\]

\begin{enumerate}[label={\roman*}.,leftmargin=2em]
\setcounter{enumi}{1}
\item \label{item-children} If $N$ is an inner node with children $N_1,
  \ldots, N_n$ where $n \geq 1$, then the value of $\nipol{N}$ is composed
  from the values of $\f{ipol}$ for the children, disjunctively or
  conjunctively, depending on the side label of the children (which is the
  same for all of them), as specified in the following table:
\end{enumerate}
\[
\begin{array}{c@{\hspace{1em}}c}
\nside{N_1} & \nipol{N}\\\midrule
\aaa & \bigvee_{i=1}^n \nipol{N_i}\\[1ex]
\bbb & \bigwedge_{i=1}^n \nipol{N_i}
\end{array}
\]
\end{defn}

The following lemma gives semantic and syntactic properties of the formula
returned by applying $\f{ipol}$ to the root of a leaf-closed ground tableau
for ground formulas.\footnote{The $\f{ipol}$ function is also defined on
  non-ground tableaux, as it is independent of occurrences of
  variables. However, the association with properties relevant for
  interpolation is simplified if we assume a ground tableau, which is without
  loss of generality: A calculus may construct as proof a closed free-variable
  tableau with occurrences of free rigid variables. Any ground instantiation
  of these variables yields a ground tableau that provides a proof of the same
  formula. In particular, an instantiation of each variable by a dedicated
  constant. For the considered properties, tableaux with variables can be
  represented by ground tableaux with such constants.  The restriction that
  the tableau is for \emph{ground} formulas provides another simplification
  that, however, does not restrict the applicability of the lemma in a
  first-order context: A ground tableau that provides a proof of a clausal
  \emph{first-order} formula also provides a proof of a clausal \emph{ground}
  formula, the conjunction of the tableau clauses. In the setting of \sided
  tableaux this can be stated more precisely as: A leaf-closed \sided ground
  tableau for two clausal \emph{first-order} formulas is also a leaf-closed
  \sided ground tableau for two clausal \emph{ground} formulas, the
  conjunction of the tableau clauses with side~$\aaa$ and the conjunction of
  the tableau clauses with side~$\bbb$.}  These properties imply the
conditions required from a Craig-Lyndon interpolant as specified in
Definition~\ref{def-cli}.
\begin{lem}[Correctness of Interpolant Extraction from Clausal Ground Tableaux]
\label{lem-ground-ipol-correct}
Let $\FL, \GR$ be clausal ground formulas and let $N_0$ be the root of a
leaf-closed \sided clausal ground tableau for $\FL$ and $\GR$. Then
\begin{enumerate}
\item $\FL \entails \nipol{N_0} \entails \lnotGR$.
\item $\lit{\nipol{N_0}} \subseteq \lit{\FL} \cap \lit{\lnotGR}$.
\end{enumerate}
\end{lem}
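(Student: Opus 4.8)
The plan is to prove both statements simultaneously by induction on the tableau structure, bottom-up, strengthening the claim to a statement about an arbitrary node $N$ so that the induction goes through. For a node $N$ of the leaf-closed \sided ground tableau, I would show:
\begin{enumerate}
\item $\FL \land \npathL{N} \entails \nipol{N}$ and $\nipol{N} \land \npathR{N} \entails \false$ (equivalently $\FL \land \npathL{N} \entails \nipol{N} \entails \lnot\npathR{N}$);
\item $\lit{\nipol{N}} \subseteq \lit{\FL} \cap \lit{\lnotGR}$, or more precisely $\lit{\nipol{N}} \subseteq \lit{\FL \land \npathL{N}} \cap \lit{\lnot(\GR \land \npathR{N})}$.
\end{enumerate}
The lemma then follows by taking $N = N_0$ the root, since $\npathL{N_0}$ and $\npathR{N_0}$ are empty conjunctions (equal to $\true$), so part~1 collapses to $\FL \entails \nipol{N_0} \entails \lnotGR$, and part~2 gives exactly $\lit{\nipol{N_0}} \subseteq \lit{\FL} \cap \lit{\lnotGR}$.

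\textbf{Base case.} Here $N$ is a leaf, hence closing, with target $\ntgt{N}$ an ancestor carrying the complementary literal $\du{\nlit{N}}$. I would check the four rows of the first table. If $\nside{N} = \nside{\ntgt{N}} = \aaa$: then both $\nlit{N}$ and $\du{\nlit{N}}$ are conjuncts of $\npathL{N}$, so $\npathL{N}$ is unsatisfiable, making $\nipol{N} = \false$ correct for the semantic part, and $\lit{\false} = \emptyset$ settles the syntactic part. Symmetrically for $\bbb,\bbb$ with $\nipol{N} = \true$ and $\npathR{N}$ unsatisfiable. If $\nside{N} = \aaa$, $\nside{\ntgt{N}} = \bbb$: then $\nlit{N}$ is a conjunct of $\npathL{N}$ so $\FL \land \npathL{N} \entails \nlit{N} = \nipol{N}$; and $\du{\nlit{N}}$ is a conjunct of $\npathR{N}$, so $\nipol{N} \land \npathR{N} = \nlit{N} \land \du{\nlit{N}} \land \cdots \entails \false$. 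The atom of $\nipol{N} = \nlit{N}$ occurs (with its polarity) in $\npathL{N}$ hence in $\FL \land \npathL{N}$, and its complement occurs in $\npathR{N}$, so the same atom with the opposite polarity occurs in $\lnot(\GR \land \npathR{N})$ — giving the syntactic inclusion. The case $\bbb,\aaa$ with $\nipol{N} = \du{\nlit{N}}$ is symmetric.

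\textbf{Induction step.} Let $N$ be an inner node with children $N_1,\ldots,N_n$, all of common side, say $\nside{N_i} = \aaa$ (the $\bbb$ case is dual). By the tableau condition, $\nclause{N} = \nlit{N_1} \lor \cdots \lor \nlit{N_n}$ is an instance of a clause in $\FL$, so $\FL \entails \nlit{N_1} \lor \cdots \lor \nlit{N_n}$. For each child, $\npathL{N_i} = \npathL{N} \land \nlit{N_i}$ and $\npathR{N_i} = \npathR{N}$ (the new literal is on side $\aaa$). By the induction hypothesis, $\FL \land \npathL{N} \land \nlit{N_i} \entails \nipol{N_i}$ and $\nipol{N_i} \land \npathR{N} \entails \false$. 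For the first half of part~1: from $\FL \land \npathL{N}$ we get $\bigvee_i \nlit{N_i}$, and in the disjunct where $\nlit{N_i}$ holds we get $\nipol{N_i}$, hence $\FL \land \npathL{N} \entails \bigvee_i \nipol{N_i} = \nipol{N}$ — a standard reasoning-by-cases argument. For the second half: $\nipol{N} \land \npathR{N} = \bigl(\bigvee_i \nipol{N_i}\bigr) \land \npathR{N}$, and each disjunct $\nipol{N_i} \land \npathR{N} \entails \false$, so the whole thing entails $\false$. The syntactic part is routine: $\lit{\nipol{N}} = \bigcup_i \lit{\nipol{N_i}} \subseteq \bigcup_i \bigl(\lit{\FL \land \npathL{N_i}} \cap \lit{\lnot(\GR \land \npathR{N_i})}\bigr)$; since $\npathR{N_i} = \npathR{N}$ the right factor is constant, and every literal of $\npathL{N_i}$ beyond those in $\npathL{N}$ is $\nlit{N_i}$, an atom of a clause in $\FL$, hence $\lit{\FL \land \npathL{N_i}} = \lit{\FL \land \npathL{N}}$ already (its atoms occur in $\FL$), so the union stays inside $\lit{\FL \land \npathL{N}} \cap \lit{\lnot(\GR \land \npathR{N})}$.

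\textbf{The main obstacle} I anticipate is getting the bookkeeping of the path formulas and the side-dependent asymmetry exactly right — in particular convincing oneself that when a child is on side $\aaa$ the literal $\nlit{N_i}$ legitimately joins $\FL$'s "assumptions'' without polluting the $\GR$-side literal set, and the dual for side $\bbb$. Once the strengthened invariant is stated cleanly, each case is a short propositional argument; the delicate point is simply choosing the invariant so that the four leaf rows and the two inner-node rows all fit one uniform inductive statement, and verifying the syntactic inclusion tracks polarities correctly (an atom's complement in $\npathR{N}$ corresponds to that atom with flipped polarity under the outer negation in $\lnot(\GR \land \npathR{N})$, which is exactly what Lyndon's condition requires).
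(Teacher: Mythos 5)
Your overall strategy---strengthening the claim to a path-relativized invariant and proving it by bottom-up induction on the tableau---is exactly the paper's, and your base cases, your side-$\aaa$ induction step, and your syntactic bookkeeping all agree with the paper's proof. However, your semantic invariant is stated incorrectly, and the error is not cosmetic. You require $\nipol{N} \land \npathR{N} \entails \false$, equivalently $\nipol{N} \entails \lnot\npathR{N}$, with no mention of $\GR$ at all; the invariant the paper actually proves is $\nipol{N} \entails \lnotGR \lor \lnot\npathR{N}$, which is the proper dual of keeping $\FL$ as a standing conjunct on the left-hand side. (Tellingly, your \emph{syntactic} invariant does use $\lnot(\GR \land \npathR{N})$; had you used the same formula in the semantic clause, the argument would go through as in the paper.)

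Two things break without the $\lnotGR$ disjunct. First, at the root $\npathR{N_0}$ is the empty conjunction $\true$, so your invariant reads $\nipol{N_0} \entails \false$---false whenever the extracted interpolant is satisfiable---and the claimed ``collapse'' to $\nipol{N_0} \entails \lnotGR$ is a non sequitur, since $\GR$ never entered your part~1. Second, the side-$\bbb$ induction step, which you dismiss as ``dual,'' is precisely where the disjunct is indispensable: there $\npathR{N_i} = \npathR{N} \land \nlit{N_i}$, so your induction hypothesis yields only $\bigwedge_{i} \nipol{N_i} \entails \lnot\npathR{N} \lor \bigwedge_{i}\lnot\nlit{N_i} = \lnot\npathR{N} \lor \lnot\nclause{N}$, and $\lnot\nclause{N}$ does not entail $\lnot\npathR{N}$: the clause $\nclause{N}$ is a clause \emph{of} $\GR$, not a conjunct of the path, so its negation can only be absorbed by a $\lnotGR$ disjunct (dually to how, in your $\aaa$ case, $\nclause{N}$ is discharged against the conjunct $\FL$ rather than against $\npathL{N}$). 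Restating part~1 of your invariant as $\FL \land \npathL{N} \entails \nipol{N} \entails \lnotGR \lor \lnot\npathR{N}$ repairs both defects and turns your write-up into the paper's proof.
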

\begin{proof}
We show that the following invariant of $\f{ipol}$ holds for all nodes~$N$ of
the tableau, including the root, which immediately implies the proposition:
\begin{enumerate}[label={(\alph*)},leftmargin=2em]
\item \label{item-lem-gi-sem}
 $\FL \land \npathL{N} \entails \nipol{N}
 \entails \lnotGR  \lor \lnot \npathR{N}$.
\item \label{item-lem-gi-syn}
$\lit{\nipol{N}} \subseteq
\lit{\FL \land \npathL{N}} \cap
\lit{\lnotGR  \lor \lnot \npathR{N}}$.
\end{enumerate}

\noindent
This is proven by induction on the tableau structure, proceeding from leaves
upwards.  We prove the base case, where $N$ is a leaf, by showing
\ref{item-lem-gi-sem} and \ref{item-lem-gi-syn} for all possible values of
$\nside{N}$ and $\nside{\ntgt{N}}$:

  \begin{itemize}
  \item Case $\nside{N} = \aaa$:
    \smallskip
    \begin{itemize}
    \item Case $\nside{\ntgt{N}} = \aaa$: Immediate since then
      $\npathL{N} \entails \false = \nipol{N}$.
    \item Case $\nside{\ntgt{N}} = \bbb$: Then $\nipol{N} = \nlit{N}$.
      Properties \ref{item-lem-gi-sem} and \ref{item-lem-gi-syn} follow
      because $\nlit{N}$ is a conjunct in $\npathL{N}$ and $\du{\nlit{N}}$
      is a conjunct in $\npathR{N}$.
    \end{itemize}
    \smallskip
  \item Case $\nside{N} = \bbb$:
    \smallskip
    \begin{itemize}
      \item Case $\nside{\ntgt{N}} = \aaa$: Then $\nipol{N} =
        \du{\nlit{N}}$.  Properties \ref{item-lem-gi-sem} and
        \ref{item-lem-gi-syn} follow because $\du{\nlit{N}}$ is a conjunct in
        $\npathL{N}$ and $\nlit{N}$ is a conjunct in $\npathR{N}$.
      \item Case $\nside{\ntgt{N}} = \bbb$: 
        Immediate since then
        $\nipol{N} = \true \entails \lnot \npathR{N}$.
    \end{itemize}
  \end{itemize}

\noindent
To show the induction step, assume that $N$ is an inner node with children
$N_1, \ldots, N_n$. Consider the case where the side of the children is
$\aaa$.  The induction step for the case where the side of the children is
$\bbb$ can be shown analogously.  By the induction hypothesis we can assume
that for all $i \in \{1,\ldots, n\}$ it holds that
\[\FL \land \npathL{N_i} \entails \nipol{N_i}
 \entails \lnotGR \lor \lnot \npathR{N_i},\]
which, since $\nside{N_i} = \aaa$, is equivalent to
\[\FL \land \npathL{N} \land \nlit{N_i} \entails \nipol{N_i}
 \entails \lnotGR  \lor \lnot \npathR{N}.\]
This implies
\[\FL \land \npathL{N} \land \bigvee_{i=1}^{n} \nlit{N_i} \entails
  \bigvee_{i=1}^{n} \nipol{N_i}
 \entails \lnotGR  \lor \lnot \npathR{N}.\]
 Since $\nipol{N} = \bigvee_{i=1}^{n} \nipol{N_i}$,
 according to the definition of $\f{ipol}$ for
 nodes $N$ whose children have side $\aaa$,
it follows that
\[\FL \land \npathL{N} \land \bigvee_{i = 1}^{n}{\nlit{N_i}}
\entails \nipol{N}
 \entails \lnotGR \lor \lnot \npathR{N}.\] 
 Because $\bigvee_{i = 1}^{n}{\nlit{N_i}} = \nclause{N}$ is by construction of
 the tableau a clause in $\FL$ and thus entailed by $\FL$ the semantic
 requirement \ref{item-lem-gi-sem} of the induction conclusion follows:
\[\FL \land \npathL{N} \entails \nipol{N}
 \entails \lnotGR \lor \lnot \npathR{N}.\]
The syntactic requirement \ref{item-lem-gi-syn} follows from the induction
hypothesis and because in general for all nodes~$N$ of a \sided clausal ground
tableau for clausal ground formulas $\FL$ and $\GR$ it holds that all literals
in $\npathL{N}$ occur in some clause of $\FL$ and all literals in $\npathR{N}$
occur in some clause of $\GR$.  \qed
\end{proof}

Lemma~\ref{lem-ground-ipol-correct} immediately yields a construction method
for Craig-Lyndon interpolants of propositional and, more general, first-order
formulas that are ground (and without equality, except if represented as
predicate, see Sect.~\ref{sec-cli-related-equality}).
We call the procedure \name{\CTIG}, suggesting \name{Clausal Tableau
  Interpolation}.  In Sect.~\ref{sec-lift-procedure} below it will be
generalized to first-order sentences in full.

\begin{proc}[The \CTIG Method for Craig-Lyndon Interpolation on
    Ground Formulas]\hspace{-10pt}
  \label{proc-ctig}

  \algoinput Ground formulas $F$ and $G$ such that $F \entails G$.

  \algomethod
  \begin{enumerate}
  \item \name{Clausification.}
    Convert $F$ and $\lnot G$ to equivalent
    clausal ground formulas $F'$ and $G'$, respectively, such that
    $\voc{F'} \subseteq \voc{F}$ and $\voc{G'} \subseteq \voc{\lnot
      G}$.

    \item \name{Tableau computation.}  Compute a closed clausal ground tableau
      for $F' \land G^{\prime}$.  If the tableau is not already leaf-closed,
      convert it to leaf-closed form by removing all edges that originate in
      closing nodes.
      
    \item \label{step-sa} \name{Side assignment.}  Convert the ground tableau
      to a \sided tableau for $F'$ and $G'$ by attaching appropriate
      \name{side} labels to all nodes except the root. This is always possible
      because every clause of the tableau is in $F'$ or in $G'$. (As $F$ and
      $G$ can be arbitrary ground formulas it is possible that a clause of the
      tableau appears in both $F'$ \emph{and} in ~$G'$. For siblings
      corresponding to such a clause, the side labels can be either set all to
      $\aaa$ or all to $\bbb$. As shown in Example~\ref{examp-sides} below,
      this choice can have an effect on the computed interpolant. The issue is
      discussed further in Sect.~\ref{sec-grounding}.)

    \item \name{Interpolant extraction.}  Let $H$ be the value of
      $\nipol{N_0}$, where $N_0$ is the root of the tableau.

  \end{enumerate}
  
  \algooutput Return~$H$.  The output is a ground formula that is a
  Craig-Lyndon interpolant of the input formulas.
\end{proc}

\noindent
That the procedure is correct, which means that it outputs indeed a
Craig-Lyndon interpolant of the input formulas, follows from
Lemma~\ref{lem-ground-ipol-correct} and Definition~\ref{def-cli}.  The
procedure computes an interpolant on the basis of \emph{any} leaf-closed
\sided clausal ground tableaux for the clausified inputs $F' \land G'$, which
follows from the definition of $\f{ipol}$ and
Lemma~\ref{lem-ground-ipol-correct}.  (For interpolation systems that operate
on resolution proofs, the analog to this property is termed \name{complete
  relative to an inference system} in \cite[Def.~10]{bonacina:15:ground}.)
Hence, if the invoked method for tableau computation is complete, that is, it
outputs a closed clausal tableau for all unsatisfiable inputs, then the
overall interpolation procedure is also complete, that is, it outputs a
Craig-Lyndon interpolant of its inputs $F$ and $G$ whenever $F \entails G$.

The size of the resulting formula is linear in the size of the tableau, or
more precisely in the number of its leaves whose target has the opposite side
label (assuming that truth value simplification\footnote{\label{foot-tv}
  Exhaustively rewriting with $F \land \false \equiv \false$, $F \lor \true
  \equiv \true$, $F \land \true \equiv F$, $F \lor \false \equiv F$, modulo
  commutativity.} is integrated into the computation of $\f{ipol}$). The size
of the tableau itself is not polynomially bounded in the size of the clausal
formulas underlying the tableau construction.\footnote{An argument for this is
  that even clausal tableaux with atomic cut, which can polynomially simulate
  tree resolution, can not polynomially simulate unrestricted resolution where
  proofs may have the form of dags \cite{letz:habil}.}

The potential blow up in the transformation to clausal form can be avoided by
replacing this step with the transformation to a structure preserving (also
known as \name{definitional}) normal form, which needs to be applied for
interpolation in a way such that the sets of auxiliary symbols introduced for
transforming~$F$ and~$\lnot G$ are disjoint. This is discussed further in
Sect.~\ref{sec-preproc}.

\begin{examp}[Propositional Interpolation with Clausal Tableaux]
  \label{ex-ipol-prop}
  To facilitate comparison with other methods for computing reverse
  interpolants for propositional clausal formulas we consider interpolation
  inputs from \cite[Example~2]{bonacina:15:ground}: Let $\FL = \FL' = (\fa
  \lor \fe) \land (\lnot \fa \lor \fb) \land (\lnot \fa \lor \fc)$ and let
  $\lnot G = \GR' = (\lnot \fb \lor \lnot \fc \lor \fd) \land \lnot \fd \land
  \lnot \fe$.  As there are usually quite different leaf-closed clausal
  tableaux for a given clausal formulas, we illustrate the assignment of
  values of $\f{ipol}$ for two different such tableaux, both for $\FL'$ and
  $\GR'$. Figure~\ref{fig-tab-bu} represents a typical result of a bottom-up
  calculus and Figure~\ref{fig-tab-td} of a top-down calculus, as further
  discussed below. In these figures, nodes with exception of the root are
  represented by their literal label.  Nodes with \tbbbtxt{side $\bbb$} are
  indicated by gray background.  The remaining nodes have side $\aaa$.  For
  each node the value of $\f{ipol}$ is annotated in brackets, where values are
  shown exactly as specified in Definition~\ref{def-ipol} except that
  truth-value simplification (see footnote~\ref{foot-tv}) is applied.

  \smallskip
  
  \noindent
  \begin{minipage}[t]{0.345\textwidth}
      \centering
      \hspace{-12pt}%
  \begin{tikzpicture}[scale=0.85,
      baseline=(a.north),
      sibling distance=9em,level distance=\extabld,
      every node/.style = {transform shape,anchor=mid}]]
      \node (a) {\vbar\textbullet\nannotw{$(\fb\land\fc) \lor \fe$}}
      child { node {$\taaa{\fa}$\nannot{$\fb\land\fc$}}
        [sibling distance=5em]
        child { node {$\taaa{\lnot \fa}$\nannot{$\false$}} }
        child { node {$\taaa{\fb}$\nannot{$\fb\land\fc$}}
          child { node {$\taaa{\lnot \fa}$\nannot{$\false$}} }
          child { node {$\taaa{\fc}$\nannot{$\fb\land\fc$}}
            [sibling distance=4em]
            child { node {$\tbbb{\lnot \fb}$\nannot{$\fb$}} }
            child { node {$\tbbb{\lnot \fc}$\nannot{$\fc$}} }
            child { node {$\tbbb{\fd}$\nannot{$\true$}}
              child { node {$\tbbb{\lnot \fd}$\nannot{$\true$}} }
            }
          }
        }
      }
      child { node {$\taaa{\fe}$\nannot{$\fe$}}
        child { node {$\tbbb{\lnot \fe}$\nannot{$\fe$}}}
      };
  \end{tikzpicture}
  
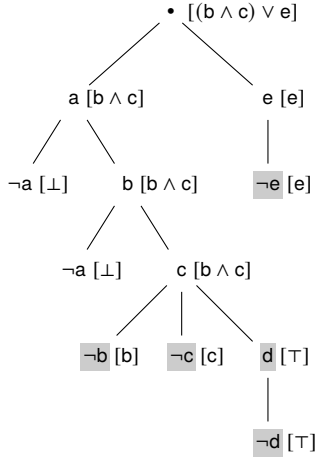
\captionof{figure}{Interpolation with
   a bottom-up constructed tableau.}
  \label{fig-tab-bu}
  \end{minipage}
   \hspace{18pt}%
   \begin{minipage}[t]{0.605\textwidth}
     \centering
     \hspace{-12pt}%
  \begin{tikzpicture}[scale=\extabscale,
      baseline=(a.north),
      sibling distance=9em,level distance=\extabld,
      every node/.style = {transform shape,anchor=mid}]]
      \node (a) {\vbar\textbullet%
        \nannotw{$(\fe \lor \fb) \land (\fe \lor \fc)$}}
      child { node {$\tbbb{\lnot \fd}$%
          \nannot{$(\fe \lor \fb) \land (\fe \lor \fc)$}}
        child { node {$\tbbb{\lnot \fb}$\nannot{$\fe \lor \fb$}}
          [sibling distance=5em]
          child { node {$\taaa{\lnot \fa}$\nannot{$\fe$}}
            child { node {$\taaa{\fa}$\nannot{$\false$}}}
            child { node {$\taaa{\fe}$\nannot{$\fe$}}
              child { node {$\tbbb{\lnot \fe}$\nannot{$\fe$}}
              }}
          }
          child { node {$\taaa{\fb}$\nannot{$\fb$}} }
        }
        child { node {$\tbbb{\lnot \fc}$\nannot{$\fe \lor \fc$}}
          [sibling distance=5em]
          child { node {$\taaa{\lnot \fa}$\nannot{$\fe$}}
            child { node {$\taaa{\fa}$\nannot{$\false$}}}
            child { node {$\taaa{\fe}$\nannot{$\fe$}}
              child { node {$\tbbb{\lnot \fe}$\nannot{$\fe$}}
              }
            }
          }
          child { node {$\taaa{\fc}$\nannot{$\fc$}} }
        }
        child { node {$\tbbb{\fd}$\nannot{$\true$}} }
      };
  \end{tikzpicture}
  
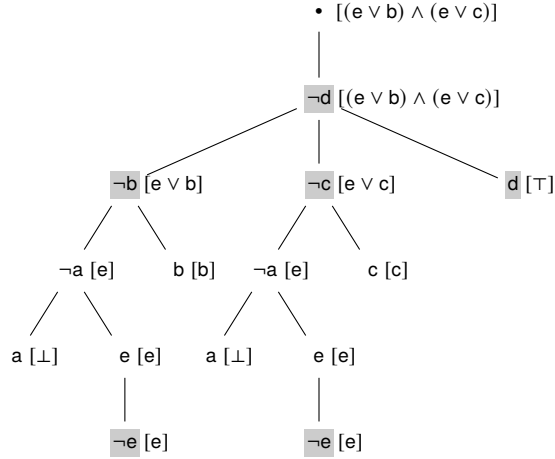
\captionof{figure}{Interpolation with a top-down constructed tableau.}
  \label{fig-tab-td}
  \end{minipage}
   \vspace{0.2cm}
\end{examp}

The way in which a tableau was constructed step-by-step by a calculus is
irrelevant for interpolant extraction. Nevertheless, the tableaux in
Figure~\ref{fig-tab-bu} and~\ref{fig-tab-td} suggest two different specific
construction possibilities which we will sketch now. Both might have been
constructed with calculi that start with the root and repeatedly extend an
open (i.e., not closed) branch by attaching a clause. In the tableau of
Figure~\ref{fig-tab-bu} a clause is only attached if the complements of all
its negative literals appear on the branch, with the effect that the tableau
never has an open branch that ends with a negative literal.  The clause $\fa
\lor \fe$, for example, has no negative literal and can thus be attached
directly below the root.  The clause $\lnot \fb \lor \lnot \fc \lor \fd$, for
example, is attached to a branch in which $\fb$ and $\fc$ appear. Branches
ending in $\lnot \fb$ and $\lnot \fc$ are created but immediately closed.
Although presented as a tableau construction that starts with the root node,
calculi that proceed in the indicated way may be understood as bottom-up
methods, as they start from disjunctions of facts (positive clauses) and the
extension of a branch by a clause is like the forward application of a rule:
The antecedent is on the branch in the form of the complements of the negative
clause literals and the consequent corresponds to its positive literals, each
forming the end of a new open branch.  Extension with a goal (negative clause)
closes an open branch without introducing a new one.

The tableau of Figure~\ref{fig-tab-td} might have been constructed with a
top-down method. Some clause is chosen as start clause and attached below the
root, $\lnot \fd$ in the example. Open branches are only extended with clauses
such that the \emph{connection condition} is preserved, that is, the
complement of the last literal on the branch appears in the clause.  The
complement of the last literal on an open branch is viewed as an open subgoal.
Extending the branch by attaching a clause replaces it with other subgoals or,
if all literals of the clause have complements on the branch, solves it.
Top-down tableau construction typically may lead to the necessity to solve the
same subgoal more than once, as can be seen in the example with the replicated
subtree at the two nodes labeled with $\lnot \fa$.

Returning to the question of interpolant extraction from a given clausal
tableaux, the following example shows a case where the obtained interpolant
depends on a choice of the \name{side assignment} step of
Procedure~\ref{proc-ctig}.
\begin{examp}[Alternate Possible Side Assignments]
  \label{examp-sides}
   Let $F = \fa \land \fb \land (\fb \imp \fc)$ and $G = \fc \lor \lnot (\fb
   \imp \fc) \lor \fd$. Figures~\ref{fig-side-one} and~\ref{fig-side-two} each
   show a leaf-closed \sided clausal tableau for the corresponding clausal
   formulas $F' = \fa \land \fb \land (\lnot \fb \lor \fc)$ and $G' = \lnot
   \fc \land (\lnot \fb \lor \fc) \land \lnot \fd$ according to
   Procedure~\ref{proc-ctig}.  The tableaux are identical, except that for the
   clause $\lnot \fb \lor \fc$, which is in $F'$ and also in $G'$, the side
   assigned to its occurrence in the tableau is different. The symbolism for
   indicating sides and values of $\f{ipol}$ is as in
   Example~\ref{ex-ipol-prop}. The overall interpolants of~$F$ and~$G$
   returned by the procedure are $\fc$ and~$\fb$, respectively.

  \noindent
  \hfill
  \begin{minipage}[t]{0.4\textwidth}
      \centering
      \hspace{-4pt}%
  \begin{tikzpicture}[scale=0.85,
      baseline=(a.north),
      sibling distance=9em,level distance=\extabld,
      every node/.style = {transform shape,anchor=mid}]]
      \node (a) {\vbar\textbullet\nannotw{$\fc$}}
      child { node {$\tbbb{\lnot \fc}$\nannot{$\fc$}}
        [sibling distance=5em]
        child { node {$\taaa{\lnot\fb}$\nannot{$\false$}}
          child { node {$\taaa{\fb}$\nannot{$\false$}}}}
        child { node {$\taaa{\fc}$\nannot{$\fc$}} }
      };
  \end{tikzpicture}
  \captionof{figure}{Interpolation with $\aaa$ as side of $\lnot \fb \lor
    \fc$.}
  \label{fig-side-one}
  \end{minipage}
  \hfill%
  \begin{minipage}[t]{0.4\textwidth}
      \centering
      \hspace{-4pt}%
      \begin{tikzpicture}[scale=0.85,
      baseline=(a.north),
      sibling distance=9em,level distance=\extabld,
      every node/.style = {transform shape,anchor=mid}]]
      \node (a) {\vbar\textbullet\nannotw{$\fb$}}
      child { node {$\tbbb{\lnot \fc}$\nannot{$\fb$}}
        [sibling distance=5em]
        child { node {$\tbbb{\lnot\fb}$\nannot{$\fb$}}
          child { node {$\taaa{\fb}$\nannot{$\fb$}}}}
        child { node {$\tbbb{\fc}$\nannot{$\true$}} }
      };
      \end{tikzpicture}
      \captionof{figure}{Interpolation with $\bbb$ as side of $\lnot \fb \lor
    \fc$.}
      \label{fig-side-two}
      \end{minipage}
  \hspace*{\fill}
\end{examp}

\section{Related Methods for Propositional Interpolation}
\label{sec-simul}

The extraction of interpolants from clausal tableaux with the $\f{ipol}$
function emerged from a straightforward adaptation of the propositional part of
Smullyan's interpolation method for analytic tableaux described in
\cite[Chap.~XV]{smullyan:book:68} and \cite[Chap.~8.12]{fitting:book}.
Differently from these works, the adaptation specifies the interpolant
extraction not in terms of tableau manipulation rules that deconstruct the
tableau bottom-up, but inductively, as a function that maps each node to a
formula.
It turned out that the clausal adaptation of Smullyan's method has some
striking similarities to a family of interpolation methods for resolution
proofs including, for example, the algorithms from \cite{huang:95} and
\cite{mcmillan:2003}, surveyed in a common framework in
\cite{bonacina:15:ground}.  Like the tableau-based method, these interpolation
methods for resolution proofs are specified by an induction on a proof
structure, which led to the approach being called \name{inductive}
\cite{bonacina:15:ground}.

The role of the clausal tableau for an unsatisfiable clausal formula~$F$ is
taken by a deduction tree \cite{chang:lee} for $F$, that is, a finite ordered
tree that represents a resolution proof. Its nodes are labeled with clauses
such that each leaf is labeled by a clause in~$F$, each inner node has exactly
two children and is labeled by a binary resolvent of the clauses of its
children, and the root is labeled with the empty clause $\Box$.  We assume
that the involved clauses do not contain duplicate literals by expecting this
from the clauses in~$F$ and by assuming that the used variant of binary
resolution incorporates merging of duplicate literals.

The invariants that justify the involved inductions for tableaux and for
deduction trees are actually quite similar.  For clausal tableaux, they have
been stated above as~(a) and~(b) in the proof of
Lemma~\ref{lem-ground-ipol-correct}. For deduction trees they can be expressed
as follows: Let~$\f{pi}$ be a function that, analogously to $\f{ipol}$, maps
each node in the deduction tree to a formula, the \name{partial interpolant}
\cite{bonacina:15:ground} associated with the node, let $\f{clause}(M)$ denote
the clause label of deduction tree node~$M$, let $C|_{\aaa}$ denote the clause
$C$ with all literals deleted whose predicate does not occur in $\FL$, and,
analogously, let $C|_{\bbb}$ denote $C$ with all literals deleted whose
predicate does not occur in $\GR$.  Based on \cite[Proof of
  Theorem~2]{huang:95} and \cite[Proposition~9]{bonacina:15:ground}, the
invariants expected to hold for all nodes $M$ of the deduction tree can then
be stated as
\begin{enumerate}[label={(\alph*$^{\prime}$)},leftmargin=2em]
\item $\FL \land \lnot \f{clause}(M)|_{\aaa} \entails \cpi{M} \entails \lnotGR
  \lor \f{clause}(M)|_{\bbb}$.
\item $\predplain{\cpi{N}} \subseteq \predplain{\FL} \cap \predplain{\GR}$.
\end{enumerate}
If $M_0$ is the root of the deduction tree, labeled with the empty clause
$\Box$, then $\f{clause}(M_0)|_{\aaa} = \f{clause}(M_0)|_{\bbb} = \false$ and
the invariants~(a$^\prime$) and~(b$^\prime$) imply that $\cpi{M_0}$ is an
interpolant of~$\FL$ and~$\lnotGR$.  There is a parallelism between the roles
of the clause projections $\f{clause}(M)|_{\aaa}$ and $\f{clause}(M)|_{\bbb}$
in invariant (a$^\prime$) for deduction trees and the path projections
$\npath{\aaa}{N}$ and $\npath{\bbb}{N}$ in invariant~(a) of
Lemma~\ref{lem-ground-ipol-correct} for clausal tableaux.

A profound account of these similarities is an issue for future research. As a
step in this direction, we show that various inductive propositional
interpolation methods for resolution proofs can be linearly simulated with our
clausal tableau method. We show this in detail for Huang's algorithm
\cite{huang:95}, an early method, whose fundamental role has been observed
only later with \cite{bonacina:11,bonacina:15:ground,bonacina:15:on}.  The
simulations suggest a new way to systematize interpolation methods for
resolution proofs and, as they involve just linear-time conversions from
tree-shaped resolution proofs, may be useful for practical implementation.

The basis for these simulations are clausal tableaux \emph{with atomic cuts}.
The \name{atomic cut} rule \cite{letz:cut:1994} permits to extend a clausal
tableau under construction by attaching to a node two successors, labeled with
$A$ and $\lnot A$, respectively, where $A$ is an arbitrary atom.
Equivalently, this can be considered as attaching siblings whose clause is the
tautology $\lnot A \lor A$, also if that tautology is not an input clause.
The atomic cut rule is of interest for theorem proving as it permits
substantially shorter proofs.  Simulation by clausal tableaux with atomic cut
is a central tool in the analysis of advanced construction techniques for
clausal tableaux as well as investigations of relationships of clausal tableau
methods to other calculi \cite{letz:habil,letz:cut:1994}. Of particular
relevance for our simulations of interpolation methods is that, for
propositional logic, clausal tableaux with atomic cut can linearly simulate
\name{tree resolution}, that is, resolution where the proof graph forms a tree
in contrast to an unrestricted dag, which has been shown via the semantic tree
method as an intermediate method in \cite[Prop~7.33
  and~7.40]{letz:habil}.\footnote{That semantic trees and tree resolution can
  simulate each other polynomially has been shown already in
  \cite{reckhow:thesis}, where also relationships to many other propositional
  systems are investigated.} The simulation is achieved
 with an encoding of a deduction tree as a closed clausal tableau with
atomic cuts, that is, a tableau that permits interspersed ``cuts'',
tautological clauses of the form $\lnot A \lor A$ that are not necessarily in
the input formula. Moreover, the clausal tableaux that encode resolution
deduction trees are in \defname{cut normal form} \cite{letz:habil}, that is,
all their clauses except those of nodes whose children are all leaves are
``cuts'' of the form $\lnot A \lor A$.

This encoding can be modified, preserving linearity and the cut normal form,
to produce a leaf-closed clausal tableau such that each node~$M$ of the
deduction tree can be mapped to a node~$N$ of the tableau, where the partial
interpolant $\cpi{M}$ associated with $M$ according to Huang's method is
syntactically identical to the value of $\nipol{N}$ as specified in
Definition~\ref{def-ipol}. The root of the deduction tree, labeled with
$\Box$, whose partial interpolant is the overall interpolation result, is
mapped there to the root of the clausal tableau.

We now show this simulation for Huang's method (in the special case where it
is applied to propositional formulas) in detail.  To specify Huang's method,
we use the ``color-based'' terminology, which is adequate here since we only
consider propositional interpolation, it facilitates the incorporation of
merging duplicate literals into resolution, and the objective is to compute
only \emph{Craig} (in contrast to Craig-\emph{Lyndon}) interpolants.
\begin{defn}[Huang's Partial Interpolant for Propositional Inputs]
  \label{def-huang}
  Let $\FL,\GR$ be propositional clausal formulas with no common clause and
  such that $\FL \land \GR$ is unsatisfiable.  Call an atom
  \defname{$\aaa$-colored} if it occurs in $\FL$ but not in $\GR$,
  \defname{$\bbb$-colored} if it occurs in $\GR$ but not in $\FL$, and
  \name{transparent} if it occurs in both $\FL$ and $\GR$.  For all nodes $M$
  of a deduction tree for $\FL \land \GR$ define the \defname{partial
    interpolant associated with $M$ according to Huang's method}~$\huang{M}$
  as follows:

  \begin{enumerate}[label={(\roman*)},leftmargin=2.5em]
  \item If $M$ is a leaf and its clause is in $\FL$, then $\huang{M} \eqdef
    \false$.
  \item If $M$ is a leaf and its clause is in $\GR$, then $\huang{M} \eqdef
    \true$.
  \item \label{huang-case-inner} If $M$ is an inner node with children $M_1,
    M_2$ and its clause is a resolvent of the clauses of $M_1$ and $M_2$ upon
     $A$ in the clause of $M_1$ and $\lnot A$ in the
    clause of $M_2$, then
    \begin{enumerate}[label={(\alph*)},leftmargin=2em]
    \item If $A$ is $\aaa$-colored, then $\huang{M} \eqdef \huang{M_1} \lor
      \huang{M_2}$.
    \item If $A$ is $\bbb$-colored, then $\huang{M} \eqdef \huang{M_1} \land
      \huang{M_2}.$
    \item \label{huang-case-transp} If $A$ is transparent, then $\huang{M}
      \eqdef (\huang{M_1} \land \lnot A) \lor (A \land \huang{M_2}).$
    \end{enumerate}
  \end{enumerate}
\end{defn}
If $M_0$ is the root of a deduction tree for $\FL \land \GR$ as specified in the
preconditions of Definition~\ref{def-huang}, then $\huang{M_0}$ is a Craig
interpolant of $\FL$ and $\lnot \GR$ \cite{huang:95,bonacina:15:ground}, which
can be shown via the invariants~(a$^\prime$) and~(b$^\prime$). For the case
\ref{huang-case-inner}.\ref{huang-case-transp} in Definition~\ref{def-huang} we
have, compared to \cite{huang:95}, flipped the sides of the first disjunct to
achieve an exact syntactic correspondence with our clausal tableaux encoding.
The original formula reads: $(\lnot A \land \huang{M_1}) \lor (A \land
\huang{M_2})$.

\begin{thm}[Simulation of Huang's Method for Propositional Inputs
    with Clausal Tableaux]
  \label{thm-simul-huang}
  Let $\FL,\GR$ be propositional clausal formulas with no common clause and
  such that $\FL \land \GR$ is unsatisfiable. Let $\CUTS_{\aaa} \eqdef
  \textstyle\bigwedge_{A \in \predplain{\FL}} (\lnot A \lor A)$ and
  $\CUTS_{\bbb} \eqdef \textstyle\bigwedge_{A \in \predplain{\GR}} (\lnot A
  \lor A)$. Then there is a linear-time algorithm that computes from any
  deduction tree for $\FL \land \GR$ a \sided leaf-closed clausal tableau for
  $\FL \land \CUTS_{\aaa}\;$ and $\;\GR \land \CUTS_{\bbb}$ with the property
  that if $M_0$ is the root of the deduction tree and $N_0$ is the root of the
  tableau, then
  \[\huang{M_0} = \nipol{N_0}.\]
\end{thm}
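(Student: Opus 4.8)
The plan is to lift the linear encoding of tree resolution by clausal tableaux with atomic cuts in cut normal form \cite[Prop.~7.33 and~7.40]{letz:habil} to the \sided setting, arranging the \name{side} labels and a few auxiliary nodes so that every node~$M$ of the deduction tree is associated with a tableau node~$N_M$ with $\nipol{N_M} = \huang{M}$ (up to truth-value simplification), the root $M_0$ with the root $N_0$. I would build the tableau in one downward pass over the deduction tree, maintaining the invariant that the path from the root to $N_M$ contains, with side~$\aaa$, the complement of every literal of $\f{clause}(M)$ whose predicate occurs in $\FL$ and, with side~$\bbb$, the complement of every literal of $\f{clause}(M)$ whose predicate occurs in $\GR$; a transparent literal of $\f{clause}(M)$ thus contributes a side-$\aaa$ and a side-$\bbb$ occurrence of its complement. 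It is convenient to assume the deduction tree \name{regular} (no atom resolved upon twice along a branch), which is without loss of generality for minimal tree-resolution refutations and additionally guarantees that no two literals on a tableau path are complementary, so that the only closing nodes are the ones I designate.

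For a \emph{leaf}~$M$ with $\f{clause}(M) = C$: by the hypothesis that $\FL$ and $\GR$ share no clause, $C$ lies in exactly one of them; if $C \in \FL$ I attach $C$ itself as the tableau clause below $N_M$ with side~$\aaa$ (symmetrically if $C \in \GR$). Each resulting child, labelled with a literal $L \in C$, is closing by the path invariant, and since the predicate of $L$ occurs in $\FL$ some ancestor labelled $\du L$ has side~$\aaa$ and can serve as its target; hence each child contributes $\false$ and $\nipol{N_M} = \bigvee_{L \in C}\false = \false = \huang{M}$ (dually $\true$ when $C \in \GR$). For an \emph{inner} node~$M$ resolving on~$A$, with children $M_1$ (clause $\ni A$) and $M_2$ (clause $\ni \lnot A$): if $A$ is $\aaa$-coloured I attach below $N_M$ the cut $\lnot A \lor A \in \CUTS_{\aaa}$ with children of side~$\aaa$, take $N_{M_1}$ to be the $\lnot A$-child and $N_{M_2}$ the $A$-child (the path invariant is kept because $A$ does not occur in $\GR$), and obtain $\nipol{N_M} = \nipol{N_{M_1}} \lor \nipol{N_{M_2}} = \huang{M_1} \lor \huang{M_2} = \huang{M}$. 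The $\bbb$-coloured case is the mirror image, with the cut from $\CUTS_{\bbb}$, children of side~$\bbb$, and $\nipol{N_M} = \huang{M_1} \land \huang{M_2}$.

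The delicate case, and the step I expect to be the main obstacle, is a \emph{transparent} resolved atom~$A$, where Huang's rule produces $(\huang{M_1} \land \lnot A) \lor (A \land \huang{M_2})$ in the flipped form fixed before the theorem; a plain two-literal cut only yields a disjunction or a conjunction of the children's values, so a small gadget reintroducing the literals $\lnot A$ and $A$ is needed. I would attach below $N_M$ the cut $\lnot A \lor A \in \CUTS_{\aaa}$ with children of side~$\aaa$, call them $N_1$ (labelled $\lnot A$) and $N_2$ (labelled $A$); below $N_1$ attach a cut $\lnot A \lor A \in \CUTS_{\bbb}$ with children of side~$\bbb$ whose $\lnot A$-child is $N_{M_1}$ and whose $A$-child is a closing leaf with target $N_1$; symmetrically below $N_2$ attach a cut $\lnot A \lor A \in \CUTS_{\bbb}$ with children of side~$\bbb$ whose $A$-child is $N_{M_2}$ and whose $\lnot A$-child is a closing leaf with target $N_2$. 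The auxiliary leaf under $N_1$ has side~$\bbb$ and a target of side~$\aaa$, so its $\f{ipol}$ value is $\du A = \lnot A$; hence $\nipol{N_1} = \nipol{N_{M_1}} \land \lnot A = \huang{M_1} \land \lnot A$ by induction, symmetrically $\nipol{N_2} = A \land \huang{M_2}$, and since the children of $N_M$ have side~$\aaa$, $\nipol{N_M} = \nipol{N_1} \lor \nipol{N_2} = (\huang{M_1} \land \lnot A) \lor (A \land \huang{M_2}) = \huang{M}$. I would then check that this gadget keeps the tableau in cut normal form (every new clause is a cut $\lnot A \lor A$, and the only non-cut clauses remain the input clauses at deduction-tree leaves, whose children are all leaves) and preserves the path invariant, since it places a side-$\aaa$ and a side-$\bbb$ occurrence of $\du A$ on the way down to $N_{M_1}$ and to $N_{M_2}$, exactly as a transparent literal requires.

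To finish, I would note that the construction introduces $O(1)$ tableau nodes per inner deduction-tree node and a number of nodes linear in $|C|$ per leaf with clause~$C$, so the tableau has linear size and the downward pass runs in linear time; that all clauses used come from $\FL$ (side~$\aaa$), $\GR$ (side~$\bbb$), $\CUTS_{\aaa}$ (side~$\aaa$) or $\CUTS_{\bbb}$ (side~$\bbb$) and that siblings always receive a common side, so the result is a \sided clausal tableau for $\FL \land \CUTS_{\aaa}$ and $\GR \land \CUTS_{\bbb}$ which is leaf-closed because every leaf was arranged to close; and that the equality $\nipol{N_M} = \huang{M}$ follows for all $M$ by the bottom-up induction sketched above, so in particular $\huang{M_0} = \nipol{N_0}$. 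The degenerate case in which $\FL$ or $\GR$ contains the empty clause — making $\FL \land \GR$ unsatisfiable for a trivial reason — I would treat separately or exclude by assuming non-empty input clauses.
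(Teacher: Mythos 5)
Your construction coincides with the paper's: the same encoding of deduction-tree leaves as input clauses with a uniform side, single cuts for $\aaa$- and $\bbb$-coloured resolved atoms, and the same stacked-cut gadget (an $\aaa$-side cut over two $\bbb$-side cuts with one closing auxiliary leaf each) for transparent atoms, yielding exactly the flipped Huang formula $(\huang{M_1}\land\lnot A)\lor(A\land\huang{M_2})$; your explicit path invariant is just a more systematic way of establishing the existence of same-side targets that the paper argues case by case. The regularity assumption you add is superfluous (targets are designated explicitly, so stray complementary pairs on a path do no harm, and the theorem is claimed for arbitrary deduction trees), but it does not affect the correctness of the construction.
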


\begin{proof}[Sketch]
  Assume a given deduction tree for $\FL \land \GR$ and let $M_0$ be its root.
  We specify a function $\fct$ that maps each node~$M$ of the deduction tree
  to a node $N = \ct{M}$ of a leaf-closed \sided clausal tableau for $\FL
  \land \CUTS_{\aaa}\;$ and $\;\GR \land \CUTS_{\bbb}$ whose root is $\ct{M_0}$.
  It can be verified that for all nodes $M$ of the deduction tree it holds that
  \begin{equation}
    \label{eq-huang-ipol}
    \huang{M} = \nipol{\ct{M}}.\tag{$*$}
  \end{equation}
  It follows that $\ct{M_0}$ has all the properties in the claim,
  since~(\ref{eq-huang-ipol}) implies
  \[ \huang{M_0} = \nipol{\ct{M_0}}.
  \]
  The specification of $\fct$ below characterizes structural properties as
  well as the side and literal labels of the tableau nodes with the same case
  distinctions as the definition of Huang's method
  (Definition~\ref{def-huang}).  It is not difficult to derive from $\fct$ a
  linear-time procedure that constructs a clausal tableau which meets the
  specified characteristics by a traversal of the deduction tree.

  \begin{enumerate}[label={(\roman*)},leftmargin=2.5em]
  \item \label{ct-case-aaa} If $M$ is a leaf and its clause $C = L_1 \lor
    \ldots \lor L_n$ is in $\FL$, then $\ct{M}$ is a tableau node~$N$ with
    children that are leaves, have side~$\aaa$, target nodes with the same
    side, and literals such that $\nclause{N} = C$. (That target nodes with
    the required side exist is discussed below.)  The value of $\ct{M}$ is
    then the root of a tableau that can be depicted as follows:
    \begin{center}
    \begin{tikzpicture}[scale=\tableauscale,
        sibling distance=8em,level distance=5ex,
        every node/.style = {transform shape,anchor=mid}]]
        \node (a) {\vbar\textbullet}
        child { node (b) {$L_1$} }
        child { node (c) {$L_n$} };
        \path (b) -- node[auto=false]{\ldots} (c);
    \end{tikzpicture}
    \hspace*{2.5em}
    \end{center}    
    
  \item \label{ct-case-bbb} If $M$ is a leaf and its clause $C = L_1 \lor
    \ldots \lor L_n$ is in $\GR$, then $\ct{M}$ is a tableau node~$N$ with
    children that are leaves, have side~$\bbb$, target nodes with the same
    side, and literals such that $\nclause{N} = C$. (That target nodes with
    the required side exist follows analogously to case~\ref{ct-case-aaa}, for
    which this is discussed below.) The value of $\ct{M}$ is then the root of
    a tableau that can be depicted as follows:
    \begin{center}
    \begin{tikzpicture}[scale=\tableauscale,
        sibling distance=8em,level distance=5ex,
        every node/.style = {transform shape,anchor=mid}]]
        \node (a) {\vbar\textbullet}
        child { node (b) {$\tbbb{L_1}$} }
        child { node (c) {$\tbbb{L_n}$} };
        \path (b) -- node[auto=false]{\ldots} (c);
    \end{tikzpicture}
    \hspace*{2.5em}
    \end{center}
        
  \item \label{ct-case-inner} If $M$ is an inner node with children $M_1, M_2$
    and its clause is a resolvent of the clauses of $M_1$ and $M_2$ upon $A$
    in the clause of $M_1$ and $\lnot A$ in the clause of $M_2$, then
    \begin{enumerate}[label={(\alph*)},leftmargin=2.5em]
    \item \label{ct-case-ff} If $A$ is $\aaa$-colored, then $\ct{M}$ is a
      tableau node~$N$ with $\ct{M_1}$ and $\ct{M_2}$ as children, which have
      side~$\aaa$ and literals $\lnot A$ and $A$,
      respectively. The value of $\ct{M}$ is then the root of a tableau that
      can be depicted as follows:
      \begin{center}
      \begin{tikzpicture}[scale=\tableauscale,
          sibling distance=8em,level distance=5ex,
          every node/.style = {transform shape,anchor=mid}]]
          \node (a) {\vbar\textbullet}
          child { node {$\ct{M_1} : \lnot A$} \triaw{1.8em}{}} 
          child { node {$\ct{M_2} : A$} \triaw{1.8em}{}};
      \end{tikzpicture}
      \hspace*{5em}
      \end{center}
    \item \label{ct-case-gg} If $A$ is $\bbb$-colored, then $\ct{M}$ is a
      tableau node~$N$ with $\ct{M_1}$ and $\ct{M_2}$ as children, which have
      side~$\bbb$ and literals~$\lnot A$ and $A$,
      respectively. The value of $\ct{M}$ is then the root of a tableau that
      can be depicted as follows:
      \begin{center}
      \begin{tikzpicture}[scale=\tableauscale,
          sibling distance=8em,level distance=5ex,
          every node/.style = {transform shape,anchor=mid}]]
          \node (a) {\vbar\textbullet}
          child { node {$\ct{M_1} : \tbbb{\lnot A}$} \triaw{1.8em}{}} 
          child { node {$\ct{M_2} : \tbbb{A}$} \triaw{1.8em}{}};
      \end{tikzpicture}
      \hspace*{5em}
      \end{center}
    \item \label{ct-case-transp} If $A$ is transparent, then $\ct{M}$ is a
      tableau node~$N$ with two children $N', N''$ with side~$\aaa$ and
      literals~$\lnot A$ and~$A$, respectively. The children of $N'$ are
      $\ct{M_1}$ and a leaf node, with side~$\bbb$ and literals~$\lnot A$
      and~$A$, respectively. The target of the leaf node is its parent $N'$.
      The children of $N''$ are a leaf node and $\ct{M_2}$, with side~$\bbb$
      and literals~$\lnot A$ and~$A$, respectively.  The target of the leaf
      node is its parent $N''$.  The value of $\ct{M}$ is then the root of a
      tableau that can be depicted as follows:
      \begin{center}
      \begin{tikzpicture}[scale=\tableauscale,
          sibling distance=15em,level distance=5ex,
          every node/.style = {transform shape,anchor=mid}]]
          \node (a) {\vbar\textbullet}
          child { node {$N' : \lnot A$}
            [sibling distance=8em,level distance=7ex]
            child { node {$\tbbb{\ct{M_1} : \lnot A}$}  \triaw{1.8em}{}}
            child { node {$\tbbb{A}$}}}
          child { node {$N'' : A$}
            [sibling distance=8em,level distance=7ex]
            child { node {$\tbbb{\lnot A}$} }
            child { node {$\tbbb{\ct{M_2} : A}$} \triaw{1.8em}{}}
          };
      \end{tikzpicture}
      \hspace*{5em}
      \end{center}
    \end{enumerate}
  \end{enumerate}

  \noindent
  Observe that the tableaux rooted at $\ct{M_0}$ is in cut normal form, with
  side labelings according to the cases
  \ref{ct-case-aaa}--\ref{ct-case-inner}.  For any leaf with label $L_i$
  introduced according to case~\ref{ct-case-aaa} there exists a target with
  the same side $\aaa$, which can be shown as follows: Since $L_i$ is in a
  clause in $\FL$, its atom must be $\aaa$-colored or transparent.  If it is
  $\aaa$-colored, its target must be one of the nodes with literals $\lnot A$
  or $A$ according to case \ref{ct-case-inner}.\ref{ct-case-ff}, which have
  side~$\aaa$.  If it is transparent, its target must be one of the nodes with
  literals $\lnot A$ or $A$ according to case
  \ref{ct-case-inner}.\ref{ct-case-transp}. In this case, nodes with
  literal~$\lnot A$ or~$A$, respectively, are present as ancestors with each
  side, $\aaa$ and $\bbb$, and the one with side $\aaa$ can be selected as
  target.  This argumentation for leaves introduced according to
  case~\ref{ct-case-aaa} applies analogously to case~\ref{ct-case-bbb}.

  Equation~(\ref{eq-huang-ipol}) can be verified on the basis of the following
  observations about tableaux obtained as values of $\fct$: Case
  \ref{ct-case-inner}.\ref{ct-case-transp} specifies two leaves with side
  $\bbb$ whose targets, which are their parents, have side $\aaa$. By the
  definition of $\f{ipol}$, the literals of these parents enter the
  constructed interpolant.  In fact, these are the \emph{only} leaves of the
  tableau whose target nodes have different side, because leaves according to
  cases~\ref{ct-case-aaa} and~\ref{ct-case-bbb} always have targets with their
  own side.
  \qed
\end{proof}

\noindent
The following example illustrates the simulation.
\begin{examp}[Simulation of Huang's Method by Clausal Tableaux]
  \label{examp-huang-simul}
  Let $F = \fp \land (\lnot \fp \lor \fq)$ and let $\GR = (\lnot \fq \lor \fr)
  \land \lnot \fr$. Figure~\ref{fig-res-huang} shows a deduction tree for $F
  \land \GR$, displayed as a conventional tree, instead of the usual upside
  down presentation of deduction trees \cite{chang:lee}. Nodes are labeled by
  clauses.  \tccctxt{Transparency} is indicated by a frame and \tbbbtxt{color
    $\bbb$} by gray background.  The remaining symbols have color $\aaa$.  The
  values of $\f{pi}$ as determined by Huang's method are annotated in
  brackets, where values are shown as specified in Definition~\ref{def-huang}
  except that truth-value simplification (see footnote~\ref{foot-tv}) is
  applied.  Figure~\ref{fig-sim-huang} shows the corresponding leaf-closed
  \sided clausal tableau that simulates Huang's method, related by the
  $\f{ct}$ mapping. \tbbbtxt{Side $\bbb$} is indicated by gray background.
  The values of $\f{ipol}$ are annotated in brackets, again after truth value
  simplification. They are underlined for nodes that are values of $\f{ct}$
  for some node of the resolution tree.  The target labels of the leaves are
  symbolized by arrows. (Observe in particular that the leaf with literal
  $\fq$ has as target not its closest ancestor with literal $\lnot \fq$
  because the side of that ancestor is not $\aaa$.)  The clausal tableau is
  for $F \land (\lnot \fp \lor \fp) \land (\lnot \fq \lor \fq)$ and $G \land
  (\lnot \fq \lor \fq) \land (\lnot \fr \lor \fr)$.

  \smallskip
  \noindent
  \begin{minipage}[t]{0.33\textwidth}
      \centering
      \hspace{-6pt}%
  \begin{tikzpicture}[scale=0.85,
      baseline=(a.north),
      sibling distance=6.2em,level distance=\extabld,
      every node/.style = {transform shape,anchor=mid}]]
      \node (a) {$\Box$\nannot{$\fq$}}
      child { node {$\tbbb{\fr}$\nannot{$\fq$}} 
        child { node {$\tccc{\fq}$\nannot{$\false$}}
          child { node {$\taaa{\fp}$\nannot{$\false$}}
          }
          child { node {$\taaa{\lnot\fp} \lor \tccc{\fq}$\nannot{$\false$}}}
        }          
        child { node {$\tccc{\lnot \fq} \lor \tbbb{\fr}$%
            \nannot{$\true$}}}
      }
      child { node {$\tbbb{\lnot \fr}$\nannot{$\true$}} };
  \end{tikzpicture}
  
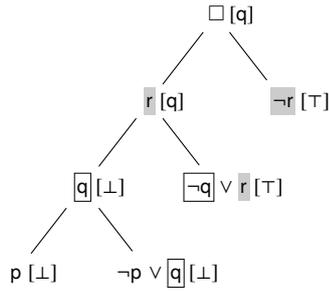
\captionof{figure}{Resolution deduction tree
    and partial interpolant annotation
    according to Huang's method.}
  \label{fig-res-huang}
  \end{minipage}
  \hspace{0.39cm}%
  \begin{minipage}[t]{0.63\textwidth}
      \centering
      \hspace{-11pt}
  \begin{tikzpicture}[scale=0.85,
      baseline={([yshift=-1.5pt]a.north)},
      sibling distance=12em,level distance=\extabld,
      every node/.style = {transform shape,anchor=mid}]]
      \node (a) {\mm{\vbar\textbullet}\nannotwmm{$\fq$}}
      child { node (t7) {\mm{$\tbbb{\lnot \fr}$}\nannotmm{$\fq$}}
        [sibling distance=12em]
        child { node (t3) {\mm{$\taaa{\lnot \fq}$}\nannot{$\false$}}
          [sibling distance=6.2em]
          child { node {\mm{$\tbbb{\lnot \fq}$}\nannotmm{$\false$}}
            child { node (t1) {\mm{$\taaa{\lnot \fp}$}\nannotmm{$\false$}}
              child { node (l1) {\mm{$\taaa{\fp}$}\nannot{$\false$}}}
            }
            child { node (t2) {\mm{$\taaa{\fp}$}\nannotmm{$\false$}}
              [sibling distance=4em]
              child { node (l2) {\mm{$\taaa{\lnot \fp}$}\nannot{$\false$}}}
              child { node (l3) {\mm{$\taaa{\fq}$}\nannot{$\false$}}}
            }
          }
          child { node (l4) {\mm{$\tbbb{\fq}$}\nannot{$\lnot \fq$}}}
        }
        child { node (t5) {\mm{$\taaa{\fq}$}\nannot{$\fq$}}
          [sibling distance=6.2em]
          child { node (l5) {\mm{$\tbbb{\lnot \fq}$}\nannot{$\fq$}}}
          child { node (t6) {\mm{$\tbbb{\fq}$}\nannotmm{$\true$}}
            child { node (l6) {\mm{$\tbbb{\lnot \fq}$}\nannot{$\true$}}}
            child { node (l7) {\mm{$\tbbb{\fr}$}\nannot{$\true$}}
            }
          }
        }
      }
      child { node (t8) {\mm{$\tbbb{\fr}$}\nannotmm{$\true$}}
        child { node (l8) {\mm{$\tbbb{\lnot \fr}$}\nannot{$\true$}}}
      };

      \draw[->,>=stealth] (l1) edge[bend left] node {} (t1);
      \draw[->,>=stealth] (l2) edge[bend left] node {} (t2);
      \draw[->,>=stealth] (l3) edge[bend right=20] node {}
      ([yshift=-8pt,xshift=2pt]t3);
      \draw[->,>=stealth] (l4) edge[bend right] node {}
      ([yshift=-8pt,xshift=10pt]t3);
      \draw[->,>=stealth] (l5) edge[bend left] node {} (t5);
      \draw[->,>=stealth] (l6) edge[bend left] node {} (t6);
      \draw[->,>=stealth] (l7) edge[bend right=40] node {}
      ([yshift=-8pt,xshift=20pt]t7);
      \draw[->,>=stealth] (l8) edge[bend right] node {} (t8);
  \end{tikzpicture}
  
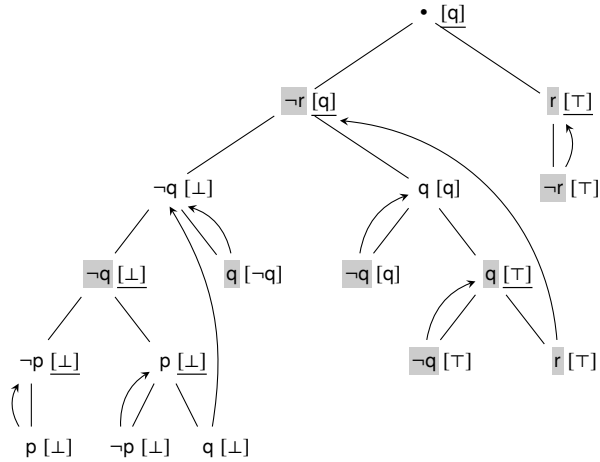
\captionof{figure}{Simulating Huang's method with a clausal
    tableau.}
  \label{fig-sim-huang}
  \end{minipage}
\end{examp}

\bigskip

As already indicated, the mapping $\fct$ is a modification of the known
conversion of deduction trees into clausal tableaux in cut normal form that is
used to show that clausal tableaux with atomic cut and semantic trees can
linearly simulate each other, which, since semantic trees can linearly
simulate tree resolution, implies that tableaux with atomic cut can linearly
simulate tree resolution \cite[Chapter~7]{letz:habil}.  Differences are the
additional labeling of nodes with a side and the introduction of stacked cuts
upon the same atom, but with different side labels for the
case~\ref{ct-case-inner}.\ref{ct-case-transp}.  Through these stacked cuts the
tableau is \name{not regular} \cite{handbook:ar:haehnle}, that is, different
nodes on a branch are labeled with the same literal, however with different
side labeling.

The \name{HKPYM interpolation system} shown in
\cite{bonacina:11,bonacina:15:ground} is similar to Huang's method for the
propositional case, but had also been discovered and analyzed independently by
several other authors (see \cite{bonacina:15:ground} for references, the name
\name{HKPYM} represents initials of these authors, with \name{H} for
\name{Huang}).  It differs from Huang's method in that the case
\ref{huang-case-inner}.\ref{huang-case-transp} of Definition~\ref{def-huang}
is replaced with $\huang{M} \eqdef (A \lor \huang{M_1}) \land (\lnot A \lor
\huang{M_2}).$ If we flip the sides of the first conjunct and take $\huang{M}
\eqdef (\huang{M_1} \lor A) \land (\lnot A \lor \huang{M_2})$ instead, that
method can also be simulated with clausal tableaux such that syntactically
identical interpolants are obtained: We just have to modify
case~\ref{ct-case-inner}.\ref{ct-case-transp} of the definition of $\fct$ such
that the upper cut has side $\bbb$ and the two lower cuts have side $\aaa$.

\name{Optimized Huang} \cite{benedikt:2017} is a variant of Huang's method for
computing Craig-\emph{Lyndon} interpolants which specializes the provisional
interpolation system of \cite[Def.~15]{bonacina:15:on}.  It uses a labeling
with respect to \emph{literal occurrences}, as discussed on
page~\pageref{page-occurrences}.  The case corresponding to
\ref{huang-case-inner}.\ref{huang-case-transp} of Definition~\ref{def-huang}
is split into subcases depending on this labeling, for example (considering
the algorithm just for propositional inputs) $\huang{M} \eqdef \huang{M_1}
\lor (A \land \huang{M_2})$.\footnote{To take account of merging duplicate
  literals, the method as presented in \cite{benedikt:2017} has to be
  supplemented by the explicit consideration of factoring steps with the
  dedicated assignment of partial interpolants from
  \cite[Def.~15]{bonacina:15:on}, or, alternatively, it must be possible to
  label an occurrence with \emph{both} provenance values at the same time, as
  noted in \cite{huang:95}, and retaining Huang's original definition for case
  \ref{huang-case-inner}.\ref{huang-case-transp} for the subcase where the
  resolution step is upon two literals which each have both provenances.  As
  an example where merging is necessary consider $F = (p \lor q) \land (\lnot
  p \lor r)$ and $\GR = (p \lor \lnot q) \land (\lnot p \lor \lnot r)$.}  It
is straightforward to adapt $\fct$ to these cases, as depicted in
Figure~\ref{fig-optimized} for the example case:\footnote{The argument in the
  proof of Theorem~\ref{thm-simul-huang} to show that in
  case~\ref{ct-case-aaa} for any leaf with label $L_i$ there exists a target
  with the same side $\aaa$ can also be applied here, but the analogy for
  case~\ref{ct-case-bbb} has to be shown differently. It follows since in
  case~\ref{ct-case-bbb} a leaf with label $L_i$ must have an ancestor with
  complementary literal that was introduced by the encoding of a resolution
  step \emph{upon} $L_i$ and its complement. If the label of the involved
  occurrence of the complement is~$\aaa$, then the encoding of the resolution
  step would result in two successive nodes with the complement of $L_i$, one
  with side $\aaa$ and one with side $\bbb$.}
\begin{figure}[H]
  \centering
  \vspace{-0.3cm}
  \begin{tikzpicture}[scale=\tableauscale,
      baseline=(a.north),
      sibling distance=15em,level distance=5ex,
      every node/.style = {transform shape,anchor=mid}]]
      \node (a) {\vbar\textbullet}
      child { node {$\ct{M_1} : \lnot A$} \triaw{1.8em}{}} 
      child { node {$A$}
        [sibling distance=8em,level distance=7ex]
        child { node {$\tbbb{\lnot A}$} }
        child { node {$\tbbb{\ct{M_2} : A}$} \triaw{1.8em}{}}
      };
  \end{tikzpicture}
  \caption{The case of \name{optimized Huang} where $\huang{M} \eqdef
    \huang{M_1} \lor (A \land \huang{M_2})$ in tableau simulation.}
  \vspace{-0.2cm}
  \label{fig-optimized}
\end{figure}  

The encoding of deduction trees into clausal tableaux for interpolant
computation with $\f{ipol}$ suggests further variants of interpolant
extraction from resolution proofs, waiting for exploration or providing a
simulation of further known interpolation methods for resolution proofs.  It
is, for example, possible to translate the case
\ref{ct-case-inner}.\ref{ct-case-transp} not into stacked cuts (i.e., a cut
with side~$\aaa$ where cuts with side~$\bbb$ are attached to its children) as
for the simulation of Huang's method but like cases
\ref{ct-case-inner}.\ref{ct-case-ff} and \ref{ct-case-inner}.\ref{ct-case-gg}
into a single cut, where cuts upon a colored symbol receive the corresponding
side label, and cuts upon a transparent symbol receive an arbitrary side
label.  The resulting tableau is then also a leaf-closed \sided clausal
tableau for $\FL \land \CUTS_{\aaa}\;$ and $\;\GR \land \CUTS_{\bbb}$ that
permits to apply $\f{ipol}$.  It is not difficult to verify that if
side~$\bbb$ is chosen for all cuts upon transparent symbols in this deduction
tree translation, then a linear simulation of McMillan's method
\cite{mcmillan:2003,mcmillan:2005}, presented as interpolation system
\name{MM} in \cite{bonacina:15:ground}, is obtained. Here is an example.
\begin{examp}[Simulation of McMillan's Method by Clausal Tableaux]
  \label{examp-mm-simul}
  This example is analogous to Example~\ref{examp-huang-simul}, but for
  McMillan's instead of Huang's method. The deduction trees in
  Figure~\ref{fig-res-huang} and~\ref{fig-res-mcmillan} differ only in the
  annotated partial interpolants. For McMillan's method they are according to
  \cite[Def.~21]{bonacina:15:ground}. In the tableau translation for Huang's
  method (Figure~\ref{fig-sim-huang}), below the left child of the root
  ``stacked cuts'' upon the transparent atom $\fq$ are inserted, according to
  case \ref{ct-case-inner}.\ref{ct-case-transp} of
  Definition~\ref{def-huang}. For the simulation of McMillan's method shown in
  Figure~\ref{fig-sim-mcmillan}, only a single cut upon~$\fq$ is inserted at
  this position. Its side is~$\bbb$, which is chosen for all cuts upon
  transparent atoms in the simulation of McMillan's method. In the simulation,
  the target of a leaf is always closest ancestor with complementary literal.

  \smallskip
  \noindent
  \begin{minipage}[t]{0.35\textwidth}
      \centering
      \hspace{-14pt}%
    \noindent
    \begin{tikzpicture}[scale=0.85,
        baseline=(a.north),
        sibling distance=6.2em,level distance=\extabld,
        every node/.style = {transform shape,anchor=mid}]]
        \node (a) {$\Box$\nannot{$\fq$}}
        child { node {$\tbbb{\fr}$\nannot{$\fq$}} 
          child { node {$\tccc{\fq}$\nannot{$\fq$}}
            child { node {$\taaa{\fp}$\nannot{$\false$}}
            }
            child { node {$\taaa{\lnot\fp} \lor \tccc{\fq}$\nannot{$\fq$}}}
          }          
          child { node {$\tccc{\lnot \fq} \lor \tbbb{\fr}$%
              \nannot{$\true$}}}
        }
        child { node {$\tbbb{\lnot \fr}$\nannot{$\true$}} };
    \end{tikzpicture}
  
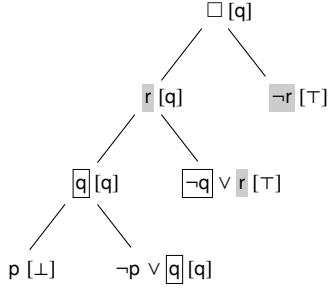
\captionof{figure}{Resolution deduction tree
    and partial interpolant annotation
    according to McMillan's method.}
  \label{fig-res-mcmillan}
  \end{minipage}
  \hspace{0.45cm}%
  \begin{minipage}[t]{0.63\textwidth}
      \centering
      \hspace{-11pt}
    \begin{tikzpicture}[scale=0.85,
      baseline={([yshift=-1.5pt]a.north)},
      sibling distance=12em,level distance=\extabld,
      every node/.style = {transform shape,anchor=mid}]]
      \node (a) {\mm{\vbar\textbullet}\nannotwmm{$\fq$}}
      child { node (t5) {\mm{$\tbbb{\lnot \fr}$}\nannotmm{$\fq$}}
        [sibling distance=12em]
        child { node (t3) {\mm{$\tbbb{\lnot \fq}$}\nannotmm{$\fq$}}
          [sibling distance=6.2em]
          child { node (t1) {\mm{$\taaa{\lnot \fp}$}\nannotmm{$\false$}}
            child { node (l1) {\mm{$\taaa{\fp}$}\nannot{$\false$}}}
            }
          child { node (t2) {\mm{$\taaa{\fp}$}\nannotmm{$\fq$}}
            [sibling distance=4em]
            child { node (l2) {\mm{$\taaa{\lnot \fp}$}\nannot{$\false$}}}
            child { node (l3) {\mm{$\taaa{\fq}$}\nannot{$\fq$}}}
          }
        }
        child { node (t4) {\mm{$\tbbb{\fq}$}\nannotmm{$\true$}}
          [sibling distance=6.2em]
          child { node (l4) {\mm{$\tbbb{\lnot \fq}$}\nannot{$\true$}}}
          child { node (l5) {\mm{$\tbbb{\fr}$}\nannot{$\true$}}
          }
        }
      }
      child { node (t6) {\mm{$\tbbb{\fr}$}\nannotmm{$\true$}}
        child { node (l6) {\mm{$\tbbb{\lnot \fr}$}\nannot{$\true$}}}
      };
      
      \draw[->,>=stealth] (l1) edge[bend left] node {} (t1);
      \draw[->,>=stealth] (l2) edge[bend left] node {} (t2);
      \draw[->,>=stealth] (l3) edge[bend right=40] node {}
      ([yshift=-8pt,xshift=20pt]t3);
      \draw[->,>=stealth] (l4) edge[bend left] node {} (t4);

      \draw[->,>=stealth] (l5) edge[out=75, in=-12] node {}
       ([yshift=-8pt,xshift=20pt]t5);
      \draw[->,>=stealth] (l6) edge[bend right] node {} (t6);
    \end{tikzpicture}
  
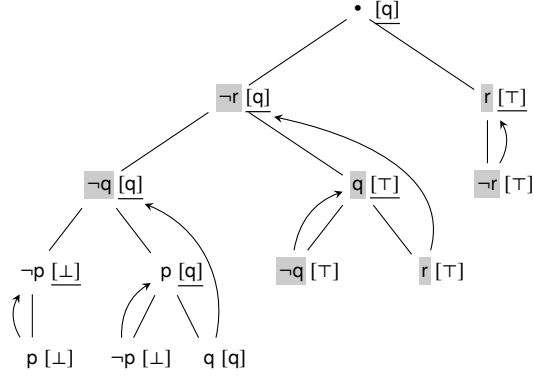
\captionof{figure}{Simulating McMillan's method with a clausal
    tableau.}
  \label{fig-sim-mcmillan}
  \end{minipage}
  \vspace{-0.5cm}    
\end{examp}

\section{Lifting Ground Interpolants -- Additional Notation}
\label{sec-lift-intro}

We now turn to the computation of \emph{first-order} interpolants, beyond the
ground case.  As outlined in Sect.~\ref{sec-intro-lifting}, we follow a
two-stage approach. The second stage, interpolant lifting, is discussed in the
following Sects.~\ref{sec-lift-thm}--\ref{sec-lift-proof}.  Related work on
interpolant lifting will be discussed in Sect.~\ref{sec-lift-related}.  As
interpolant lifting concerns terms, we need additional notation that was not
relevant in the previous sections.

\denseparagraph{The Subterm Relationship.}
We write $\strictsubterm{s}{t}$ to express that $s$ is a strict subterm of
$t$.\footnote{The symbol~$\lhd$ is an adaptation of $t \unrhd s$ for $s$
  \name{is a subterm of} $t$, common in the literature on term rewriting
  \cite{dershowitz:notations:1991}.  We use it in the strict version and in
  flipped direction because it is then in direct correspondence with ordering
  constraints on terms that determine the order of quantifications in our
  Theorem~\ref{thm-lifting}.}

\denseparagraph{Terms with Outermost Function Symbol in a Given Set.}
If $\symset{S}$ is a set of function symbols, then a term whose outermost
function symbol is in $\symset{S}$ is called an \defname{$\symset{S}\sterm$}.
The plural form \defname{$\symset{S}\sterms$} is also used to denote the set
of all $\symset{S}\sterms$.

\denseparagraph{Substitutions.}
A \defname{substitution} is a mapping from variables to terms which is almost
everywhere equal to identity.  If $\sigma$ is a substitution, then the
\defname{domain} of $\sigma$ is the set of variables $\dom{\sigma} \eqdef \{x
\mid x\sigma \neq x\}$, the \defname{range} of $\sigma$ is $\rng{\sigma}
\eqdef \bigcup_{x \in \dom{\sigma}} \{x\sigma\}$, and the
\defname{restriction} of $\sigma$ to a set~$\XS$ of variables, denoted by
$\sigma|_{\XS}$, is the substitution which is equal to the identity everywhere
except over $\XS \cap \dom{\sigma}$, where it is equal to~$\sigma$.  The
identity substitution is denoted by $\emptysubst$.  We write the
set~$\var{\rng{\sigma}}$ of variables in the range of substitution~$\sigma$
also as $\varrng{\sigma}$.  If all members of the range of a substitution are
ground, then the substitution is called a \defname{ground substitution}.

A substitution can be represented as a function by a set of assignments to the
variables in its domain, e.g., $\{x_1 \mapsto t_1, \ldots, x_n \mapsto t_n\}$.
The application of a substitution~$\sigma$ to a term or a formula~$E$ is
written as $E\sigma$, $E\sigma$ is called an \defname{instance} of $E$ and $E$
is said to \defname{subsume} $E\sigma$.  Composition of substitutions is
written as juxtaposition. Hence, if $\sigma$ and $\gamma$ are both
substitutions, then $E\sigma\gamma$ stands for $(E\sigma)\gamma$.

\denseparagraph{Maximal Occurrences and Inverse Application of an Injective
  Substitution.} An occurrence of a member~$t$ of a set~$\mathcal{T}$ of terms
in a term or a formula~$E$ is called \defname{$\mathcal{T}\!$-maximal} if it is
not within an occurrence of another member of $\mathcal{T}$.  If $\sigma$ is
an injective substitution and $E$ is a term or a formula, then
$\invsubst{E}{\sigma}$ denotes $E$ with all $\rng{\sigma}$-maximal occurrences
of terms~$t \in \rng{\sigma}$ replaced by the variable that is mapped by
$\sigma$ to $t$.
As an example let $\sigma = \{x \mapsto \ff(\fa),\, y \mapsto
\fg(\ff(\fa))\}$. Then
\[\invsubst{\fp(\fh(\ff(\fa),\fg(\ff(\fa))))}{\sigma} = \fp(\fh(x,y)).\]
The inverse application of an injective substitution~$\sigma$ to a term or
formula~$E$ can be described operationally as follows: (1)~construct the
inverse of~$\sigma$, which is $\{\ff(\fa) \mapsto x,\, \fg(\ff(\fa)) \mapsto
y\}$ in our example.  As can be seen with the example, the inverse is in
general not a substitution because arbitrary terms and not just variables can
be in its domain.  (2)~Order the members of the inverse into a sequence of
pairs such that whenever for two pairs~$t \mapsto v$ and~$t' \mapsto v'$ it
holds that $\strictsubterm{t'\!}{t}$, then $t \mapsto v$ appears to the
left of~$t' \mapsto v'$.  For the example we obtain $\la \fg(\ff(\fa)) \mapsto
y,\, \ff(\fa) \mapsto x\ra$.  (3)~For each member~$t \mapsto v$ of the
sequence, proceeding from left to right, replace all occurrences of $t$ in~$E$
with $v$.  In our example we first obtain $\fp(\fh(\ff(\fa),y))$ as
intermediate result and then the final result $\fp(\fh(x,y))$.
We note the following properties of inversely applied substitutions.
 \begin{prop}[Properties of Inverse Substitutions]
   Let $\sigma$ be an injective substitution. Then:
      
   \slab{prop-invsubst-subst} If $E$ is a term or a quantifier-free formula
   such that $\dom{\sigma} \cap \var{E} = \emptyset$ then
   $\invsubst{E}{\sigma}\sigma = E$.

   \slab{prop-invsubst-entails} If $F, G$ are quantifier-free formulas such
   that $F \entails G$ then $\invsubst{F}{\sigma} \entails
   \invsubst{G}{\sigma}$.
\end{prop}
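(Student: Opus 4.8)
The plan is to prove the two claims independently: the substitution identity $\invsubst{E}{\sigma}\sigma = E$ by structural induction on $E$, and the entailment claim by a direct semantic argument that builds a term model.

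For the substitution identity I would induct on the structure of $E$, first for terms and then lifting to quantifier-free formulas; since such a formula is a Boolean combination of atoms and $\invsubst{\cdot}{\sigma}$ commutes with $\lnot,\land,\lor$ (the $\rng{\sigma}$-maximal occurrences live inside atoms, and maximality is a local property), the formula case reduces at once to the atom/term case. Three cases arise for a term $E$. If $E \in \rng{\sigma}$, then the whole of $E$ is a $\rng{\sigma}$-maximal occurrence, so $\invsubst{E}{\sigma}$ is the unique variable $v$ with $v\sigma = E$ (unique by injectivity of $\sigma$), and applying $\sigma$ gives $E$ back; neither the induction hypothesis nor the disjointness assumption is needed here. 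If $E$ is a variable not in $\rng{\sigma}$, then $\invsubst{E}{\sigma} = E$, and from $\dom{\sigma} \cap \var{E} = \emptyset$ we get $E \notin \dom{\sigma}$, so $E\sigma = E$. If $E = f(t_1,\ldots,t_n) \notin \rng{\sigma}$, then every $\rng{\sigma}$-maximal occurrence in $E$ lies strictly inside some $t_i$, hence $\invsubst{E}{\sigma} = f(\invsubst{t_1}{\sigma},\ldots,\invsubst{t_n}{\sigma})$; the disjointness hypothesis is inherited by each $t_i$, so the induction hypothesis yields $\invsubst{t_i}{\sigma}\sigma = t_i$ and therefore $\invsubst{E}{\sigma}\sigma = f(t_1,\ldots,t_n) = E$. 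The only things to keep track of are that injectivity makes the inverse map on $\rng{\sigma}$ single-valued and that the disjointness hypothesis is exactly what ensures the final application of $\sigma$ restores the freshly introduced variables without perturbing variables already present in $E$.

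For the entailment claim, assume $F \entails G$ and suppose, for contradiction, that some structure $\mathcal{I}$ with variable assignment $\nu$ satisfies $\invsubst{F}{\sigma}$ but not $\invsubst{G}{\sigma}$. I would construct a second structure $\mathcal{J}$ whose domain consists of the ground terms built from the function symbols occurring in $F$ and $G$ together with the variables of $F$ and $G$ viewed as constants (adding one auxiliary constant if no nullary function symbol is available), with each function symbol interpreted by term construction, with assignment $\mu$ sending each variable of $F$ or $G$ to its own constant, and with each predicate $p$ interpreted so that, for a tuple $r = (r_1,\ldots,r_m)$ of domain elements, $p(r)$ holds in $\mathcal{J}$ iff $\mathcal{I},\nu$ satisfies $\invsubst{(p(r_1,\ldots,r_m))}{\sigma}$. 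An easy induction shows that every term $s$ over these symbols with variables among those of $F$ and $G$ evaluates in $\mathcal{J},\mu$ to the term $s$ itself; hence for every atom $A$ of $F$ or of $G$ one has $\mathcal{J},\mu \models A$ iff $\mathcal{I},\nu \models \invsubst{A}{\sigma}$, and since $\invsubst{\cdot}{\sigma}$ commutes with the Boolean connectives this equivalence lifts from atoms to $F$ and to $G$: $\mathcal{J},\mu \models F$ iff $\mathcal{I},\nu \models \invsubst{F}{\sigma}$, and similarly for $G$. From the assumption we obtain $\mathcal{J},\mu \models F$, hence $\mathcal{J},\mu \models G$ by $F \entails G$, hence $\mathcal{I},\nu \models \invsubst{G}{\sigma}$, contradicting the choice of $\mathcal{I},\nu$; therefore $\invsubst{F}{\sigma} \entails \invsubst{G}{\sigma}$.

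I expect the first claim to be entirely routine, the only delicate point being the bookkeeping about $\rng{\sigma}$-maximal occurrences and the exact use of the disjointness hypothesis. The main obstacle, and the place where I would expect to spend the most thought, is the choice of model for the second claim: one cannot in general repair the given model $\mathcal{I}$ by merely reinterpreting the symbols of the replaced terms, because those terms may be compound, may be nested inside one another, and may share variables and function symbols with the rest of $F$ and $G$, so the required value-adjustments can conflict. Passing instead to the term model with predicates pulled back through $\invsubst{\cdot}{\sigma}$ dissolves all of these interferences simultaneously. Note finally that this argument uses injectivity of $\sigma$ only through the well-definedness of $\invsubst{\cdot}{\sigma}$ and does not invoke the first claim, so the two parts are genuinely independent.
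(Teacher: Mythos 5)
Your proposal is correct but reaches the second claim by a genuinely different route from the paper's. For the first claim the paper writes only ``easy to see''; your structural induction is one perfectly reasonable way to unpack that. For the second claim the paper's argument is syntactic: since the logic is equality-free and the formulas are quantifier-free, a formula is valid if and only if it is propositionally valid (treating each atom as a propositional variable), and $\invsubstpost{\sigma}$ acts on atoms as a (possibly non-injective) propositional renaming, which preserves propositional validity; applied to $F \imp G$ this yields validity of $\invsubst{F}{\sigma} \imp \invsubst{G}{\sigma}$. Your argument is model-theoretic: you transfer a hypothetical counter-model of $\invsubst{F}{\sigma} \imp \invsubst{G}{\sigma}$ to a term-model counter-model of $F \imp G$ by pulling predicates back through $\invsubst{\cdot}{\sigma}$. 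Both routes hinge on the same underlying observation --- that $\invsubst{\cdot}{\sigma}$ acts uniformly on atom occurrences and commutes with the Boolean connectives --- and both depend essentially on the absence of equality. The paper flags that dependence explicitly at the start of its proof (``Recall that we only consider formulas without equality''); your argument uses it only implicitly, and somewhat more delicately: your $\mathcal{J}$ is a term model, so if $F$ contained an equality $s=t$ between syntactically distinct terms, $\mathcal{J},\mu$ would falsify it regardless of how $\mathcal{I},\nu$ treats $\invsubst{(s=t)}{\sigma}$, and the transfer would break. A sentence acknowledging that restriction would make the proof self-contained in the way the paper's is. That said, your argument is sound; the paper's syntactic version is lighter in that it avoids building a model, while yours makes vivid exactly where the no-equality hypothesis bites.
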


\begin{proof}[Sketch]
  (\ref{prop-invsubst-subst}) Easy to see.  (\ref{prop-invsubst-entails})
  Recall that we only consider formulas without equality.  A quantifier-free
  formula is valid if and only if it is \name{propositionally valid}, that is,
  the propositional formula obtained by taking each atom~$p(t_1,\ldots,t_n)$
  as name of a propositional variable is valid.  If $A$ is an atom occurring
  in a quantifier-free formula, then the inverse application of an injective
  substitution~$\sigma$ effects that either no occurrence of $A$ is replaced
  or all occurrences of~$A$ are replaced by the \emph{same} atom
  $A\invsubstpost{\sigma}$.  This implies that the inverse application of an
  injective substitution to a valid quantifier-free formula yields again a
  valid quantifier-free formula: Any proof that shows the propositional
  validity of the original formula can be modified to a proof of the
  propositional validity of the formula after the inverse application by just
  replacing the propositional variables that name replaced atoms~$A$ with the
  propositional variables that name the respective replacement atoms
  $A\invsubstpost{\sigma}$.  (This holds independently from whether names of
  some of the replacing atoms~$A\invsubstpost{\sigma}$ were already used in
  the original proof.)  Assuming that $F$ and $G$ are quantifier-free formulas
  such that $F \entails G$ we can conclude that $F \imp G$ is a valid
  quantifier-formula, hence $(F \imp G)\invsubstpost{\sigma}$, which identical
  to $F\invsubstpost{\sigma} \imp G\invsubstpost{\sigma}$, is valid, hence
  $F\invsubstpost{\sigma} \entails G\invsubstpost{\sigma}$.  \qed
\end{proof}

\section{Interpolant \LBases and Statement of the Interpolant Lifting Theorem}
\label{sec-lift-thm}

As already indicated informally, the lifting to first-order interpolants can
be specified on the basis of Craig-Lyndon interpolants of ground formulas that
are in a certain relationship to the original first-order interpolation
inputs, where the arguments for the correctness of the lifting transformation
are independent of a particular calculus.
Instead, the correctness proof is based on second-order Skolemization and
Herbrand's theorem.  The concept of \name{interpolant lifting base}, defined
below, provides an interface between formula constructions by second-order
Skolemization and quantifier expansion according to Herbrand's theorem on the
one hand and those abstract properties of the constructed formulas that are
needed to justify interpolant lifting on the other hand.  It should be noted,
that these formula constructions are applied only to prove the correctness of
interpolant lifting, without implying the necessity to actually perform them
at interpolant computation. Thus, the correctness of interpolant lifting is
shown via the existence of certain constructible formulas.

\renewcommand{\theprlabcounter}{\text{\roman{prlabcounter}}}

\newcommand{\tightop}[1]{\hspace{1pt}{#1}\hspace{1pt}}

An interpolant lifting base gathers certain components and relates them by
constraints. These components are the original interpolation inputs~$F, G$,
intermediate formulas~$\FE, \GE$ (constructible from $F$ and $G$ by
Skolemization and \emph{expansion} according to Herbrand's theorem), sets of
function symbols~$\ffs, \ggs$ (symbols that are not permitted to occur in the
lifted interpolant and include the Skolem functions in $\FE$ and $\GE$), a
ground substitution~$\sth$ whose domain is the set of the variables occurring
in $\FE$ and $\GE$, and a \emph{ground} interpolant~$\HG$ of the ground
formulas $\FE\sth$ and $\GE\sth$.  In the lifting base these components are
arranged as a tuple such that the components $F, G, \ffs, \ggs, \HG$, which
are those that need to be materialized at the interpolant lifting formula
transformation, precede the other ones $\FE, \GE, \sth$.  The following
definition makes \name{interpolant lifting base} precise.
\begin{defn}[Interpolant Lifting Base]
  \label{def-ib}
  \prlReset{def-ib} An \defname{interpolant lifting base} (briefly
  \defname{\lbase}) is a tuple
  \[\la F, G, \ffs, \ggs, \HG, \FE, \GE, \sth\ra,\]
  where $F, G$ are first-order sentences, $\ffs, \ggs$ are disjoint sets of
  function symbols, $\HG$ (the subscript \textsc{grd} suggesting
  \name{ground}) is a ground formula, $\FE, \GE$ (the subscript \textsc{exp}
  suggesting \name{expansion}) are quantifier-free formulas and $\sth$ is a
  ground substitution such that:
  \[
  \hspace{-3pt}
  \begin{array}{r@{\hspace{0.4em}}l@{\hspace{0.5em}}r@{\hspace{0.4em}}l}
    \ibn{ib:sem} &
    F \entails \exists \ffs \forall \us\, \FE, &
    \ibm{ib:sem} &
    \forall \ggs \exists \vs\, \GE \entails G,\\
    & \text{where } \us = \var{\FE}. &
    & \text{where } \vs = \var{\GE}.\\
    \ibn{ib:pred} &
    \pred{\FE} \subseteq \pred{F}.
    & \ibm{ib:pred} &
    \pred{\GE} \subseteq \pred{G}.\\
    \ibn{ib:fe} &
    \fun{\FE} \subseteq (\fun{F} \tightop{\cap} \fun{G}) \tightop{\cup} \ffs.
    & \ibm{ib:fe} &
    \fun{\GE} \subseteq (\fun{F} \tightop{\cap} \fun{G}) \tightop{\cup} \ggs.\\
    \ibn{ib:fcap} &
    \fun{F} \cap \ggs = \emptyset.
    & \ibm{ib:fcap} &
    \fun{G} \cap \ffs = \emptyset.\\
  \end{array}
  \]
  \[
  \hspace{-3pt}
  \begin{array}{r@{\hspace{0.4em}}l}
    \ibn{ib:domh} & \dom{\sth} = \var{\FE} \cup \var{\GE}.\\
    \ibn{ib:rngh} &
    \fun{\rng{\sth}} \subseteq \fun{\FE} \cup \fun{\GE} \cup \{\fc_0\},
    \text{where } \fc_0 \text{ is a constant in } \ffs \cup \ggs.\\
    \ibn{ib:ipol} & \HG \text{ is a Craig-Lyndon interpolant of } \FE\sth
    \text{ and } \GE\sth.\\
  \end{array}
  \]
\end{defn}

\renewcommand{\theprlabcounter}{\arabic{prlabcounter}}

We now consider the constructive aspect of interpolant lifting bases, showing
with Proposition~\ref{prop-lbase-exists} that for given first-order sentences
$F, G$ such that $F \entails G$ a lifting base exists.  The proof shows this
constructively, providing intuitions that flesh out the items abstractly
related by constraints in Definition~\ref{def-ib}.  Further intuition might be
provided with Procedure~\ref{proc-ctif} and Example~\ref{examp-ibase} below.
Proposition~\ref{prop-lbase-exists} is preceded by two auxiliary propositions
about Skolemization and Herbrand's theorem.
\begin{prop}[Second-Order Skolemization]
\label{prop-second-order-skolemization}
Let $F$ be a formula. Assume that $x_1, \ldots, x_n,$ $y$ are variables that do
not occur bound in $F$ and that $f$ is an $n$-ary function symbol that does
not occur in $F$. Then $\forall x_1 \ldots \forall x_n \exists y\, F\;
\equiv\; \exists f \forall x_1 \ldots \forall x_n\, F\{y \mapsto
f(x_1,\ldots,x_n)\}.$
\end{prop}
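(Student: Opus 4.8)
The plan is to prove the stated equivalence semantically, by showing that an arbitrary structure~$\mathcal{M}$ with domain~$D$, together with an arbitrary assignment~$\beta$ to the free variables (both sides have the same free variables, namely those of~$F$ minus $\{x_1,\ldots,x_n,y\}$), satisfies the left-hand side if and only if it satisfies the right-hand side. The right-hand side is a second-order formula, so ``$\mathcal{M},\beta\models\exists f\,\Phi$'' is to be read as: there is a function $\hat f\colon D^n\to D$ such that the expansion $(\mathcal{M},\hat f)$ interpreting~$f$ by~$\hat f$ satisfies $\Phi$ under~$\beta$. Two standard facts from the semantics of first-order logic carry the argument: (i)~the substitution/coincidence lemma, stating that for a term~$t$ that is substitutable for~$y$ in~$F$ one has $\mathcal{M},\beta\models F\{y\mapsto t\}$ iff $\mathcal{M},\beta[y\mapsto t^{\mathcal{M},\beta}]\models F$; and (ii)~the fact that the truth value of~$F$ under $\mathcal{M},\beta$ does not depend on the interpretation of function symbols not occurring in~$F$, in particular not on that of~$f$. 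The hypothesis that $x_1,\ldots,x_n,y$ do not occur bound in~$F$ is exactly what guarantees that $t=f(x_1,\ldots,x_n)$ is substitutable for~$y$ in~$F$ without capture, so~(i) applies, and $f\notin F$ is what makes~(ii) applicable; this is the classical correctness argument for Skolemization, lifted to the second-order setting.

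For the direction from right to left, I would assume $\mathcal{M},\beta\models\exists f\,\forall x_1\ldots\forall x_n\,F\{y\mapsto f(x_1,\ldots,x_n)\}$, fix a witnessing~$\hat f$, and take arbitrary $d_1,\ldots,d_n\in D$. Writing $\beta'=\beta[x_1\mapsto d_1,\ldots,x_n\mapsto d_n]$ and $e=\hat f(d_1,\ldots,d_n)$, which is precisely the value of the term $f(x_1,\ldots,x_n)$ under~$\beta'$ in $(\mathcal{M},\hat f)$, fact~(i) gives that $(\mathcal{M},\hat f),\beta'\models F\{y\mapsto f(x_1,\ldots,x_n)\}$ iff $(\mathcal{M},\hat f),\beta'[y\mapsto e]\models F$, and fact~(ii) lets me drop the $f$-expansion, so $\mathcal{M},\beta'[y\mapsto e]\models F$, whence $\mathcal{M},\beta'\models\exists y\,F$. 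As $d_1,\ldots,d_n$ were arbitrary, $\mathcal{M},\beta\models\forall x_1\ldots\forall x_n\exists y\,F$.

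For the converse I would assume $\mathcal{M},\beta\models\forall x_1\ldots\forall x_n\exists y\,F$ and build the Skolem function by the axiom of choice: for each tuple $(d_1,\ldots,d_n)\in D^n$ the hypothesis yields some $e\in D$ with $\mathcal{M},\beta[x_1\mapsto d_1,\ldots,x_n\mapsto d_n,y\mapsto e]\models F$, and I let $\hat f(d_1,\ldots,d_n)$ be such a chosen~$e$. Interpreting~$f$ as~$\hat f$ and running the computation of the previous paragraph in reverse -- again using~(i) to reintroduce the substituted term and~(ii) to pass to the $f$-expansion -- shows that $(\mathcal{M},\hat f),\beta\models\forall x_1\ldots\forall x_n\,F\{y\mapsto f(x_1,\ldots,x_n)\}$, hence $\mathcal{M},\beta\models\exists f\,\forall x_1\ldots\forall x_n\,F\{y\mapsto f(x_1,\ldots,x_n)\}$. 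Combining the two directions yields the claimed equivalence. The only point that warrants spelling out carefully in the write-up -- and it is a matter of bookkeeping rather than a genuine obstacle -- is the verification that $f(x_1,\ldots,x_n)$ is capture-free for~$y$ in~$F$ and that the substitution lemma is invoked correctly now that the substituted term itself contains the variables~$x_i$; both follow immediately from the ``not bound in~$F$'' hypotheses.
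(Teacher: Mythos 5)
Your proof is correct. The paper states Proposition~\ref{prop-second-order-skolemization} without proof, treating it as a standard fact about Skolemization read semantically with second-order function quantification; your argument --- the substitution and coincidence lemmas for the two directions, with the axiom of choice supplying the Skolem function in the left-to-right direction, and the ``not bound in $F$'' hypotheses guaranteeing capture-freeness --- is exactly the standard justification the paper implicitly relies on, so nothing further is needed.
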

If $f$ in Proposition~\ref{prop-second-order-skolemization} is nullary, that
is, if $f$ is a Skolem \name{constant}, then the quantification upon~$f$ is
\emph{first-order} quantification upon~$f$ in the role of a
variable. Otherwise, the quantification is second-order quantification upon a
function symbol.  We need second-order quantification to capture the semantics
of Skolemization, not just the commonly used preservation of satisfiability
and unsatisfiability. However, we use second-order quantification only in
contexts where actual reasoning in second-order logic is not required.

The following proposition expresses Herbrand's theorem as used in automated
theorem proving in a form that applies to conjunctions of universal
first-order sentences.
\begin{prop}[A Form of Herbrand's Theorem]
  \label{prop-h}
  A first-order sentence $S$ of the form $\forall \XS_1\, C_1 \land \ldots
  \land \forall \XS_n C_n$, where $\XS_1,\ldots,\XS_n$ are finite sets of
  variables and $C_1, \ldots, C_n$ are quantifier-free formulas is
  unsatisfiable if and only if there exists a finite unsatisfiable conjunction
  of ground instances of formulas $C_i$ with $i \in \{1,\ldots,n\}$, obtained
  by instantiating with terms formed from functions occurring in $S$ (and, if
  no constant occurs in $S$, a fresh constant).\footnote{$C_1, \ldots, C_n$
    can be arbitrary quantifier-free formulas, with clauses as a special
    case.}
\end{prop}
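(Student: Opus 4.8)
I would treat the two implications separately, with essentially all of the work in the ``only if'' direction. For the \emph{if} direction: suppose $D$ is a finite unsatisfiable conjunction whose conjuncts are ground instances $C_i\sigma$ (each $\sigma$ a ground substitution with $\dom{\sigma} = \XS_i$, instantiating with terms over the functions of $S$ plus, if needed, the fresh constant). Every such conjunct is entailed by $\forall \XS_i\, C_i$, hence by $S$, so $S \entails D$; as $D$ is unsatisfiable, so is $S$.

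For the \emph{only if} direction I would argue by contraposition, combining a Herbrand-universe construction with propositional compactness. First fix the Herbrand universe $\mathcal{H}$, the set of ground terms built from the function symbols occurring in $S$, augmented by one fresh constant $\fc_0$ when no constant occurs in $S$, so that $\mathcal{H} \neq \emptyset$. Then form the set of all Herbrand ground instances
\[
\Gamma \eqdef \{\, C_i\sigma \mid 1 \le i \le n,\ \dom{\sigma} = \XS_i,\ \rng{\sigma} \subseteq \mathcal{H}\,\}.
\]
Assume that no finite subconjunction of $\Gamma$ is unsatisfiable. Reading each ground atom over $\mathcal{H}$ as a propositional variable turns every member of $\Gamma$ into an ordinary propositional formula, so propositional compactness yields a single truth assignment $v$ satisfying all of $\Gamma$ at once. (This is where the infinitary content of Herbrand's theorem enters; propositional compactness is itself obtainable from K\"onig's lemma, linking this route to the semantic-tree presentations.) From $v$ build the Herbrand structure $\mathcal{M}$ with domain $\mathcal{H}$, every function symbol interpreted by term formation, and every predicate interpreted to agree with $v$ on ground atoms. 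Since the truth value of a ground quantifier-free formula in a Herbrand structure is fixed by the truth values of its ground atoms, $\mathcal{M} \models C_i\sigma$ for all admissible $\sigma$; and since in a Herbrand structure the quantifiers of $\forall \XS_i\, C_i$ range over precisely the elements of $\mathcal{H}$, we get $\mathcal{M} \models \forall \XS_i\, C_i$ for each $i$, hence $\mathcal{M} \models S$. Contraposition gives the claim.

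The steps requiring care are bookkeeping rather than conceptual: that adjoining the fresh constant $\fc_0$ when $S$ has no constant leaves the (un)satisfiability of $S$ unchanged, its interpretation being unconstrained; that a Herbrand structure satisfying all ground instances of a universal sentence satisfies the sentence, which relies on the domain being literally the Herbrand universe; and that the propositional abstraction is faithful for arbitrary quantifier-free $C_i$ and not just for clauses, which holds because it merely renames ground atoms. The main obstacle, such as it is, is calibrating the rigour of the Herbrand-structure step so the argument stays a sketch rather than becoming a from-scratch proof of Herbrand's theorem; in the write-up I would cite a standard source for that step and spell out only the reduction from $S$ to a finite set of ground instances.
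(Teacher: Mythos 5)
The paper states Proposition~\ref{prop-h} without any proof: it is presented as a known form of Herbrand's theorem (immediately after it the text moves on to Proposition~\ref{prop-lbase-exists}), so there is no internal argument to compare yours against. Your proof is correct and is the standard one: the easy direction is entailment of ground instances by their universal closures followed by monotonicity of unsatisfiability, and the substantive direction is contraposition via propositional compactness together with the Herbrand-model construction over the universe $\mathcal{H}$ of ground terms (with the fresh constant $\fc_0$ adjoined exactly to keep $\mathcal{H}$ nonempty and without constraining $S$). The three points you flag for care — that adjoining $\fc_0$ is conservative, that a Herbrand structure satisfying every ground instance satisfies the universal sentence because its domain is literally $\mathcal{H}$, and that the propositional abstraction is faithful for arbitrary quantifier-free matrices and not merely clauses (matching the paper's footnote) — are precisely the right ones, and your stated plan to cite a standard source for the core Herbrand step rather than reprove it is consistent with how the paper itself treats the proposition.
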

We can now claim the existence of an interpolant base for sentences $F, G$
such that $F \entails G$ and prove it with a construction.
\begin{prop}[Existence of an Interpolant \LBase]
  \label{prop-lbase-exists}
  If $F,G$ are first-order sentences such that $F \entails G$, then there
  exists an interpolant lifting base \[\la F, G, \ffs, \ggs, \HG, \FE, \GE,
  \sth\ra.\]
\end{prop}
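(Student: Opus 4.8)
The plan is to construct the eight components of Definition~\ref{def-ib} explicitly, fleshing out the informal description given after that definition, and to verify the constraints one by one. \emph{Skolemization.} Bring $F$ into prenex form and apply Proposition~\ref{prop-second-order-skolemization} to its existential quantifiers, outermost first, always choosing fresh Skolem function symbols. This yields a finite set $\ffs_{\mathrm{sk}}$ of fresh function symbols and a quantifier-free formula $\widetilde{F}$ with $F \equiv \exists\ffs_{\mathrm{sk}}\,\forall\var{\widetilde{F}}\,\widetilde{F}$, with $\pred{\widetilde{F}} = \pred{F}$ and $\fun{\widetilde{F}} \subseteq \fun{F}\cup\ffs_{\mathrm{sk}}$. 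Treating $\lnot G$ the same way — choosing its Skolem functions $\ggs_{\mathrm{sk}}$ fresh and, moreover, disjoint from $\ffs_{\mathrm{sk}}$ — gives a quantifier-free $\widetilde{G}$ with $\lnot G \equiv \exists\ggs_{\mathrm{sk}}\,\forall\var{\widetilde{G}}\,\widetilde{G}$, $\pred{\widetilde{G}} = \pred{\lnot G}$ and $\fun{\widetilde{G}} \subseteq \fun{G}\cup\ggs_{\mathrm{sk}}$.

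\emph{Symbol sets and Herbrand expansion.} Pick a fresh constant $\fc_0$ and set $\ffs \eqdef \ffs_{\mathrm{sk}}\cup(\fun{F}\setminus\fun{G})\cup\{\fc_0\}$ and $\ggs \eqdef \ggs_{\mathrm{sk}}\cup(\fun{G}\setminus\fun{F})$; absorbing the function symbols of $F$ (resp. $G$) that do not occur in $G$ (resp. $F$) is exactly what makes the function-inclusion conditions hold for the $\FE,\GE$ built below, while freshness of the Skolem functions and of $\fc_0$ yields $\ffs\cap\ggs = \emptyset$, $\fun{F}\cap\ggs = \emptyset$ and $\fun{G}\cap\ffs = \emptyset$ by routine set bookkeeping. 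Since $F \entails G$, the sentence $F\land\lnot G$ is unsatisfiable; substituting the Skolemized forms and moving the two disjoint second-order existential prefixes to the front shows that $\forall\var{\widetilde{F}}\,\widetilde{F} \land \forall\var{\widetilde{G}}\,\widetilde{G}$ is unsatisfiable, since dropping an outermost second-order $\exists$ cannot turn an unsatisfiable formula satisfiable. Proposition~\ref{prop-h}, applied to this two-conjunct sentence, yields a finite unsatisfiable conjunction of ground instances of $\widetilde{F}$ and of $\widetilde{G}$, the instantiating terms being built from $\fun{\widetilde{F}}\cup\fun{\widetilde{G}}$ together with $\fc_0$ in case neither $F$ nor $G$ contains a constant. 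After renaming the variables of the individual instances apart, this conjunction takes the form $\widetilde{F}_1\sth\land\dots\land\widetilde{F}_k\sth\land\widetilde{G}_1\sth\land\dots\land\widetilde{G}_m\sth$, where the $\widetilde{F}_i$ (resp. $\widetilde{G}_j$) are variable-disjoint renamed copies of $\widetilde{F}$ (resp. $\widetilde{G}$) and $\sth$ is a single ground substitution; adjust $\sth$ so that its domain is exactly the set of all these copy variables, sending any leftover copy variable to $\fc_0$. Now put $\FE \eqdef \widetilde{F}_1\land\dots\land\widetilde{F}_k$ and $\GE \eqdef \lnot(\widetilde{G}_1\land\dots\land\widetilde{G}_m)$.

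\emph{Verification.} Renaming of bound variables and iterated negation leave predicate occurrences unchanged, so $\pred{\FE} = \pred{F}$ and $\pred{\GE} = \pred{G}$ (the degenerate case of an empty conjunction gives $\true$, which only makes the inclusions easier), and likewise $\fun{\FE} = \fun{\widetilde{F}}$, $\fun{\GE} = \fun{\widetilde{G}}$; together with the definition of $\ffs,\ggs$ this gives $\fun{\FE}\subseteq(\fun{F}\cap\fun{G})\cup\ffs$, $\fun{\GE}\subseteq(\fun{F}\cap\fun{G})\cup\ggs$ and $\fun{\rng{\sth}}\subseteq\fun{\FE}\cup\fun{\GE}\cup\{\fc_0\}$, while $\dom{\sth} = \var{\FE}\cup\var{\GE}$ holds by construction. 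The entailment $F \entails \exists\ffs\,\forall\var{\FE}\,\FE$ holds because a universally quantified sentence entails the universal closure of a conjunction of variable-disjoint renamings of itself and because prefixing further second-order existential quantifiers is monotone — both used underneath the existential prefix over $\ffs_{\mathrm{sk}}$; the entailment $\forall\ggs\,\exists\var{\GE}\,\GE \entails G$ follows symmetrically after negation, using quantifier duality for the second-order prefix. Finally, the conjunction displayed above is precisely $\FE\sth \land \lnot\GE\sth$ and is unsatisfiable, i.e. $\FE\sth \entails \GE\sth$; since both formulas are ground, Craig's interpolation theorem together with Lyndon's observation — concretely: clausify as in Procedure~\ref{proc-ctig}, obtain a closed clausal ground tableau by Proposition~\ref{prop-tab-complete}, make it leaf-closed and \sided, and apply Lemma~\ref{lem-ground-ipol-correct} — provides a ground Craig-Lyndon interpolant $\HG$ of $\FE\sth$ and $\GE\sth$. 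Then $\la F, G, \ffs, \ggs, \HG, \FE, \GE, \sth\ra$ meets every requirement of Definition~\ref{def-ib}.

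\emph{Expected main obstacle.} No single step is deep; the real work is the simultaneous bookkeeping — choosing the Skolem-function sets and the symbol sets $\ffs,\ggs$ so that the function-inclusion, disjointness and substitution-range conditions all hold at once — and being careful that the second-order quantifier manipulations (Skolemization via Proposition~\ref{prop-second-order-skolemization} and the elimination of outermost second-order existentials) are invoked only in their semantically valid directions, so that the two entailment constraints of a lifting base are genuinely obtained rather than just preservation of satisfiability.
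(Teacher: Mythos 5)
Your construction matches the paper's own proof step for step: Skolemize $F$ and $\lnot G$ independently with disjoint fresh symbols, take $\ffs$ and $\ggs$ as the respective Skolem functions augmented with the functions occurring in only one of $F,G$ (plus a fresh constant), obtain $\FE$ and $\GE$ as a conjunction of variable-disjoint variants of the matrix of $F$ and the negated conjunction of variants of the matrix of $\lnot G$ via Proposition~\ref{prop-h}, and discharge condition~\ibref{ib:ipol} by invoking ground Craig-Lyndon interpolation via Lemma~\ref{lem-ground-ipol-correct}. The only cosmetic deviation is that you introduce the fresh constant $\fc_0$ and put it into $\ffs$ unconditionally, whereas the paper reuses an already-present constant when one exists and adds a fresh one only otherwise; both choices satisfy conditions~\ibref{ib:fcap}, \ibrefm{ib:fcap} and~\ibref{ib:rngh}, so this changes nothing essential.
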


\begin{proof}
  \prlReset{prop-ib-existence} Given $F \entails G$, the remaining components
  of an interpolant lifting base can be constructed as follows: Apply
  conversion to prenex form and Skolemization (according to
  Proposition~\ref{prop-second-order-skolemization}) independently to $F$ and
  to $\lnot G$, to obtain disjoint sets of fresh Skolem functions $\ffs',
  \ggs'$, quantifier-free formulas $F',G'$, and sets $\us' = \var{F'}, \vs' =
  \var{G'}$ of variables such that:
  \[
  \begin{arrayprf}
    \prl{prime}         & F \equiv \exists \ffs' \forall \us' F'.\\
    (\prefNumber{prime}') & \lnot G \equiv \exists \ggs' \forall \vs' G'.\\
    \prl{prime-voc} & \voc{F'} \subseteq \voc{F} \cup \ffs'.\\
    (\prefNumber{prime-voc}') &
    \voc{\lnot G'} \subseteq \voc{G} \cup \ggs'.\\
    \prl{prime-fo} & \forall \us' \forall \vs' (F' \land G')
    \entails \false.\\
  \end{arrayprf}
  \]
  If there is a constant occurring in $F' \land G'$, then let $\fc_0$ be such
  a constant and let $\symset{C} \eqdef \emptyset$, else let~$\fc_0$ be a
  fresh constant and let $\symset{C} \eqdef \{\fc_0\}$.  From Herbrand's
  theorem (Proposition~\ref{prop-h}) and \pref{prime-fo} it follows that there
  exist quantifier-free formulas $\FE,\GE$ and a ground substitution~$\sth$
  such that:
  \[
  \begin{arrayprf}
    \prl{p-dom-sth} & \dom{\sth} = \var{\FE} \cup \var{\GE}.\\
    \prl{p-rng-sth} & \fun{\rng{\sth}} \subseteq \fun{\FE} \cup \fun{\GE} \cup
    \{\fc_0\}.\\
    \prl{sth-false} & \FE\sth \land \lnot \GE\sth \entails \false.\\
    \prl{sth-sem} & \forall \us'\ F' \entails \forall \us\, \FE, \text{ where }
    \us = \var{\FE}.\\
   (\prefNumber{sth-sem}') &
    \forall \vs'\ G' \entails \forall \vs\, \lnot \GE, \text{ where }
    \vs = \var{\GE}.\\
    \prl{voc-e} & \voc{\FE} \subseteq \voc{F'}.\\
    (\prefNumber{voc-e}') & \voc{\GE} \subseteq \voc{\lnot G'}.\\
  \end{arrayprf}
  \]
  Steps~\pref{sth-sem}, ($\prefNumber{sth-sem}'$), \pref{voc-e}
  and~($\prefNumber{voc-e}'$) hold since Proposition~\ref{prop-h} justifies
  the construction of $\FE$ as a conjunction of variants of $F'$ (that is,
  formulas obtained from $F'$ by systematically renaming variables) and the
  construction of $\GE$ as the negation of a conjunction of variants of
  $G'$.\footnote{In the case where $F'$ and $G'$ are clausal formulas, there
    also may exist smaller formulas $\FE$ and $\GE$ obtained as conjunction of
    variants of clauses of $F'$ and by negating a conjunction of variants of
    clauses of~$G'$, respectively.}  Construct the sets~$\ffs$ and~$\ggs$ of
  function symbols according to the following definitions:
 \[
  \begin{arrayprf}  
    \prl{sth-funf} & 
    \ffs\; \eqdef\; \ffs' \cup (\fun{F} \setminus \fun{G}) \cup \symset{C}.\\
    (\prefNumber{sth-funf}') &
    \ggs\; \eqdef\; \ggs' \cup (\fun{G} \setminus \fun{F}).
  \end{arrayprf}
  \]
  That is, $\ffs$ and~$\ggs$ extend the sets of Skolem functions for $F$ and
  $\lnot G$ by the functions occurring in $F$ but not in~$G$ and vice versa,
  respectively. In addition, if a fresh symbols is needed as constant $\fc_0$,
  it is added to $\ffs$.

 That the constructed elements match conditions \ibref{ib:sem},
 \ibref{ib:pred}, \ibref{ib:fe} and~\ibref{ib:fcap} of Definition~\ref{def-ib}
 can be shown in the following steps, explained below, where the respective
 conditions are annotated on the right.  Conditions \ibrefm{ib:sem},
 \ibrefm{ib:pred}, \ibrefm{ib:fe} and~\ibrefm{ib:fcap} of
 Definition~\ref{def-ib} can be derived in straightforward analogy.
  \[\begin{array}{Z{2.5em}@{\hspace{1em}}X{25em}@{\hspace{1em}}X{6.7em}}
   \prl{p-ib-sem} & F \entails \exists \ffs \forall \us\, \FE,
  \text{ where } \us = \var{\FE}. & \text{Def.}~\ref{def-ib}.\ibref{ib:sem}\\
  \prl{p-ib-pred} &
  \pred{\FE} \subseteq \pred{F}. & \text{Def.}~\ref{def-ib}.\ibref{ib:pred}\\
  \prl{f00} & \fun{\FE} \subseteq \fun{F} \cup \ffs'.\\
  \prl{f01} & \fun{\FE} \subseteq \fun{F} \cup \ffs' \cup \symset{C}.\\
  \prl{f02} & \hphantom{ =\; } \fun{F} \cup \ffs' \cup \symset{C}\\
      & =\; (\fun{F} \cap \fun{G}) \cup (\fun{F} \setminus \fun{G}) \cup \ffs' \cup
  \symset{C}\\
      &   =\; (\fun{F} \cap \fun{G}) \cup \ffs.\\
  \prl{p-ib-fe} & \fun{\FE} \subseteq (\fun{F} \tightop{\cap} \fun{G})
  \tightop{\cup} \ffs. &
  \text{Def.}~\ref{def-ib}.\ibref{ib:fe}\\
  \prl{p-ib-fcap} & \fun{F} \cap \ggs = \emptyset. & \text{Def.}~\ref{def-ib}.\ibref{ib:fcap}\\
  \end{array}
  \]
  Step~\pref{p-ib-sem} follow from~\pref{prime} and \pref{sth-sem}.
  Steps~\pref{p-ib-pred} and~\pref{f00} both follow from~\pref{voc-e}
  and~\pref{prime-voc}. Step~\pref{f01} follows from~\pref{f00}.
  Step~\pref{f02} follows from general properties of sets and~\pref{sth-funf}.
  Step~\pref{p-ib-fe} follows from~\pref{f01} and~\pref{f02}.
  Step~\pref{p-ib-fcap} follow from step~($\prefNumber{sth-funf}'$), since
  $\ggs'$ contains only fresh symbols.

  Conditions~\ibref{ib:domh} and~\ibref{ib:rngh} of Definition~\ref{def-ib}
  have already been stated as steps~\pref{p-dom-sth} and~\pref{p-rng-sth}.  It
  remains to consider condition~\ibref{ib:ipol}.  Step~\pref{sth-false} can be
  expressed as $\FE\sth \entails \GE\sth$.  By our ground interpolation method
  from Sect.~\ref{sec-ipol-basic} (or any other constructive statement of the
  Craig-Lyndon interpolation theorem for ground formulas) it follows that from
  $\FE\sth$ and~$\GE\sth$ a Craig-Lyndon interpolant $\HG$ of~$\FE\sth$
  and~$\GE\sth$ can be constructed.  Condition~\ibref{ib:ipol} of
  Definition~\ref{def-ib}, which states that $\HG$ is a Craig-Lyndon
  interpolant of~$\FE\sth$ and~$\GE\sth$, is then evidently satisfied.
  \qed
\end{proof}

Here are several examples of interpolant lifting bases that illustrate
different aspects.
\begin{examp}[Interpolant Lifting Base]
  \label{examp-ibase}
  The following are examples of interpolant lifting bases with specific
  properties described as introductory comments:

  \smallskip
  \selab{ex-ib-nonlocal} $\ffs$ contains a non-constant;
  members of $\ffs$ as well as a members of $\ggs$
  occur in $\HG$:
  \[
  \begin{array}{l}
    \la F = \forall x\, \fp(x, \ff(x)),\; G = \exists x\, \fp(\fg, x),\
    \ffs = \{\ff\},\; \ggs = \{\fg\},\; \HG = \fp(\fg, \ff(\fg)),\\
    \hphantom{\la} \FE = \fp(u_1, \ff(u_1)),\; \GE = \fp(\fg, u_2),\;
    \sth = \{u_1 \mapsto \fg, u_2 \mapsto \ff(\fg)\}\ra.
  \end{array}
  \]
  In this example it holds that $\FE\sth = \GE\sth = \HG = \fp(\fg,
  \ff(\fg))$.

  \medskip
  
  \selab{ex-ib-skolem} A member of $\ffs$ (i.e., $\ff_2$) is a Skolem function:
  \[
  \begin{array}{l}
    \la F = \forall x \exists y\, \fp(x, y, \ff_1),\;
    G = \exists x \exists y \, \fp(\fg, x, y),\;
    \ffs = \{\ff_1,\ff_2\},\; \ggs = \{\fg\},\\
    \hphantom{\la} \HG = \fp(\fg,\ff_2(\fg),\ff_1),\;
     \FE = \fp(u_1,\ff_2(u_1),\ff_1),\;
     \GE = \fp(\fg, u_2, u_3),\\
     \hphantom{\la} \sth = \{u_1 \mapsto \fg, u_2 \mapsto \ff_2(\fg),
     u_3 \mapsto \ff_1\}\ra.
  \end{array}
  \]
  In this example it holds that $\FE\sth = \GE\sth = \HG =
  \fp(\fg,\ff_2(\fg),\ff_1)$.

  \medskip
  
  \selab{ex-ib-expansion} $\FE$ is a conjunction of different variants of
  the matrix $\fp(x,\ff)$ of $F$:
  \[
  \begin{array}{l}
    \la F = \forall x\, \fp(x, \ff),\;
    G = \exists x\, \fp(\fg_1, x) \land \exists x\, \fp(\fg_2, x),\;
    \ffs = \{\ff\},\; \ggs = \{\fg_1,\fg_2\},\\
    \hphantom{\la} \HG = \fp(\fg_1, \ff) \land \fp(\fg_2, \ff),\;
     \FE = \fp(u_1, \ff) \land \fp(u_2, \ff),\;
     \GE = \fp(\fg_1, u_3) \land \fp(\fg_2, u_3),\\
     \hphantom{\la} \sth = \{u_1 \mapsto \fg_1, u_2 \mapsto \fg_2,
     u_3 \mapsto \ff\}\ra
  \end{array}
  \]
  In this example it holds that $\FE\sth = \GE\sth = \HG = \fp(\fg_1, \ff)
  \land \fp(\fg_2, \ff)$.
  
  \medskip
  
  \selab{ex-ib-qr} The input formulas~$F,G$ in this example extend those
  of~(i) by additional literals with predicates~$\fq, \fr$ that occur in only
  one of them and a second function symbol in $G$.  Differently from the
  previous three cases, in this case the ground interpolant is not an instance
  of the intermediate formula $\FE$. Values of~$\FE$, $\GE$ and~$\sth$ with
  other features than shown here are also possible. For example, the
  variables~$u_3, u_4$ could be merged with~$u_1$, or $\sth$ could assign
  $u_3, u_4$ to some other ground term. The values here have been chosen, as
  they are suited to illustrate further aspects in Example~\ref{examp-items}
  and the discussion following Definition~\ref{def-i}.
  \[
  \begin{array}{l}
    \la F = \forall x\, \fp(x, \ff(x)) \land
    \forall x \forall y\, \fq(\ff(x),y),\;
    G = \exists x\, (\fp(\fg_1, x) \lor \fr(\fg_2(x))),\\\hphantom{\la} 
    \ffs = \{\ff\},\; \ggs = \{\fg_1, \fg_2\},\;
    \HG = \fp(\fg_1, \ff(\fg_1)),\\\hphantom{\la}
    \FE = \fp(u_1, \ff(u_1)) \land \fq(\ff(u_3), u_4),\;
    \GE = \fp(\fg_1, u_2) \lor \fr(\fg_2(u_2)),\\
    \hphantom{\la} 
    \sth = \{u_1 \mapsto \fg_1, u_2 \mapsto \ff(\fg_1),
    u_3 \mapsto \fg_2(\ff(\fg_1)), u_4 \mapsto \fg_1\}\ra
  \end{array}
  \]
  In this example it holds that $\FE\sth = \fp(\fg_1, \ff(\fg_1)) \land
  \fq(\ff(\fg_2(\ff(\fg_1))), \fg_1)$ and $\GE\sth = \fp(\fg_1, \ff(\fg_1))
  \lor \fr(\fg_2(\ff(\fg_1)))$.
\end{examp}

We now leave the consideration of how an interpolant lifting base can be
constructed and view the interface provided by that concept from its other
side, as a representation of the preconditions of interpolant lifting, that
is, of the construction of a first-order Craig-Lyndon interpolant from the
components of a given lifting base. This is made precise with the following
theorem, which is proven later in the dedicated Section~\ref{sec-lift-proof}.
\begin{thm}[Interpolant Lifting]
  \label{thm-lifting}
  Let $F, G, \ffs, \ggs, \HG$ be the first components of a \lbase.  Let $\fgs$
  stand for $\ffs \cup \ggs$.  Let $\{t_1, \ldots, t_n\}$ be the set of the
  $\fgs\sterms$ with an $\fgs\sterms$-maximal occurrence in $\HG$, ordered
  such that if $\strictsubterm{t_i}{t_j}$, then $i < j$.  Let
  $\{v_1,\ldots,v_n\}$ be a set of fresh variables and let $\stt$ be the
  injective substitution
  \[\stt \eqdef \{v_i \mapsto t_i \mid i \in \{1,\ldots,n\}\}.\]
  For $i \in \{1,\ldots,n\}$ let $Q_i \eqdef \exists$ if $v_i\stt \in
  \ffs\sterms$ and $Q_i \eqdef \forall$ otherwise, that is, if $v_i\stt \in
  \ggs\sterms$. Then
  \[Q_1 v_1 \ldots Q_n v_n\, \HG\invsubstpost{\stt}\] is a Craig-Lyndon
  interpolant of $F$ and $G$.
\end{thm}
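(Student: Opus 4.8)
The plan is to establish the three conditions of Definition~\ref{def-cli} for $H^*\eqdef Q_1 v_1 \ldots Q_n v_n\,\Phi$, where I abbreviate $\Phi\eqdef\HG\invsubstpost{\stt}$; note that, $\HG$ being ground and $v_1,\dots,v_n$ fresh, $\Phi\stt=\HG$. I expect the syntactic conditions \ref{def-cli-pred} and \ref{def-cli-var} of Definition~\ref{def-cli} to be routine. Since $\HG$ is a Craig-Lyndon interpolant of $\FE\sth$ and $\GE\sth$ by \ibref{ib:ipol}, it is ground and $\pred{\HG}\subseteq\pred{\FE\sth}\cap\pred{\GE\sth}$, $\fun{\HG}\subseteq\fun{\FE\sth}\cap\fun{\GE\sth}$. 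A substitution changes neither predicates nor their polarity contexts, so $\pred{\FE\sth}=\pred{\FE}$ and $\pred{\GE\sth}=\pred{\GE}$, and with \ibref{ib:pred}/\ibrefm{ib:pred} this gives $\pred{\HG}\subseteq\pred{F}\cap\pred{G}$; while $\fun{\FE\sth}\subseteq\fun{\FE}\cup\fun{\rng{\sth}}$ (similarly for $\GE$) plus \ibref{ib:rngh} gives $\fun{\HG}\subseteq\fun{\FE}\cup\fun{\GE}\cup\{\fc_0\}$. Now $\Phi$ arises from $\HG$ by replacing, for each $i$, all maximal $\fgs$-term occurrences that are occurrences of $t_i$ by $v_i$; as every occurrence of a symbol of $\fgs$ in $\HG$ lies inside such a maximal occurrence (hence inside some $t_i$), we get $\fun{\Phi}\subseteq\fun{\HG}\setminus\fgs\subseteq(\fun{\FE}\cup\fun{\GE})\setminus\fgs$ (using $\fc_0\in\ffs\cup\ggs=\fgs$), which by \ibref{ib:fe}/\ibrefm{ib:fe} and $\ffs\cap\ggs=\emptyset$ is contained in $\fun{F}\cap\fun{G}$. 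Since neither $\invsubstpost{\stt}$ nor the prefix adds symbols or alters polarities, $\voc{H^*}\subseteq\voc{F}\cap\voc{G}$, which is \ref{def-cli-pred}; and $H^*$ is a sentence, since the only variables $\invsubstpost{\stt}$ introduces are $v_1,\dots,v_n$ and the prefix binds all of them, so \ref{def-cli-var} is immediate as $F,G$ are sentences.

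For the semantic condition \ref{def-cli-sem} I would first cut the work in half by a duality argument. The tuple $\la\lnot G,\lnot F,\ggs,\ffs,\lnot\HG,\lnot\GE,\lnot\FE,\sth\ra$ is again a \lbase: its defining conditions are obtained from those of the given base by applying $\lnot$ and swapping the two sides, using that negation flips the polarity contexts on both sides of \ibref{ib:pred}/\ibrefm{ib:pred} (so the containments survive) while leaving functions and free variables untouched, and that $\lnot\HG$ is a Craig-Lyndon interpolant of $\lnot\GE\sth$ and $\lnot\FE\sth$. Moreover the maximal $\fgs$-term occurrences of $\lnot\HG$ are those of $\HG$, the $\lhd$-ordering is unchanged, $\invsubstpost{\stt}$ commutes with $\lnot$, and each $Q_i$ gets dualized because $\ffs$ and $\ggs$ are exchanged; hence the formula the theorem associates with this dual base is exactly $\lnot H^*$. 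Thus ``$F\entails H^*$'' applied to the dual base reads ``$\lnot G\entails\lnot H^*$'', i.e.\ ``$H^*\entails G$'', and it suffices to prove $F\entails H^*$.

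To prove $F\entails H^*$, I would invoke \ibref{ib:sem}, $F\entails\exists\ffs\forall\us\,\FE$, together with the fact (established above) that no symbol of $\ffs$ occurs in $H^*$, to reduce the goal to $\forall\us\,\FE\entails H^*$. Instantiating the universal block by $\sth$ and applying the left half of \ibref{ib:ipol} gives $\forall\us\,\FE\entails\FE\sth\entails\HG=\Phi\stt$, so what remains is the actual lifting, $\forall\us\,\FE\entails Q_1 v_1\ldots Q_n v_n\,\Phi$. Here the $t_i$ headed by a symbol of $\ffs$ carry the existential quantifiers and are witnessed by the terms $t_i$ themselves; that such a witness depends only on earlier quantified variables is guaranteed by the ordering of the $t_i$ along $\lhd$, since $t_i\invsubstpost{\stt}$ contains only variables $v_l$ with $\strictsubterm{t_l}{t_i}$ and hence $l<i$. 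The $t_j$ headed by a symbol of $\ggs$ carry the universal quantifiers, and their soundness rests on the fact that no symbol of $\ggs$ occurs in $\FE$ (by \ibref{ib:fe} and $\ffs\cap\ggs=\emptyset$): a model of $\forall\us\,\FE$ may be re-interpreted on the symbols of $\ggs$ so as to give prescribed values to the $t_j$ without disturbing $\forall\us\,\FE$, and the \emph{logical} entailment $\FE\sth\entails\HG$ survives such re-interpretation and forces $\HG=\Phi\stt$, i.e.\ $\Phi$ under the corresponding assignment to the $v_j$.

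I expect making this last step precise to be the main obstacle, and it is exactly where second-order Skolemization and Herbrand's theorem enter. One would Skolemize the existential quantifiers of $H^*$ via Proposition~\ref{prop-second-order-skolemization}, using as Skolem terms the $t_i$ with their $\ggs$-term parts abstracted, to reduce the goal to a universal entailment, and discharge that entailment with a Herbrand expansion compatible with $\sth$ (Proposition~\ref{prop-h}, which is what produced $\FE\sth$ and $\HG$ in the first place). Care is needed because distinct $\ggs$-terms $t_j,t_{j'}$ may share a head symbol and may be nested, so realizing prescribed values for all of them simultaneously has to be carried out by a recursion along $\lhd$ — which is precisely why that ordering is imposed on the $t_i$ and, via the $Q_i$, on the quantifier prefix — and may require passing to a structure with a larger domain to keep the relevant argument tuples apart. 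The remaining direction $H^*\entails G$ then follows by the duality of the second paragraph.
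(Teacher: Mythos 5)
Your handling of the syntactic conditions \ref{def-cli-pred} and \ref{def-cli-var} is correct and essentially the same argument as the paper's final paragraph of the proof. Your duality step reducing ``$H^*\entails G$'' to ``$F\entails H^*$'' via the mirrored lifting base $\la\lnot G,\lnot F,\ggs,\ffs,\lnot\HG,\lnot\GE,\lnot\FE,\sth\ra$ is exactly the paper's Corollary~\ref{cor-lift-sem}. The reduction of $F\entails H^*$ to $\forall\us\,\FE\entails H^*$ via \ibref{ib:sem} and $\ffs$-freeness of $H^*$ is also sound.

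The genuine gap lies precisely in the step you flag as ``the main obstacle'': $\forall\us\,\FE\entails Q_1 v_1\ldots Q_n v_n\,\Phi$. Your model-theoretic re-interpretation sketch names the difficulties but does not overcome them. Distinct $\ggs$-terms $t_j,t_{j'}$ sharing a head symbol can have argument tuples that collide under a given model $M$ of $\forall\us\,\FE$, so a single re-interpretation of the $\ggs$-symbols cannot force both onto independently prescribed values; and ``passing to a larger domain'' is not available here, because what must be shown is $M\models H^*$ for the given $M$, and a finite $M$ has no proper elementary extension. The closing appeal to Herbrand's theorem at this stage is also misplaced: in the paper's architecture, Skolemization and Herbrand expansion are used in Proposition~\ref{prop-lbase-exists} to show that a lifting base \emph{exists}; once a lifting base is given, the lifting theorem itself is proved by purely syntactic manipulation, with no further appeal to Herbrand's theorem.

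What the paper does instead in Lemma~\ref{lem-lift-sem} is algebraic rather than model-theoretic. It introduces an injective ground substitution $\stz$ covering \emph{all} $\fgs$-terms occurring in $\FE\sth$ or $\HG$ (not only the maximal ones in $\HG$), defines the lifted expansion $\FQ=\invsubst{\FE\sth}{\stz}$, and sets up indexed substitutions $\stski_i$ and variable sets $\ywg_i$ that drive an induction establishing $F\entails\exists\ffs\,\R_1\w_1\ldots\R_i\w_i\,\forall\ywg_i\,\FQ\stski_i$ for $i=0,\ldots,m$. The base case rests on the identity $\FE\stmerge=\FQ\stsk$ (Proposition~\ref{prop-stmerge}), which makes rigorous your idea that the $\ffs$-witnesses are the $t_i$ with their $\ggs$-parts abstracted; the induction step introduces one $\exists\w_{i+1}$ at a time, and its soundness rests on Proposition~\ref{prop-stsk-aux} ($\w_{i+1}\notin\varrng{\stski_{i+1}}$ and $\ywg_{i+1}\cap\var{\w_{i+1}\stski_i}=\emptyset$), which is exactly the dependency-ordering constraint you gesture at. That machinery — $\stz$, $\FQ$, $\stski_i$, $\ywg_i$, $\stmerge$, and the two auxiliary propositions — is the substance missing from your sketch, and it entirely avoids the model-manipulation your outline would need.
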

Theorem~\ref{thm-lifting} shows a construction of first-order formula from a
given ground formula $\HG$ and two given sets, $\ffs$ and $\ggs$, of function
symbols. The constructed formula is obtained from $\HG$ by replacing
occurrences of $\ffs\sterms$ and $\ggs\sterms$ that are maximal with respect
to $\ffs \cup \ggs$ with variables, and prepending a quantifier prefix upon
these variables that respects certain constraints with respect to the replaced
terms: Variables replacing an $\ffs\sterm$ are existentially quantified,
variables replacing a $\ggs\sterm$ are universally quantified, and whenever
variables $x,y$ replace terms~$s,t$, respectively, such that
$\strictsubterm{s}{t}$, then the quantification upon $x$ precedes that upon
$y$.  The theorem then claims that the constructed first-order formula is a
Craig-Lyndon interpolant of first-order formulas~$F$ and~$G$, provided $F, G,
\ffs, \ggs, \HG$ are the first components of a lifting base, that is, satisfy
certain constraints that are expressed with reference to further related
formulas $\FG, \FE$ and a related substitution~$\sth$.  Interpolants obtained
by applying the theorem are shown in the following example.
\begin{examp}[Interpolant Lifting]
  Consider the interpolant lifting bases from Example~\ref{examp-ibase}.
  Respective Craig-Lyndon interpolants obtained according
  to Theorem~\ref{thm-lifting} are:

  For (\ref{ex-ib-nonlocal}) and (\ref{ex-ib-qr}):
  $\forall v_1 \exists v_2\, \fp(v_1, v_2)$.

  For (\ref{ex-ib-skolem}): $\ex v_1 \all v_2 \ex v_3\, \fp(v_2, v_3, v_1)$.
  Also other orderings of the quantifier prefix are possible according to
  Theorem~\ref{thm-lifting}. The only required condition is (expressed with
  the variable names of the shown value) that $\forall v_2$ must precede
  $\exists v_3$.
  
  For (\ref{ex-ib-expansion}): $\ex v_1 \all v_2 \all v_3\, (\fp(v_2, v_1)
  \land \fp(v_3, v_1))$, which is equivalent to $\ex v_1 \all v_2
  \fp(v_2,v_1)$.  Also arbitrary other orderings of the quantifier prefix are
  possible.
  
\end{examp}

Note that for applying the lifting theorem only some of the components of a
lifting base need actually to be constructed: $F$, $G$, $\ffs$, $\ggs$ and
$\HG$.  For the remaining components, $\FE$, $\GE$ and~$\sth$, it is
sufficient to ensure that they exist. They need not to be materialized.  In
other words, interpolant lifting according to the lifting theorem is
applicable on top of all ground interpolation methods that produce from given
$F$, $G$, $\ffs$ and $\ggs$ a ground formula $\HG$ such that formulas $\FE$,
$\GE$ and a substitutions $\sth$ which satisfy the constraints for an
interpolant lifting base \emph{exist}, without need to actually compute them.
Procedure~\ref{proc-ctif} below gives an example of applying interpolant
lifting on top of a ground interpolation method with clausal tableaux.

We conclude this section with a discussion on how the components of an
interpolant lifting base $\la F, G, \ffs, \ggs, \HG, \FE, \GE, \sth\ra$ can be
matched with the color-based terminology used in some of the literature on
interpolation of refutations by resolution and superposition.  We already
noted aspects of that terminology in footnote~\ref{foot-color}
(page~\pageref{foot-color}), and applied it in Definition~\ref{def-huang}.  It
has been developed mainly for clausal inputs, whereas our interpolation input
formulas~$F$ and~$G$ are subject to Skolemization, which introduces fresh
function symbols. As sketched in the proof of
Proposition~\ref{prop-lbase-exists}, Skolemization for interpolation can be
performed separately on $F$ and $\lnot G$.  Interpolation for clausal inputs
is considered as applied after such a separate Skolemization, implying that
Skolem functions are \emph{colored}.

Thus $\ffs$ may be taken as the set of $\aaa$-colored symbols
and $\ggs$ as the set of $\bbb$-colored symbols, although both sets can
contain Skolem functions that are present neither in $F$ nor in $G$.  However,
with respect to the quantifier-free formulas $\FE$ and $\GE$, the sets $\ffs$
and $\ggs$ are the sets of the $\aaa$- and $\bbb$-colored, respectively,
symbols in the strict original sense.

A compound structure, that is, a term or a formula, is called
$\aaa$-\name{colored} if all symbols occurring in it are either $\aaa$-colored
or transparent and there is at least one occurrence of an $\aaa$-colored
symbol. The definition of $\bbb$-colored for compound structures is analogous.
Hence the formulas~$F$ and~$\FE$ are either $\aaa$-colored or transparent, and
the formulas~$G$ and~$\GE$ are either $\bbb$-colored or transparent.
Transparency of these formulas might be forbidden, as it indicates that inputs
themselves already provide trivial interpolants. In our context this seems,
however, an artificial restriction that is undesired because colored
inputs~$F$ and~$G$ may lead to formulas~$\FE$ and~$\GE$ that are transparent,
and, moreover, structural properties of interpolants for transparent inputs
might be of interest to get insights on the interpolation algorithm and in the
perspective of the computation of interpolants with further properties than
those required of Craig-Lyndon interpolants.

Arbitrary colored and transparent function symbols from~$\FE$ and~$\GE$ are
allowed in the formulas $\HG$, $\FE\sth$ and $\GE\sth$, such that these
formulas may have any of the four possible color status values (transparent,
$\aaa$-colored, $\bbb$-colored, or $\aaa\bbb$-mixed).  All predicates
occurring in~$\HG$ are transparent.  The key to interpolant lifting according
to Theorem~\ref{thm-lifting} is to consider occurrences of $\ffs\sterms$ and
$\ggs\sterms$ that are maximal with respect to $\ffs \cup \ggs$.  An
$\ffs\sterm$ is either $\aaa$-colored or $\aaa\bbb$-mixed, a $\ggs\sterm$ is
either $\bbb$-colored or $\aaa\bbb$-mixed.  However, an exact characterization
of the $\ffs\sterms$ and $\ggs\sterms$ would require an extension of the
color-based terminology that takes the color of the \emph{outermost} symbol of
a term especially into account.

\section{First-Order Interpolation with Clausal Tableaux}
\label{sec-lift-procedure}

Based on the correctness of ground interpolant extraction with the $\f{ipol}$
function (Lemma~\ref{lem-ground-ipol-correct}) and the interpolant lifting
theorem (Theorem~\ref{thm-lifting}) we can now formulate a generalization of
the \CTIG procedure (Procedure~\ref{proc-ctig}) that computes Craig-Lyndon
interpolants from first-order sentences (without equality, except if
represented as predicate, see Sect.~\ref{sec-cli-related-equality}).
We call the procedure \name{\CTIF}, suggesting \name{Clausal Tableau
  Interpolation for First-Order Formulas}.
\begin{proc}[The \CTIF Method for Craig-Lyndon Interpolation]
\label{proc-ctif}

 \algoinput First-order sentences $F$ and $G$ such that $F \entails G$.

 \algomethod
 \begin{enumerate}
 \item \label{step-pre} \name{Skolemization and clausification:} Apply
   conversion to prenex form and Skolemization to $F$ and to $\lnot G$,
   independently to each formula, to obtain disjoint sets of fresh Skolem
   functions $\ffs', \ggs'$, clausal formulas $F',G'$, and sets $\us'
   = \var{F'}, \vs' = \var{G'}$ of variables such that:
   \[\begin{array}{r@{\hspace{1em}}l}
   \text{(a)} & F \equiv \exists \ffs' \forall \us' F'
    \text{ and } \lnot G \equiv \exists \ggs' \forall \vs' G'.\\     
   \text{(b)} & \voc{F'} \subseteq \voc{F} \cup \ffs' \text{ and }
     \voc{\lnot G'} \subseteq \voc{G} \cup \ggs'.\\
   \text{(c)} & \forall \us' \forall \vs' (F' \land G') \entails \false.\\
   \end{array}
   \]
 \item \label{step-tab} \name{Tableau computation:} Compute a closed clausal
   tableau for the clausal formula $F' \land G'$. If the tableau is
   not already leaf-closed, convert it to leaf-closed form by removing all
   edges that originate in closing nodes.
   
 \item \label{step-grounding} \name{Tableau grounding:} (Recall that the
   tableau may have rigid variables, that is, variables whose scope is the
   whole tableau may occur in literal labels.) Instantiate all variables of
   the tableau with arbitrary ground terms constructed from symbols in
   $\fun{F' \land G'}$ and, if there is no constant in this set of
   function symbols, a fresh constant $\fc_0$. (Options for choosing these
   ground terms will be discussed in Sect.~\ref{sec-grounding}.)  Observe that
   the grounded tableau is still a leaf-closed tableau for $F' \land
   G'$.

 \item \label{step-side-assignment} \name{Side assignment:} Convert the ground
   tableau to a \sided tableau for $F'$ and $G'$ by attaching appropriate
   \name{side} labels to all nodes except the root. This is always possible
   because every clause of the tableau is an instance of a clause in $F'$ or
   in $G'$. (It is possible that a clause of the tableau is an instance of a
   clause in $F'$ \emph{and} of a clause in~$G'$.  See
   Sect.~\ref{sec-grounding}.)

 \item \label{step-extract} \name{Ground interpolant extraction:} Let $\HG$ be
   the value of $\nipol{N_0}$, where $N_0$ is the root of the tableau.

 \item \label{step-lifting} \name{Interpolant lifting:} Let $\ffs \eqdef \ffs'
   \cup (\fun{F} \setminus \fun{G}) \cup \{\fc_0\}$ and let $\ggs \eqdef \ggs'
   \cup (\fun{G} \setminus \fun{F})$. ($\fc_0$ needs only to be considered if
   it has been introduced in step~\ref{step-grounding} because there is no
   constant occurring in the clausal formula $F' \land G'$. It may here be
   placed also in $\ggs$ instead of $\ffs$.) Observe that $F, G, \ffs, \ggs,
   \HG$ form the first components of an interpolant lifting base. Let $H$ be
   the Craig-Lyndon interpolant of~$F$ and~$G$ specified in
   Theorem~\ref{thm-lifting} with respect to $F, G, \ffs, \ggs, \HG$ as first
   components of an interpolant lifting base.
 \end{enumerate}

 \algooutput Return $H$.  The output is a
 Craig-Lyndon interpolant of the input sentences.
\end{proc}

\noindent
Skolemization and clausification (step~\ref{step-pre}) might integrate
preprocessing operations such as structural normal form conversion and
versions of well-known clausal preprocessing techniques that, however, need to
be specially adapted as for interpolation it is not sufficient to just
preserve unsatisfiability. This is discussed below in Sect.~\ref{sec-preproc}.
The tableau computation itself (step~\ref{step-tab}) is just a refutation
task, entirely independent from its use for interpolation. That is, a clausal
tableau prover can be used there without making any changes to its internal
workings.  The duplication and instantiation of the input first-order clauses
is in essence performed in this step by the tableau prover.\footnote{This
  contrast to approaches like \cite{christ:2010}, where instantiation is
  performed specifically for interpolation.} Nevertheless, tableau
construction methods may yield tableaux in which variables are instantiated
through unification just as far as required to ensure that the tableau is
closed.  The purpose of the tableau grounding (step~\ref{step-grounding}) is
to instantiate any remaining variables in the closed tableau to ground terms.
This is a mere linear operation, where, however, different options are
possible that have effect on the resulting interpolant as discussed in
Sect.~\ref{sec-grounding}. Side assignment (step~\ref{step-side-assignment})
and ground interpolant extraction (step~\ref{step-extract}) then operate on
the resulting ground tableau.

A lifting base $\la F, G, \ffs, \ggs, \HG, \FE, \GE, \sth \ra$ that justifies
the application of interpolant lifting (step~\ref{step-lifting}) can be
determined as follows: Let $F, G, \ffs, \ggs, \HG$ are as specified in the
procedure description.  To determine the remaining components consider the
tableau after grounding and side assignment.  Take as $\FE\sth$ the
conjunction of the tableau clauses with side $\aaa$ and as $\GE\sth$ the
negation of the conjunction of the tableau clauses with side~$\bbb$.  By
comparison with the clausal formulas $F'$ and $G'$ constructed by the
procedure, the formulas $\FE\sth$ and $\GE\sth$ can be separated into suitable
formulas $\FE$ and $\GE$ and a substitution $\sth$.

The procedure can easily be adapted to handle not just sentences, but also
formulas with free variables as in- and outputs, as required by the full
definition of Craig-Lyndon interpolant: In a preprocessing step, the free
variables of $F$ and~$G$ would be converted to constants and in a
postprocessing step the occurrences in $H$ would be converted back to the
corresponding free variables.

If the invoked method for tableau computation is complete, that is, it
computes a closed clausal tableau for all unsatisfiable inputs, then, like the
\CTIG procedure, the \CTIF procedure is complete, that is, it outputs a
Craig-Lyndon interpolant of its inputs $F$ and $G$ whenever $F \entails G$.
Also as for \CTIG, the size of the result formula is linear in the size
of the clausal tableau, or more precisely in the number of its leaves whose
target has the opposite side label. The lifting step does not change this.
(Of course, as for \CTIG, the size of the tableau itself is not polynomially
bounded.)

\section{Proof of the Interpolant Lifting Theorem}
\label{sec-lift-proof}

In this section we prove Theorem~\ref{thm-lifting}.  The proof statement is
given at the end of the section. It resides on the definition of
\name{interpolant lifting base} (Definition~\ref{def-ib}) and a lemma
concerning the semantic properties of interpolants that is developed
throughout the section.  We start from a given interpolant lifting
base. Further auxiliary formulas, quantifiers, variables and substitutions are
then defined and propositions that relate them are stated.
These auxiliary elements are specified in \name{definition} environments,
which are used here differently from the other sections to specify elements
that are only of relevance within the section. Correspondingly, the
definitions and statements in this section directly refer to the components
from the given interpolant lifting base and elements defined previously in the
section.

To prove Theorem~\ref{thm-lifting}, it has to be shown that the sentence $H =
Q_1 v_1 \ldots Q_n v_n\, \HG\invsubstpost{\stt}$, as specified in the theorem
is a Craig-Lyndon interpolant of sentences $F$ and $G$. This is the case if $F
\entails H$, $H \entails G$ and $\voc{H} \subseteq \voc{F} \cap \voc{G}$.  Of
the two semantic conditions, we focus on $F \entails H$, as $H \entails G$ can
be shown analogously.  We assume in this section a context with a given
interpolant lifting base
\[\la F, G, \ffs, \ggs, \HG, \FE, \GE, \sth\ra\]
and the symbols
  \[\fgs,\; \stt,\; v_1, \ldots, v_n,\; \text{ and } Q_1, \ldots, Q_n\]
defined as in the preconditions of Theorem~\ref{thm-lifting}.
The following definition specifies, along with some shorthands, an injective
ground substitution $\stz$ that extends the \emph{lifting} substitution $\stt$
by ranging over all $\fgs\sterms$ that occur in $\FE\sth$ or in $\HG$.
\begin{defn}[Formula $\bm{\FG}$, Quantified Variables $\bm{\R_1 \w_1, \ldots,
      \R_m \w_m}$, Injective
    Substitution $\bm{\stz}$, Sets of Variables $\bm{\xwf, \ywg}$,
    Formula $\bm{\FQ}$]
  \label{def-stex-ri-fq}

  \sdlab{def-fg} Define the ground formula \[\FG \eqdef \FE\sth.\]
  
  \sdlab{def-stex} Define a set of variables $\{\w_1,\ldots, \w_m\} \supseteq
  \{v_1,\ldots,v_n\}$ and an injective substitution~$\stz$ (the subscript
  \textsc{lft} suggesting \name{lifting}) with the following properties:
\begin{enumerate}[label={(\alph*)},leftmargin=3.5em]
\item \label{sk:1}  $\dom{\stz}\; =\; \{\w_1,\ldots,\w_m\}$.
\item \label{sk:2}  $\rng{\stz}\; =\;  \{t \mid t \text{ is an } \fgs\sterm
  \text{ occurring in } \FG \text{ or in } \HG\}$.
\item \label{sk:3}  If $\strictsubterm{\w_i\stz}{\w_j\stz}$, then $i < j$.
\item \label{sk:4}   $\{v_1,\ldots, v_n\} \subseteq \{\w_1,\ldots, \w_m\}$.
\item \label{sk:5}  If $\w_i = v_j$, then $\w_i\stz = v_j\stz$.
\item \label{sk:6}  If $\w_i = v_j$, $\w_k = v_l$ and $j < l$, then $i < k$.
\end{enumerate}

\sdlab{def-wfg} Define the shorthands $\xwf \eqdef \{\w_i \mid \w_i\stz \in
\ffs\sterms\}$ and $\ywg \eqdef \{\w_i \mid \w_i\stz \in \ggs\sterms\}$.

\medskip

\sdlab{def-ri}
For $i \in \{1,\ldots,m\}$ define $\R_i \eqdef \forall$ if $\w_i \in \ywg$
and define $\R_i \eqdef \exists$ otherwise, that is, if
$\w_i \in \xwf$.

\medskip

\sdlab{def-fq}
Define the quantifier-free formula
\[\FQ\; \eqdef\; \invsubst{\FG}{\stz}.\]

\end{defn}
Definition~\ref{def-stex} characterizes the specified elements just in terms
of properties. The following proposition supplements this by claiming their
existence and supplementing a construction in its proof.
\begin{prop}[Existence of $\bm{\{\w_1,\ldots, \w_m\}}$ and $\bm{\stz}$]
  \label{prop-aux-stex}
  There exists a set $\{\w_1,\ldots, \w_m\}$ of variables and a
  substitution $\stz$ as specified in Definition~\ref{def-stex}.
\end{prop}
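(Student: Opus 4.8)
The plan is to construct $\{\w_1,\ldots,\w_m\}$ and $\stz$ explicitly and then check conditions \ref{sk:1}--\ref{sk:6} one by one. First I would gather the raw material: let $T$ be the (finite) set of all $\fgs\sterms$ occurring in $\FG = \FE\sth$ or in $\HG$. By Theorem~\ref{thm-lifting} each $t_i$ is an $\fgs\sterm$ occurring in $\HG$ and the $t_i$ are pairwise distinct, so $\{t_1,\ldots,t_n\} \subseteq T$. Put $T' \eqdef T \setminus \{t_1,\ldots,t_n\}$, choose one fresh variable for each element of $T'$ (all distinct from one another and from $v_1,\ldots,v_n$), and let $\stz$ be the substitution that agrees with $\stt$ on each $v_i$ and maps each new variable to the term of $T'$ it was chosen for. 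Since the $t_i$ are distinct and lie outside $T'$, the map $\stz$ is a bijection from $V \eqdef \{v_1,\ldots,v_n\} \cup \{\text{new variables}\}$ onto $T$; in particular it is an injective substitution, it extends $\stt$, and $\dom{\stz} = V$ with $\rng{\stz} = T$ and $\{v_1,\ldots,v_n\} \subseteq V$. This already settles \ref{sk:1}, \ref{sk:2}, \ref{sk:4} and (because $\stz$ extends $\stt$) \ref{sk:5}, regardless of how $V$ is enumerated.

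What remains is to enumerate $V$ as $\w_1,\ldots,\w_m$ (with $m = |V|$) so that \ref{sk:3} and \ref{sk:6} hold. Transporting this along the bijection $\stz$, it amounts to finding a linear order $\sqsubset$ on the finite set $T$ that extends the relation $\strictsubterm{\cdot}{\cdot}$ restricted to $T$ and whose restriction to $\{t_1,\ldots,t_n\}$ is the given order $t_1,\ldots,t_n$. I would obtain $\sqsubset$ by a topological-sort argument: let $R$ be the union on $T$ of $\strictsubterm{\cdot}{\cdot}|_T$ with the chain $t_1 \to t_2 \to \cdots \to t_n$, show that its transitive closure $R^{+}$ is irreflexive, and then take for $\sqsubset$ any linear extension of the strict partial order $R^{+}$ on the finite set $T$ (obtained, e.g., by repeatedly removing a minimal element). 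Enumerating $V$ so that $\w_p$ precedes $\w_q$ exactly when $\w_p\stz \sqsubset \w_q\stz$, conditions \ref{sk:3} and \ref{sk:6} follow since both $\strictsubterm{\cdot}{\cdot}|_T$ and the chain are contained in $\sqsubset$.

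The one point that demands a genuine argument --- the main obstacle --- is the irreflexivity of $R^{+}$, and this is exactly where the hypothesis of Theorem~\ref{thm-lifting} that $\strictsubterm{t_i}{t_j}$ implies $i<j$ enters. Both $\strictsubterm{\cdot}{\cdot}|_T$ and the chain order are individually irreflexive and transitive, so a cycle in $R$ of minimal length can contain no two consecutive edges of the same kind (two consecutive edges of one kind can be short-cut, using transitivity of that relation on $T$); hence it alternates between subterm edges and chain edges. Every vertex on such an alternating cycle is then an endpoint of a chain edge, so it lies in $\{t_1,\ldots,t_n\}$; but on that subset every subterm edge $\strictsubterm{t_i}{t_j}$ is, by the said hypothesis, also an edge of the transitive closure of the chain, so the whole cycle would live in the transitive closure of the (strict, total) chain order --- which is impossible. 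Once $R^{+}$ is known to be a strict partial order, everything else is routine, and I would finish by verifying \ref{sk:1}--\ref{sk:6} against the explicit construction, all of which are immediate from the bijectivity of $\stz$, the fact that it extends $\stt$, and the choice of the enumeration of $V$ from $\sqsubset$.
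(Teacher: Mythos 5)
Your proposal is correct and follows essentially the same construction as the paper's proof: collect all $\fgs\sterms$ occurring in $\FG$ or $\HG$, introduce fresh variables for the ones not already in $\rng{\stt}$, and enumerate the resulting domain in an order compatible with both the subterm relation and the given ordering of $v_1,\ldots,v_n$. You go further than the paper, which merely asserts that a compatible ordering exists; your topological-sort argument with the alternating-cycle analysis supplies the verification (using precisely the hypothesis from Theorem~\ref{thm-lifting} that on $\{t_1,\ldots,t_n\}$ the subterm relation already respects the index order) that the paper leaves implicit.
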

\begin{proof} The set $\{\w_1,\ldots, \w_m\}$ of variables and the
  substitution $\stz$ can be constructed as follows: Collect all $\fgs\sterms$
  occurring in $\FG$ or $\HG$. They form a finite set of ground terms, a
  superset of $\{v_1\stt, \ldots, v_n\stt\}$. Let $\{t_1, \ldots, t_m\}$ be
  this set, ordered such that $t_1 < \ldots < t_m$ extends the ordering
  $v_1\stt < \ldots < v_n\stt$ (i.e., if $t_i = v_k\stt$, $t_j = v_l\stt$ and
  $k < l$, then $i < j$) and that if $\strictsubterm{t_i}{t_j}$, then $i < j$.
  Define $\stz \eqdef \{\w_1\mapsto t_1, \ldots, \w_m\mapsto t_m\}$. \qed
\end{proof}
The injective substitution $\stz$ extends $\stt$ that was specified in the
preconditions of the interpolant lifting theorem with pairs that map
additional variables to $\fgs\sterms$ that have an occurrence in $\FG$ or
$\HG$ which is not maximal or is not in $\HG$.  Correspondingly, the
quantifier prefix $\R_1 \w_1 \ldots \R_m \w_m$ ``includes'' $Q_1 v_1 \ldots
Q_n v_n$. That is, $\R_1 \w_1 \ldots \R_m \w_m$ could be obtained from $Q_1
v_1 \ldots Q_n v_n$ by adding to the front, the end, or in-between additional
quantifications upon those variables in $\{\w_1,\ldots,\w_m\}$ that are not in
$\{v_1,\ldots,v_n\}$.  We now define the shorthand~$\HQ$ for the
quantifier-free formula that follows the quantifier prefix in the interpolant
lifting result.

\begin{defn}[Formula $\bm{\HQ}$]
  Define the quantifier-free formula
  \[\HQ \eqdef \invsubst{\HG}{\stz}.\]
\end{defn}
We note the following properties of $\HQ$.
\begin{prop}[Properties of $\bm{\HQ}$]
  \label{prop-hq}
  \smallskip
  
  \slab{prop-hq-alt}
  $\HQ = \invsubst{\HG}{\stt}$.

  \slab{prop-fq-imp-hq}
  $\FQ \entails \HQ$.

  \slab{prop-sk-12}  $\R_1 \w_1 \ldots \R_m \w_m\, \HQ\; \equiv\;
  Q_1 v_1 \ldots Q_n v_n\, \HQ$.
\end{prop}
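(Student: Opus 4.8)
I would prove the three parts in turn: the first is the substantial one, the second is immediate from an earlier proposition, and the third reuses the first.

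For the first part --- that $\HQ = \invsubst{\HG}{\stt}$ --- the plan is to show that the inverse applications $\invsubst{\cdot}{\stz}$ and $\invsubst{\cdot}{\stt}$ rewrite the ground formula $\HG$ in exactly the same way; recall throughout that $\FG = \FE\sth$ and $\HG$ are ground (hence quantifier-free and variable-free, Definition~\ref{def-ib}) and that $\stz$ is injective (Definition~\ref{def-stex}). The key point I would establish is that, for an occurrence in $\HG$ of a member of $\rng{\stz}$ (respectively of $\rng{\stt}$), being $\rng{\stz}$-maximal (respectively $\rng{\stt}$-maximal) is equivalent to being an $\fgs\sterms$-maximal occurrence of an $\fgs\sterm$. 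For $\stz$ this holds because, by~\ref{sk:2}, the members of $\rng{\stz}$ occurring in $\HG$ are precisely the $\fgs\sterms$ occurring in $\HG$, so an enclosing occurrence of another member of $\rng{\stz}$ is just an enclosing, hence strictly larger, $\fgs\sterm$ occurrence. For $\stt$ it holds because $\rng{\stt} = \{t_1, \ldots, t_n\}$ is by the definition in Theorem~\ref{thm-lifting} the set of $\fgs\sterms$ having an $\fgs\sterms$-maximal occurrence in $\HG$, and any occurrence of an $\fgs\sterm$ in $\HG$ that is not $\fgs\sterms$-maximal lies inside a largest (with respect to the subterm relation) enclosing $\fgs\sterm$ occurrence, which is $\fgs\sterms$-maximal and therefore in $\rng{\stt}$. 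Hence both inverse applications rewrite exactly the $\fgs\sterms$-maximal occurrences of $\fgs\sterms$ in $\HG$, and on such an occurrence of a term $t_i = v_i\stt$ both put the variable $v_i$, because $\stz$ extends $\stt$ (by~\ref{sk:4} and~\ref{sk:5}) and is injective, so $\stz^{-1}(t_i) = \stt^{-1}(t_i) = v_i$. Therefore $\HQ = \invsubst{\HG}{\stz} = \invsubst{\HG}{\stt}$.

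The second part is immediate: by condition~\ibref{ib:ipol} of Definition~\ref{def-ib}, $\HG$ is a Craig-Lyndon interpolant of $\FE\sth$ and $\GE\sth$, so in particular $\FG = \FE\sth \entails \HG$, and applying the entailment-preservation property for the inverse application of an injective substitution (the second part of the earlier proposition on properties of inverse substitutions) to the quantifier-free formulas $\FG, \HG$ and the injective substitution $\stz$ gives $\FQ = \invsubst{\FG}{\stz} \entails \invsubst{\HG}{\stz} = \HQ$. For the third part, the first part gives $\HQ = \invsubst{\HG}{\stt}$, and since $\HG$ is variable-free this formula introduces only variables from $\stt^{-1}(\rng{\stt}) = \{v_1, \ldots, v_n\}$, so $\free{\HQ} \subseteq \{v_1, \ldots, v_n\}$. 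Consequently, in the prefix $\R_1 \w_1 \ldots \R_m \w_m$ every quantification $\R_i \w_i$ with $\w_i \notin \{v_1, \ldots, v_n\}$ is vacuous over $\HQ$, so removing all such quantifications --- one at a time, innermost first, each removal preserving equivalence --- leaves a prefix that, by~\ref{sk:4}, still binds every $v_j$, and by~\ref{sk:6} does so in the left-to-right order $v_1, v_2, \ldots, v_n$. For $\w_i = v_j$ we have $\w_i\stz = v_j\stz = v_j\stt = t_j$ (again since $\stz$ extends $\stt$), and as $\ffs \cap \ggs = \emptyset$ and $t_j$ is an $\fgs\sterm$, it is an $\ffs\sterm$ precisely when $\w_i \in \xwf$, i.e.\ when $\R_i = \exists$, and precisely when $v_j\stt \in \ffs\sterms$, i.e.\ when $Q_j = \exists$; thus $\R_i = Q_j$. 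So the reduced prefix is literally $Q_1 v_1 \ldots Q_n v_n$, which establishes $\R_1 \w_1 \ldots \R_m \w_m\, \HQ \equiv Q_1 v_1 \ldots Q_n v_n\, \HQ$.

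The main obstacle is the occurrence bookkeeping in the first part: one must check that enlarging $\stt$ to $\stz$ --- whose range additionally contains $\fgs\sterms$ that occur only non-maximally in $\HG$, or that occur in $\FG$ but not in $\HG$ --- changes neither which occurrences of $\HG$ are rewritten nor the variables they are rewritten to. Once the fact that $\stz$ extends $\stt$ (items~\ref{sk:4}--\ref{sk:6} of Definition~\ref{def-stex}) is fixed, the remaining two parts are routine.
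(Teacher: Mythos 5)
Your proof is correct and takes essentially the same approach as the paper: for (i) show that $\invsubst{\cdot}{\stz}$ and $\invsubst{\cdot}{\stt}$ rewrite exactly the $\fgs\sterms$-maximal occurrences of $\fgs\sterms$ in $\HG$ and that conditions (d)--(e) of Definition~\ref{def-stex} make the replacement variables agree; for (ii) apply Proposition~\ref{prop-invsubst-entails} to $\FG \entails \HG$; for (iii) observe $\var{\HQ} \subseteq \{v_1,\ldots,v_n\} \subseteq \{\w_1,\ldots,\w_m\}$ and let the prefix definitions finish the job. You supply more of the maximality bookkeeping for (i) and the vacuous-quantifier/ordering details for (iii) than the paper does, but the decomposition and the ingredients are the same.
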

\begin{proof}
  (\ref{prop-hq-alt}) By the definitions of $\stz$ and $\stt$ the inverse
  application of either substitution to $\HG$ has the effect that all
  $\fgs\sterms$-maximal occurrences of $\fgs\sterms$ are replaced with
  variables, and, moreover, ensured by conditions~\ref{sk:4} and~\ref{sk:5} of
  Definition~\ref{def-stex}, in both cases the same variables.
  (\ref{prop-fq-imp-hq}) By condition~\ibref{ib:ipol} of the characterization
  of \name{lifting base}, Definition~\ref{def-ib}, $\HG$ is a Craig-Lyndon
  interpolant of $\FE\sth$. Hence $\FE\sth \entails \HG$, which, by the
  definition of $\FG$ can be expressed as $\FG \entails \HG$.  By the
  definitions of $\FQ$ and~$\HQ$ the proposition statement can be expressed as
  $\invsubst{\FG}{\stz} \entails \invsubst{\HG}{\stz}$, which follows from
  $\FG \entails \HG$ by Proposition~\ref{prop-invsubst-entails}.
  (\ref{prop-sk-12}) Follows from the definitions of the prefixes $\R_1 \w_1
  \ldots \R_m \w_m$ and $Q_1 v_1 \ldots Q_n v_n$ since $\var{\HQ} \subseteq
  \{v_1,\ldots,v_n\} \subseteq \{\w_1, \ldots, \w_m\}$.  \qed
\end{proof}

The following example illustrate the elements introduced so far.
\begin{examp}[Formulas, Substitutions and Quantifier Prefixes Introduced So Far]
  \label{examp-items}
  The following table shows values for formulas, substitutions and quantifier
  prefixes defined so far in this section for the lifting base from
  Example~\ref{ex-ib-qr} as starting point. Properties stated with
  Proposition~\ref{prop-hq} can be easily verified for the example values.
  \[\begin{array}{r@{\hspace{0.5em}}c@{\hspace{0.5em}}l}
  F & = & \forall x\, \fp(x, \ff(x)) \land \forall x \forall y\,
  \fq(\ff(x),y).\\
    \ffs & = & \{\ff\}.\\
    \ggs & = & \{\fg_1, \fg_2\}.\\
    \HG & = & \fp(\fg_1, \ff(\fg_1)).\\
    \sigma & = & \{v_1 \mapsto \fg_1, v_2 \mapsto \ff(\fg_1)\}.\\
    Q_1 v_1 \ldots Q_m v_m & = & \forall v_1 \exists v_2.\\
    H & = & \forall v_1 \exists v_2\, \fp(v_1, v_2).\\
    \FE & = & \fp(u_1, \ff(u_1)) \land \fq(\ff(u_3), u_4).\\
    \sth & = & \{u_1 \mapsto \fg_1, u_2 \mapsto \ff(\fg_1),
    u_3 \mapsto \fg_2(\ff(\fg_1)), u_4 \mapsto \fg_1\}.\\
    \FG = \FE\sth & = &
    \fp(\fg_1, \ff(\fg_1)) \land \fq(\ff(\fg_2(\ff(\fg_1))), \fg_1).\\
    \stz & = &
    \{\w_1 \mapsto \fg_1, \w_2 \mapsto \ff(\fg_1),
    \w_3 \mapsto \fg_2(\ff(\fg_1)),
    \w_4 \mapsto \ff(\fg_2(\ff(\fg_1))))\},\\
    && \text{where } \w_1 = v_1 \text{ and } \w_2 = v_2.\\
    \xwf & = & \{\w_2, \w_4\}.\\
    \ywg & = & \{\w_1, \w_3\}.\\
    \R_1 \w_1 \ldots \R \w_m & =
    & \forall \w_1 \exists \w_2 \forall \w_3 \exists \w_4.\\
    \FQ = \invsubst{\FG}{\stz} & =
    & \fp(\w_1, \w_2) \land \fq(\w_4,\w_1).\\
    \HQ = \invsubst{\HG}{\stz} & = & \fp(\w_1, \w_2).\\
  \end{array}
  \]
\end{examp}

\medskip

Based on variable sets $\xwf$ and $\ywg$ we now define a series of
substitutions and a series of subsets of variables that will be used
later in an induction.
\begin{defn}[Substitutions $\bm{\stski}_i$ and Variable Sets $\bm{\ywg}_i$]
  \label{def-i}
  For $i \in \{0,\ldots,m\}$ define substitutions $\stski_i$ and sets $\ywg_i$
  of variables as follows:

  \medskip
  
\sdlab{def-i-stsk}
$\stski_i\; \eqdef\;
\{\w_j \mapsto \invsubst{\w_j\stz}{\stz |_{\ywg}} \mid
   \w_j \in \xwf \text{ and } j > i \}.$

\sdlab{def-i-xs} $\ywg_i\; \eqdef\; \{\w_j \mid \w_j \in \ywg \text{ and } j >
i\}.$

\sdlab{def-stsk} In addition, define the shorthand $\stsk \eqdef \stski_0$.
(Observe that $\ywg = \ywg_0$.)

\end{defn}
The domain of substitution $\stski_i$ is the set of those $\w_j$ with $j$
strictly larger than~$i$ that are mapped by $\stz$ to an $\ffs\sterm$.  The
value of $\w_j\stski_i$ is the value of $\w_j\stz$ (an $\ffs$\sterm) after
replacing all maximal occurrences of $\ggs\sterms$ with the respective
variables $\w_k$ that are mapped by~$\stz$ to them. For
Example~\ref{examp-items}, we obtain
\[
\begin{array}{r@{\hspace{0.5em}}c@{\hspace{0.5em}}l}
  \stsk = \stski_0 = \stski_1 & = &
  \{\w_2 \mapsto \ff(\w_1), \w_4 \mapsto \ff(\w_3)\}.\\
  \stski_2 = \stski_3 & = & \{\w_4 \mapsto \ff(\w_3)\}.\\
  \stski_4 = \stski_m & = & \emptysubst.\\
\end{array}
\]
The set of variables $\ywg_i$ is the sets of those~$\w_j$ with $j$ strictly
larger than~$i$ that are mapped by $\stz$ to a $\ggs\sterm$.  For
Example~\ref{examp-items} the values of $\ywg_i$ are $\ywg = \ywg_0 =
\{\w_1,\w_3\}$, $\ywg_1 = \ywg_2 = \{\w_3\}$, and $\ywg_3 = \ywg_4 = \ywg_m =
\emptyset$.  Members of $\xwf$ have the following properties, which will be
used later in the proof of Lemma~\ref{lem-lift-sem}.
\begin{prop}[Properties of Members of $\bm{\xwf}$]
  \label{prop-stsk-aux}
  For all $\w_i \in \xwf$ it holds that

  \smallskip
  
  \slab{prop-stsk-aux-rng}
  $\w_i \notin \varrng{\stski_i}$.
  
  \slab{prop-stsk-aux-ywg}
  $\ywg_i \cap \var{\w_i\stski_{i-1}} = \emptyset$.
\end{prop}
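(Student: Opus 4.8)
Both parts follow by unwinding the definitions of $\stski_i$ (Definition~\ref{def-i}.\ref{def-i-stsk}) and of inverse substitution application, and by combining two facts: (i)~$\xwf$ and $\ywg$ are disjoint, since $\ffs\cap\ggs=\emptyset$ means no ground term has its outermost symbol simultaneously in $\ffs$ and in $\ggs$, and (ii)~the subterm-respecting ordering of the $\w_j$ expressed in condition~\ref{sk:3} of Definition~\ref{def-stex}. I would also use that $\stz$ is a ground substitution, so that $\w_j\stz$ is ground for every $j$ (its range consists of $\fgs\sterms$ occurring in the ground formulas $\FG$ and $\HG$).

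For the first claim, $\w_i \notin \varrng{\stski_i}$, I would argue that every element of $\rng{\stski_i}$ has, by Definition~\ref{def-i}.\ref{def-i-stsk}, the form $\invsubst{\w_j\stz}{\stz|_{\ywg}}$ for some $\w_j \in \xwf$ with $j > i$. Since $\w_j\stz$ is ground, inversely applying the injective substitution $\stz|_{\ywg}$ can introduce only variables from $\dom{\stz|_{\ywg}} = \ywg$, and no others. Hence $\varrng{\stski_i} \subseteq \ywg$. As $\w_i \in \xwf$ and $\xwf \cap \ywg = \emptyset$, we conclude $\w_i \notin \varrng{\stski_i}$.

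For the second claim, $\ywg_i \cap \var{\w_i\stski_{i-1}} = \emptyset$, note first that $\w_i \in \xwf$ and $i > i-1$, so $\w_i$ lies in the domain of $\stski_{i-1}$ and $\w_i\stski_{i-1} = \invsubst{\w_i\stz}{\stz|_{\ywg}}$. The heart of the argument is to show that every variable occurring in this term is some $\w_k$ with $k < i$: a variable introduced by the inverse application is a $\w_k \in \ywg$ whose image $\w_k\stz$ is a $\ggs\sterm$ with an occurrence inside $\w_i\stz$; since $\w_i\stz$ is an $\ffs\sterm$ and $\ffs \cap \ggs = \emptyset$, the ground terms $\w_k\stz$ and $\w_i\stz$ cannot coincide, so the occurrence is that of a \emph{strict} subterm, $\strictsubterm{\w_k\stz}{\w_i\stz}$, whence condition~\ref{sk:3} forces $k < i$. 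Since by Definition~\ref{def-i}.\ref{def-i-xs} every member of $\ywg_i$ is a $\w_j$ with $j > i$, the two sets are disjoint. The only point that needs a moment's care — and the only real obstacle — is exactly this observation that an $\ffs\sterm$ and a $\ggs\sterm$ are never equal, so that a $\ggs\sterm$ occurring inside $\w_i\stz$ is necessarily a \emph{strict} subterm and \ref{sk:3} is applicable; everything else is direct bookkeeping with the definitions.
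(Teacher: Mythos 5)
Your proof is correct and follows essentially the same approach as the paper's: part~(\ref{prop-stsk-aux-rng}) via $\varrng{\stski_i} \subseteq \ywg$ and $\xwf \cap \ywg = \emptyset$, and part~(\ref{prop-stsk-aux-ywg}) via the strict-subterm relation $\strictsubterm{\w_k\stz}{\w_i\stz}$ combined with condition~\ref{sk:3} of Definition~\ref{def-stex}. The only cosmetic difference is that you argue directly ($k < i$ while $\ywg_i$ only contains indices $> i$) where the paper argues by contradiction, and you justify strictness by observing that an $\ffs\sterm$ and a $\ggs\sterm$ cannot coincide, while the paper phrases the same fact as ``the definition of $\stski_{i-1}$ precludes that $\w_i\stski_{i-1}$ is just a variable''; these are the same observation in different clothing.
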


\begin{proof}
  \prlReset{prop-stsk-aux-ywg}

  \smallskip
  (\ref{prop-stsk-aux-rng}) From the definitions of~$\stski_i$ and~$\ywg$ and
  given that $\stz$ is a ground substitution it follows that
  $\varrng{\stski_i} \subseteq \ywg$. Since $\xwf \cap \ywg = \emptyset$,
  the proposition then follows from the precondition $\w_i \in \xwf$.

  (\ref{prop-stsk-aux-ywg}) We derive a contradiction from assuming that,
  contrary to the proposition, there exists a $j$ such that:
  \[
  \begin{arrayprf}
    \prl{prop-stsk-a1} & \w_j \in \ywg_i.\\
    \prl{prop-stsk-a2} & \w_j \in \var{\w_i\stski_{i-1}}.
  \end{arrayprf}
  \]
  The derivation proceeds in the following steps explained below:
  \[
  \begin{arrayprf}
    \prl{prop-stsk-j-gt-i} & j > i.\\
    \prl{prop-stsk-stz-empty} & \var{\w_i\stz} = \emptyset.\\
    \prl{prop-stsk-undo} &
    \w_i\stz = \w_i\stz\invsubstpost{\stz |_{\ywg}}\stz |_{\ywg}.\\
    \prl{prop-stsk-stz} &
    \w_i\stz = \w_i\stz\invsubstpost{\stz |_{\ywg}}\stz.\\
    \prl{prop-stsk-inv} &
    \w_i\stski_{i-1} = \w_i\stz\invsubstpost{\stz |_{\ywg}}.\\
    \prl{prop-stsk-stz-stski} &
    \w_i\stz = \w_i\stski_{i-1}\stz.\\
    \prl{prop-stsk-subterm} &
    \strictsubterm{\w_j\stz}{\w_i\stski_{i-1}\stz}.\\
    \prl{prop-stsk-w} &
    \strictsubterm{\w_j\stz}{\w_i\stz}.\\
    \prl{prop-stsk-j-lt-i} &
    j < i.\\
  \end{arrayprf}
  \]
  Step~\pref{prop-stsk-j-gt-i} follows from \pref{prop-stsk-a1} and the
  definition of $\ywg_i$.  Step~\pref{prop-stsk-stz-empty} holds since $\stz$
  is a ground substitution and $\w_i \in \dom{\stz}$.
  Step~\pref{prop-stsk-undo} follows from
  Proposition~\ref{prop-invsubst-subst}, whose precondition $\dom{\stz
    |_{\ywg}} \cap \var{\w_i\stz} = \emptyset$ is implied by
  \pref{prop-stsk-stz-empty}. Step~\pref{prop-stsk-stz} follows from
  \pref{prop-stsk-undo} and \pref{prop-stsk-stz-empty};
  step~\pref{prop-stsk-inv} from the definition of $\stski_{i-1}$;
  step~\pref{prop-stsk-stz-stski} from \pref{prop-stsk-stz} and
  \pref{prop-stsk-inv}; step~\pref{prop-stsk-subterm} from~\pref{prop-stsk-a2}
  and the definition of~$\stski_{i-1}$, which precludes that
  $\w_i\stski_{i-1}$ is just a variable; step~\pref{prop-stsk-w} from
  \pref{prop-stsk-subterm} and \pref{prop-stsk-stz-stski}; and, finally,
  step~\pref{prop-stsk-j-lt-i}, which contradicts~\pref{prop-stsk-j-gt-i},
  from \pref{prop-stsk-w} and property~\ref{sk:3} of
  Definition~\ref{def-stex}.  \qed
\end{proof}

Both formulas $\FE$ and $\FQ$ generalize the ground formula~$\FG$, but in
different ways, which are reconciled by observing that $\FQ\stsk$ is a
(possibly non-ground) instance of $\FE$.  As an example consider $\FG =
\fp(\ff(\fg(\ff(\fa))),\fg(\ff(\fa)),\fg(\ff(\fa)))$, which is an instance of
both $\FE = \fp(\ff(u_1),u_1,u_2)$ and $\FQ = \fp(x,y,y)$.  If $\stsk = \{x
\mapsto \ff(y)\}$, then $\FQ\stsk = \fp(\ff(y),y,y)$ is more general than
$\FG$ and an instance of both $\FE$ and $\FQ$.  (This example will be fleshed
out further in Example~\ref{examp-lifting} below.)  The following definition
specifies the substitution~$\stmerge$, which is then used in
Proposition~\ref{prop-stmerge} to justify that $\FQ\stsk$ is an instance of
$\FE$.
\begin{defn}[Substitution $\bm{\stmerge}$]
  \label{def-stmerge}
  Define the substitution~$\stmerge$ as
\[\stmerge \eqdef \{u \mapsto u\sth\invsubstpost{\stz |_{\ywg}}\mid u \in \var{\FE}\}.\]
\end{defn}

\begin{prop}[Key Property of $\bm{\stmerge}$]
  \label{prop-stmerge}
  \[\FE\stmerge = \FQ\stsk.\]
\end{prop}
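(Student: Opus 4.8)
The plan is to strip off the propositional skeleton that $\FE$ and $\FQ$ share, reduce the identity to a statement about the individual terms occurring in $\FE$, and prove that statement by structural induction, supported by one auxiliary lemma about ground terms. Throughout I will use the recursive behaviour of inverse substitution: for an injective $\sigma$, $\invsubst{t}{\sigma}$ is the variable $\sigma^{-1}(t)$ when $t \in \rng{\sigma}$, and is $f(\invsubst{t_1}{\sigma},\ldots,\invsubst{t_k}{\sigma})$ when $t = f(t_1,\ldots,t_k) \notin \rng{\sigma}$. I will also need three facts: \textbf{(i)} $\fun{\FE} \cap \ggs = \emptyset$, so every function symbol of $\FE$ lying in $\fgs$ in fact lies in $\ffs$ --- this follows from conditions \ibref{ib:fe} and \ibref{ib:fcap} of Definition~\ref{def-ib} and the disjointness of $\ffs$ and $\ggs$; \textbf{(ii)} by property~\ref{sk:2} of Definition~\ref{def-stex}, $\rng{\stz}$ is exactly the set of $\fgs\sterms$ occurring in $\FG = \FE\sth$ or in $\HG$, so every $\fgs\sterm$ occurring in $\FG$ --- in particular the image $u\sth$ of any variable $u$ of $\FE$ and every subterm of such an image --- lies in $\rng{\stz}$, while $\rng{\stz|_{\ywg}}$ is exactly the subset of those members of $\rng{\stz}$ that are $\ggs\sterms$; \textbf{(iii)} by Definition~\ref{def-i}, $\dom{\stsk} = \dom{\stski_0} = \xwf$, with $\w_j\stsk = \invsubst{\w_j\stz}{\stz|_{\ywg}}$ for every $\w_j \in \xwf$.

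The first step is the auxiliary lemma: for every ground term $s$ all of whose $\fgs\sterm$ subterm occurrences lie in $\rng{\stz}$ --- in particular for every term occurring in $\FG$ --- one has $\invsubst{s}{\stz}\stsk = \invsubst{s}{\stz|_{\ywg}}$. This is an induction on $s$, using that every term is a $\ggs\sterm$, an $\ffs\sterm$, or has outermost symbol outside $\fgs$. If $s$ is a $\ggs\sterm$, then $s \in \rng{\stz}$, so $\invsubst{s}{\stz}$ is a variable $\w_k \in \ywg$, which $\stsk$ (with domain $\xwf$) leaves fixed, and $s \in \rng{\stz|_{\ywg}}$ yields the same variable on the right. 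If $s$ is an $\ffs\sterm$, then $s \in \rng{\stz}$ and $\invsubst{s}{\stz}$ is a variable $\w_j \in \xwf$, so by (iii) $\invsubst{s}{\stz}\stsk = \w_j\stsk = \invsubst{\w_j\stz}{\stz|_{\ywg}} = \invsubst{s}{\stz|_{\ywg}}$. If the outermost symbol of $s$ is not in $\fgs$, then $s \notin \rng{\stz} \cup \rng{\stz|_{\ywg}}$, both sides commute with that symbol, and the induction hypothesis on the immediate subterms of $s$ (vacuously if $s$ is a constant) finishes the case.

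The second step is a further induction on term structure showing $\invsubst{(r\sth)}{\stz|_{\ywg}} = r\stmerge$ for every term $r$ occurring in $\FE$. For $r = u \in \var{\FE}$ this is exactly Definition~\ref{def-stmerge}, as $\invsubst{(u\sth)}{\stz|_{\ywg}} = u\sth\invsubstpost{\stz|_{\ywg}}$. For $r = f(r_1,\ldots,r_k)$ with $f \in \fun{\FE}$, fact (i) gives $f \notin \ggs$, so $r\sth = f(r_1\sth,\ldots,r_k\sth)$ is not a $\ggs\sterm$, hence $r\sth \notin \rng{\stz|_{\ywg}}$ and $\invsubst{(r\sth)}{\stz|_{\ywg}}$ commutes with $f$; the induction hypothesis then gives it equal to $f(r_1\stmerge,\ldots,r_k\stmerge) = r\stmerge$. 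Combining with the auxiliary lemma applied to $s = r\sth$ (which occurs in $\FG$) yields $\invsubst{(r\sth)}{\stz}\stsk = \invsubst{(r\sth)}{\stz|_{\ywg}} = r\stmerge$ for every term $r$ of $\FE$. Since $\sth$, $\invsubst{\cdot}{\stz}$, $\stsk$ and $\stmerge$ leave logical connectives and predicate symbols untouched, and a top-level term argument of an atom of $\FE$, being directly below a predicate symbol, is never contained in a $\rng{\stz}$-maximal occurrence of $\FE\sth$ (so $\invsubst{\cdot}{\stz}$ treats it as an isolated term), this term identity lifts to $\FQ\stsk = \invsubst{(\FE\sth)}{\stz}\stsk = \FE\stmerge$, recalling $\FQ = \invsubst{\FG}{\stz}$ and $\FG = \FE\sth$; that is the claim.

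The delicate step is expected to be the $\ffs\sterm$ case of the auxiliary lemma. There inverse application of $\stz$ collapses the \emph{whole} term $s$ to a single variable, whereas inverse application of $\stz|_{\ywg}$ keeps the occurrences of $\ffs$-symbols in $s$ and rewrites only its $\ggs\sterm$ subterms; the two results coincide only once $\stsk$ re-expands the variable, and confirming this forces the $\rng{\stz}$-membership bookkeeping to be airtight, especially for $\ffs\sterms$ built by composing the function structure of $\FE$ with the image of $\sth$ --- precisely the phenomenon that defeats the naive ``apply $\invsubst{\cdot}{\stz}$ argument-wise to $u\sth$'' identity and that the detour through $\stz|_{\ywg}$ is designed to bridge. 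A minor secondary point is the reduction to top-level term arguments, i.e.\ that no enclosing context can absorb such an argument into a replaced term.
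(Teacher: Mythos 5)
Your proof is correct, and it takes a genuinely different route from the paper's. The paper introduces three auxiliary objects — a formula $\FA = \invsubst{\FE}{\stv}$ with \emph{all} $\fgs$-terms generalized away, together with substitutions $\stv$ and $\stren$ satisfying $\FA\stv = \FE$ and $\FA\stren = \FQ$ — and then proves the claim by a chain of formula-level equalities that re-express both sides as substitution compositions applied to the common ancestor $\FA$, the pivotal link being the identity $\FG\invsubstpost{\stz}\{x \mapsto x\stz\invsubstpost{\stz|_{\ywg}} \mid x \in \xwf\} = \FG\invsubstpost{\stz|_{\ywg}}$ (justified there by a short informal argument). Your proof does not introduce $\FA$, $\stv$, or $\stren$ at all. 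Instead, your auxiliary lemma $\invsubst{s}{\stz}\stsk = \invsubst{s}{\stz|_{\ywg}}$ is precisely the term-level form of that pivotal identity — note that the substitution $\{x \mapsto x\stz\invsubstpost{\stz|_{\ywg}} \mid x \in \xwf\}$ in the paper's step \emph{is} $\stsk$ — and you establish it by a three-way case split on the outermost symbol with an explicit structural induction, where the paper leaves it as a discursive observation. Your second induction, showing $\invsubst{(r\sth)}{\stz|_{\ywg}} = r\stmerge$ for terms $r$ of $\FE$ using the fact $\fun{\FE}\cap\ggs = \emptyset$, replaces the paper's final stretch of the chain from $\FA\stv\sth\invsubstpost{\stz|_{\ywg}}$ to $\FE\stmerge$. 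What the paper's route buys is concision and a certain symmetry in the presentation — both $\FE$ and $\FQ$ are seen as instances of one generalizer $\FA$ — while what your route buys is that every step is a routine structural induction whose case analysis makes the $\rng{\stz}$-membership bookkeeping airtight; you correctly identify the $\ffs$-term case, where $\invsubst{\cdot}{\stz}$ collapses the whole term to a variable while $\invsubst{\cdot}{\stz|_{\ywg}}$ only rewrites its $\ggs$-subterms, as the place where that bookkeeping matters. Your observations about $\ffs \cap \ggs = \emptyset$, about lifting from terms to atoms because top-level arguments sit directly under a predicate symbol and cannot be absorbed into any replaced term, and about $r\sth$ occurring in $\FG$ for every $r$ of $\FE$ so that the precondition of the auxiliary lemma is met via property~\ref{sk:2}, are all needed and are all handled correctly.
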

\begin{proof}
  \prlReset{lem-stmerge} We show this with the help of two auxiliary
  substitutions $\stv$ and~$\stren$ and an auxiliary quantifier-free formula
  $\FA$ that generalizes $\FE$ and $\FQ$.  These auxiliary objects are not
  used elsewhere in the proof of Theorem~\ref{thm-lifting}.  Let $\{t_1,
  \ldots, t_k\}$ be the set of all $\ffs\sterms$ with an $\ffs\sterms$-maximal
  occurrence in $\FE$.  Let $u_1, \ldots, u_k$ be fresh variables, define the
  substitution~$\stv$ as
\[\stv\; \eqdef\;
\{u_i \mapsto t_i \mid i \in \{1, \ldots, k\}\},\] and define the
quantifier-free formula~$\FA$ as
  \[\FA \eqdef \invsubst{\FE}{\stv}.\]  Since
  $\dom{\stv} \cap \var{\FE} = \emptyset$ it follows from
  Proposition~\ref{prop-invsubst-subst} that $\invsubst{\FE}{\stv}\stv = \FE$. Thus:
  \[\begin{arrayprf}
  \prl{fa-fe} & \FA\stv = \FE.
  \end{arrayprf}
  \]
  Define the substitution~$\stren$ as
  \[\stren\; \eqdef\; \{u \mapsto \invsubst{u\stv\sth}{\stz} \mid
  u \in \var{\FA}\}.\]
  Because the range of~$\stz$ contains only $\fgs\sterms$ and in~$\FA$ there
  are no occurrences of $\fgs\sterms$, inversely applying~$\stz$ to an
  instance of~$\FA$, say $\FA\xi$, yields the same result as applying to $\FA$
  the substitution that maps each variable $u$ occurring in $\FA$ to the value
  of $u\xi$ after inversely applying~$\stz$.  For the particular case of
  $\stv\sth$ in the role of~$\xi$ this can be stated formally as
  \[\begin{arrayprf}
  \prl{fa-stren} & \FA\stv\sth\invsubstpost{\stz} = \FA\{u \mapsto
  u\stv\sth\invsubstpost{\stz} \mid u \in \var{\FA}\}.
  \end{arrayprf}
  \]
  Observe that the right side of~\pref{fa-stren} is equal to $\FA\stren$.  By
  \pref{fa-fe} and by taking into account the definitions of $\FG$ and $\FQ$
  we can conclude via $\FA\stren = \FA\stv\sth\invsubstpost{\stz} =
  \FE\sth\invsubstpost{\stz} = \FG\invsubstpost{\stz} = \FQ$ that
  \[\begin{arrayprf}
  \prl{fa-fq} & \FA\stren = \FQ.
  \end{arrayprf}
  \]
  Recall that $\FG$ is a ground formula and that $\stz$ is an injective
  substitution whose range includes all $\fgs\sterms$ occurring in $\FG$.
  Inversely applying the restriction of $\stz$ to $\ggs\sterms$ to $\FG$ has
  the same result as inversely applying $\stz$ to $\FG$ followed by replacing
  those variables that are mapped by $\stz$ to an $\ffs\sterm$ with the result
  of inversely applying to that respective $\ffs\sterm$ the restriction of
  $\stz$ to $\ggs\sterms$. With the first description on the right side,
  this equality is formally stated as
  \[\begin{arrayprf}
  \prl{aux-subst} & \FG\invsubstpost{\stz}\{x \mapsto x\stz\invsubstpost{\stz
    |_{\ywg}} \mid x \in \xwf\} = \FG\invsubstpost{\stz |_{\ywg}}.
  \end{arrayprf}
  \]
  That $\FE\stmerge = \FQ\stsk$ can now be shown in the following steps,
  explained below, proceeding from the right to the left side:
  \[
  \begin{arrayprfeq}
    \prl{e:1} & & \FQ\stsk\\
    \prl{e:2} & = & \FA\stren\stsk\\
    \prl{e:3a} & = & \FA\stv\sth\invsubstpost{\stz}\stsk\\
    \prl{e:3} & = & \FA\stv\sth\invsubstpost{\stz}
    \{x \mapsto x\stz\invsubstpost{\stz |_{\ywg}} \mid x \in \xwf \}\\
    \prl{e:4} & = & \FA\stv\sth\invsubstpost{\stz |_{\ywg}}\\
    \prl{e:5} & = & \FA\stv\{u \mapsto u\sth\invsubstpost{\stz |_{\ywg}}\mid u \in \var{\FA\stv}\}\\
    \prl{e:61} & = & \FE\{u \mapsto u\sth\invsubstpost{\stz |_{\ywg}}\mid u \in \var{\FE}\}\\
      \prl{e:7} & = & \FE\stmerge.
  \end{arrayprfeq}
  \]
  Equality of~\pref{e:1} to~\pref{e:2} follows from~\pref{fa-fq}.  Equality
  to~\pref{e:3a} follows since by the definition of~$\stren$
  and~\pref{fa-stren} it holds that $\FA\stren =
  \FA\stv\sth\invsubstpost{\stz}$. Equality to~\pref{e:3} follows from the
  definition of $\stsk$ (Definition~\ref{def-i}).  Equality to~\pref{e:4}
  follows from~\pref{aux-subst}, since $\FA\stv\sth = \FE\sth = \FG$, where
  the first equality in the sequence follows from~\pref{fa-fe} and the second
  one holds by~Definition~\ref{def-fg}.  Equality to~\pref{e:5} follows with
  similar arguments as step~\pref{fa-stren} above: Because the range of
  $\stz|_{\ywg}$ contains only $\ggs\sterms$ and there are no occurrences of
  $\ggs\sterms$ in $\FA\stv$, inversely applying~$\stz|_{\ywg}$ to an instance
  of~$\FA\stv$, say $\FA\stv\xi$, yields the same result as applying to
  $\FA\stv$ the substitution that maps each variable $u$ occurring in
  $\FA\stv$ to the value of $u\xi$ after inversely applying~$\stz|_{\ywg}$.
  For the particular case of $\sth$ in the role of~$\xi$ this is formally
  stated as the equality of~\pref{e:4} and~\pref{e:5}.  Equality
  to~\pref{e:61} follows from~\pref{fa-fe}, and equality to~\pref{e:7} by
  contracting the definition of~$\stmerge$, Definition~\ref{def-stmerge}.
  \qed
\end{proof}
As a side remark, we note that the proof of Proposition~\ref{prop-stmerge}
implies with its step~\prefGlobal{lem-stmerge:e:4} a third characterization of
the formulas equated by the proposition, in terms of the ground formula $\FG$
with an inversely applied substitution: $\FE\stmerge = \FQ\stsk =
\FG\invsubstpost{\stz |_{\ywg}}$.

The following example illustrates the intermediate formulas and substitutions
introduced so far to prove the interpolant lifting theorem.
\begin{examp}[Formulas and Substitutions Involved
    in the Proof of Interpolant Lifting]
  \label{examp-lifting}
  The diagram in Figure~\ref{fig-diagram} shows instance
  relationships of some of the quantifier-free formulas and substitutions used
  to prove interpolant lifting, along with examples.  A formula connected with
  a downward line to another formula is, after applying the substitution that
  labels the line, identical to the lower formula.  Injective substitutions
  are indicated by thick lines.  The shown symbolic formula and substitution
  names refer to an implicitly given interpolant lifting base with $\ffs =
  \{\ff\}$ and $\ggs = \{\fg\}$ and to the definitions in this section.
  (Exceptions are $\FA$ and~$\stv, \stren$, which are only used locally within
  the proof of Proposition~\ref{prop-stmerge}.)
  \begin{figure}[H]
    \centering
  \noindent
  \hspace{-0.68cm}
\begin{tikzpicture}[xscale=0.38,yscale=0.38,
    baseline={([yshift=-8.5pt]a.north)}]
  \draw [thick] (0,0) -- node[above]
        {$\stv\;\;\;\;$\al} (6,4);
   \draw (6,4) -- node[above]
         {$\;\;\;\;\stren$\al} (12,0);
   \draw (0,0) -- node[above] {$\;\;\;\;\stmerge$\al} (6,-4);
   \draw (12,0) -- node[above] {$\stsk\;\;\;\;$\al} (6,-4);
   \draw (0,0) -- node[below] {\al$\sth\;\;\;\;\,$} (6,-9);
   \draw (12,0) [thick] -- node[below] {$\;\;\;\;\;\stz$\al} (6,-9);
   \draw [thick] (6,-4) -- node[fill=white,inner sep=0pt,above=-4pt] {$\!\!\!\stz |^{}_{\ywg}\!\!\!$} (6,-9);
   \node (a) [fill=white] at (6,4) {$\FA = \fp(u_3,u_1,u_2)$};
   \node [fill=white] at (0,0) {$\hspace{0.5cm}\FE = \fp(\ff(u_1),u_1,u_2)$};
   \node [fill=white] at (12,0) {$\FQ = \fp(x,y,y)\hspace{0.5cm}$};
   \node [fill=white] at (6,-9)
         {$\FG = \fp(\ff(\fg(\ff(\fa))),\fg(\ff(\fa)),\fg(\ff(\fa)))$};
   \node [fill=white] at (6,-4) {$\fp(\ff(y),y,y)$};         
\end{tikzpicture}%
\hspace{0.21cm}%
$\begin{array}[t]{rcl}
  \multicolumn{3}{c}{\text{Substitutions}}\\[3pt]
  \stv & = & \{u_3 \mapsto \ff(u_1)\}\\
  \stren & = & \{u_3 \mapsto x, u_1 \mapsto y, u_2 \mapsto y\}\\
  \sth & = & \{u_1 \mapsto \fg(\ff(\fa)), u_2 \mapsto \fg(\ff(\fa))\}\\
  \stz & = & \{x \mapsto \ff(\fg(\ff(\fa))), y \mapsto \fg(\ff(\fa))\}\\
  \stsk & = & \{x \mapsto \ff(y)\}\\
  \stmerge & = & \{u_1\mapsto y, u_2 \mapsto y\}
\end{array}$
\caption{Instance relationships between formulas involved in
  the proof of interpolant lifting with example values.}
\label{fig-diagram}
  \end{figure}
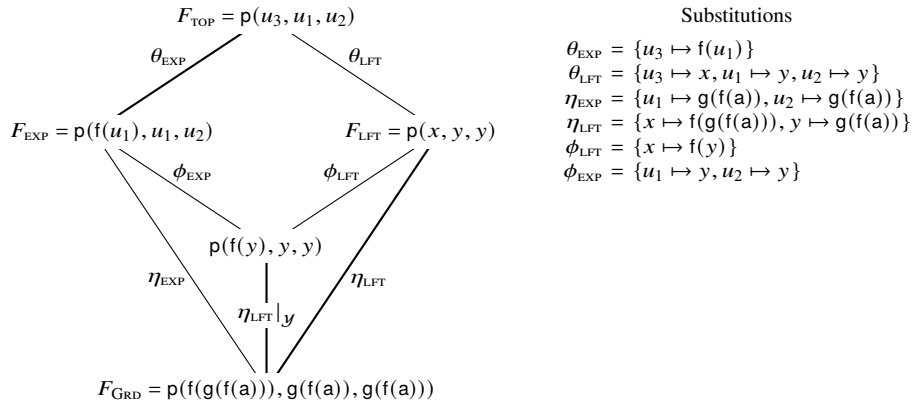
\end{examp}

\medskip

\noindent
We are now ready to prove the required semantic property of the formula
obtained by interpolant lifting with the following lemma.
\begin{lem}[Semantic Justification of Interpolant Lifting: From $\bm{F}$ to
  the Interpolant]
  \label{lem-lift-sem}
  \[F \entails Q_1 v_1 \ldots Q_n v_n\, \invsubst{\HG}{\stt}.\]
\end{lem}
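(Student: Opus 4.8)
The plan is to peel off the outer structure of the claim using the already-established properties of $\HQ$ and then settle what remains by a Herbrand/Skolemization argument organized by the substitutions $\stski_i$ and the variable sets $\ywg_i$. By Proposition~\ref{prop-hq-alt} we have $\invsubst{\HG}{\stt} = \HQ$, and by Proposition~\ref{prop-sk-12} that $Q_1 v_1 \ldots Q_n v_n\,\HQ \equiv \R_1 \w_1 \ldots \R_m \w_m\,\HQ$, so it suffices to show $F \entails \R_1 \w_1 \ldots \R_m \w_m\,\HQ$. By condition~\ibref{ib:sem} of the lifting base, $F \entails \exists\ffs \forall\us\,\FE$. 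Since $\HQ = \invsubst{\HG}{\stz}$ and, by condition~\ref{sk:2} of Definition~\ref{def-stex}, $\rng{\stz}$ contains \emph{every} $\fgs\sterm$ occurring in $\HG$, the inverse application of $\stz$ to $\HG$ replaces every occurrence of a $\fgs\sterm$ by a variable, so no function symbol of $\fgs = \ffs \cup \ggs$ occurs in $\HQ$; hence the second-order prefix $\exists\ffs$ may be dropped on the left and it remains to prove
\[\forall\us\,\FE \entails \R_1 \w_1 \ldots \R_m \w_m\,\HQ.\]

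I would prove this in two halves that meet at the formula $\forall\ywg\,\HQ\stsk$ (recall $\ywg = \ywg_0$ and $\stsk = \stski_0$). First, $\forall\us\,\FE \entails \forall\ywg\,\HQ\stsk$: universal instantiation with $\stmerge$ gives $\forall\us\,\FE \entails \FE\stmerge$, and the free variables of $\FE\stmerge$ all lie in $\ywg$ (each $u\stmerge = u\sth\invsubstpost{\stz|_{\ywg}}$ is ground except for occurrences of $\ggs\sterms$ replaced by $\ywg$-variables), so $\forall\us\,\FE \entails \forall\ywg\,\FE\stmerge$. By Proposition~\ref{prop-stmerge}, $\FE\stmerge = \FQ\stsk$, and by Proposition~\ref{prop-fq-imp-hq}, $\FQ \entails \HQ$, hence $\FQ\stsk \entails \HQ\stsk$; since all these formulas have free variables contained in $\ywg$, prefixing $\forall\ywg$ yields $\forall\ywg\,\FE\stmerge = \forall\ywg\,\FQ\stsk \entails \forall\ywg\,\HQ\stsk$, and the first half follows by chaining.

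For the second half, $\forall\ywg\,\HQ\stsk \entails \R_1 \w_1 \ldots \R_m \w_m\,\HQ$, I would prove by induction on $i \in \{0,\ldots,m\}$ the invariant
\[\forall\ywg_0\,\HQ\stski_0 \entails \R_1 \w_1 \ldots \R_i \w_i\,(\forall\ywg_i\,\HQ\stski_i);\]
the base case $i=0$ is trivial, and the case $i=m$ is the desired statement since $\ywg_m = \emptyset$ and $\stski_m = \emptysubst$. For the step it suffices, by monotonicity of entailment under prepending a fixed quantifier prefix, to show $\forall\ywg_{i-1}\,\HQ\stski_{i-1} \entails \R_i \w_i\,\forall\ywg_i\,\HQ\stski_i$. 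If $\w_i \in \ywg$ this is immediate: $\R_i = \forall$, $\ywg_{i-1} = \{\w_i\} \cup \ywg_i$, and $\stski_{i-1} = \stski_i$, so both sides equal $\forall\w_i \forall\ywg_i\,\HQ\stski_i$. If $\w_i \in \xwf$, then $\R_i = \exists$, $\ywg_{i-1} = \ywg_i$, and $\stski_{i-1}$ extends $\stski_i$ by the single pair $\w_i \mapsto r_i$ with $r_i \eqdef \invsubst{\w_i\stz}{\stz|_{\ywg}} = \w_i\stski_{i-1}$. Using Proposition~\ref{prop-stsk-aux-rng} ($\w_i \notin \varrng{\stski_i}$) one verifies $\HQ\stski_{i-1} = (\HQ\stski_i)\{\w_i \mapsto r_i\}$; using Proposition~\ref{prop-stsk-aux-ywg} ($\ywg_i \cap \var{r_i} = \emptyset$) the substitution commutes past the quantifier, $\forall\ywg_i\,(\HQ\stski_i)\{\w_i \mapsto r_i\} \equiv (\forall\ywg_i\,\HQ\stski_i)\{\w_i \mapsto r_i\}$, and this entails $\exists\w_i\,\forall\ywg_i\,\HQ\stski_i$ because $r_i$ is substitutable for $\w_i$ there. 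Conceptually, $r_i$ is exactly the Skolem witness term for the existentially quantified $\w_i$ — built from $\ffs\sterms$ over already-quantified $\ywg$-variables — and condition~\ref{sk:3} of Definition~\ref{def-stex} is what makes the dependency ordering in the prefix compatible with these witnesses.

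I expect the inductive step of the second half to be the main obstacle: the argument hinges on getting the substitution bookkeeping right — which variables sit in the domain and range of $\stski_i$ versus $\stski_{i-1}$, and the two no-capture conditions that let $\{\w_i \mapsto r_i\}$ move past $\forall\ywg_i$ and then serve as an existential witness. Proposition~\ref{prop-stsk-aux} is tailored to supply precisely these two facts, so once its role is pinned down the remaining verifications are routine. The analogous statement $Q_1 v_1 \ldots Q_n v_n\,\invsubst{\HG}{\stt} \entails G$, needed to complete Theorem~\ref{thm-lifting}, should then follow symmetrically, interchanging the roles of $\ffs$ and $\ggs$ and of existential and universal quantification.
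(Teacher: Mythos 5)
Your proof is correct and takes essentially the same route as the paper: reduce via Propositions~\ref{prop-hq-alt} and~\ref{prop-sk-12} to showing $F \entails \R_1 \w_1 \ldots \R_m \w_m\,\HQ$, establish the base case from condition~\ibref{ib:sem} and Proposition~\ref{prop-stmerge}, and push the quantifier prefix out one variable at a time by the same induction justified by Proposition~\ref{prop-stsk-aux}. The only differences are cosmetic: the paper keeps the second-order prefix $\exists\ffs$ inside the inductive invariant and runs the induction with $\FQ$, applying $\FQ \entails \HQ$ only at the very end, whereas you discharge $\exists\ffs$ and pass from $\FQ$ to $\HQ$ before the induction begins.
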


\begin{proof}
  \prlReset{lem-lift-sem} That the lemma holds under the assumption of the
  following statement~\pref{lss:1} can be shown in the subsequent steps
  explained below:
  \[
  \begin{arrayprfmeq}
    \prl{lss:1}  && F \entails \exists \ffs \R_1 \w_1 \ldots \R_m \w_m \forall \ywg_m\,
    \FQ\stski_m\\
    \prl{lss:2}  & \miff & F \entails \R_1 \w_1 \ldots \R_m \w_m\,
    \FQ\\
    \prl{lss:3}  & \mimp & F \entails \R_1 \w_1 \ldots \R_m \w_m\,
    \HQ\\
    \prl{lss:4}  & \miff & F \entails Q_1 v_1 \ldots Q_n v_n\,
    \HQ\\
    \prl{lss:5}  & \miff & F \entails Q_1 v_1 \ldots Q_n v_n\,
    \invsubst{\HG}{\stt}.
  \end{arrayprfmeq}  
  \]
  The equivalence of~\pref{lss:1} to~\pref{lss:2} holds since it follows from
  the definitions of $\stski_i$ and $\ywg_i$ that $\stski_m = \emptysubst$ and
  $\ywg_m = \emptyset$, and from the definitions of $\FQ$ and $\stz$ that
  members of $\ffs$ do not occur in $\FQ$. The implication of~\pref{lss:3}
  follows from Proposition~\ref{prop-fq-imp-hq}, equivalence to~\pref{lss:4}
  from Proposition~\ref{prop-sk-12} and equivalence to~\pref{lss:5}, the
  statement to prove, from unfolding the definition of $\HQ$.

  It remains to show assumption~\pref{lss:1}. We show by induction that
  actually for all $i \in \{0, \ldots, m\}$ it holds that
  \begin{equation}
    \label{eq-lss-1-indu}
  F \entails \exists \ffs \R_1 \w_1 \ldots \R_i \w_i
  \forall \ywg_i \FQ\stski_i,\tag{$*$}
  \end{equation}
  which, of course, with the case $i=m$ includes assumption~\pref{lss:1}.  We
  first consider the base case where $i=0$.  Let $\us \eqdef \var{\FE}$.  From
  the key property of $\stmerge$ (Proposition~\ref{prop-stmerge}) it follows
  that
  \[\begin{arrayprf}
  \prl{merge} & \forall \us\, \FE\; \entails\; \forall \ywg\, \FE\stmerge
  \;=\; \forall \ywg\, \FQ\stsk.\\
  \end{arrayprf}  
  \]
  Condition~\ibref{ib:sem} of the definition of \name{interpolant
    lifting base} (Definition~\ref{def-ib}) states that $F \entails \exists
  \ffs \forall \us\, \FE$. With \pref{merge} this implies $F \entails \exists
  \ffs \forall \ywg\, \FQ\stsk$, which, since $\ywg = \ywg_0$ and $\stsk =
  \stski_0$, can be written as $F \entails \exists \ffs \forall \ywg_0\,
  \FQ\stski_0$, that is, the statement to show for the base case.

  We now consider the induction step. As induction hypothesis we assume that
  (\ref{eq-lss-1-indu}) holds for some $i \in \{0,\ldots,m-1\}$. The induction
  conclusion, that is, (\ref{eq-lss-1-indu}) with $i+1$ in place of $i$,
  follows from the hypothesis since for all $i \in \{0, \ldots, m-1\}$ it
  holds that
  \begin{equation}
    \label{eq-lift-step}
  \R_1 \w_1 \ldots \R_i \w_i \forall \ywg_i\, \FQ\stski_i\; \entails\;
  \R_1 \w_1 \ldots \R_{i+1} \w_{i+1} \forall \ywg_{i+1}\, \FQ\stski_{i+1},\tag{$**$}
  \end{equation}
  which we now show. Let $i$ be a member of $\{0, \ldots, m-1\}$.  The
  variable~$\w_{i+1}$ is then either in~$\xwf$ or in~$\ywg$. We show
  (\ref{eq-lift-step}) for both cases separately:
  \begin{itemize}
  
  \item Case $\w_{i+1} \in \xwf$: Then $\R_i = \exists$, $\stski_i = \{\w_{i+1}
    \mapsto \w_{i+1}\stski_i\}\stski_{i+1}$ and $\ywg_{i+1} = \ywg_{i}$.
    Moreover, by Proposition~\ref{prop-stsk-aux} it holds that:
    \[
    \begin{arrayprf}
      \prl{b:22} & \w_{i+1} \notin \varrng{\stski_{i+1}}.\\
      \prl{b:23} & \ywg_{i+1} \cap \var{\w_{i+1}\stski_i} = \emptyset.
    \end{arrayprf}
    \]
    Hence, 
    \[
    \begin{arrayprfeq}
      \prl{c:01} &&   \R_1 \w_1 \ldots \R_i \w_i \forall \ywg_i\, \FQ\stski_i\\
      \prl{c:02} & = &  \R_1 \w_1 \ldots \R_i \w_i \forall \ywg_{i+1}\, \FQ\{\w_{i+1} \mapsto \w_{i+1}\stski_i\}\stski_{i+1}\\
      \prl{c:03} & \entails &  \R_1 \w_1 \ldots \R_i \w_i \exists \w_{i+1} \forall \ywg_{i+1}\, \FQ\stski_{i+1}\\
      \prl{c:04} & = &  \R_1 \w_1 \ldots \R_{i+1} \w_{i+1} \forall \ywg_{i+1}\, \FQ\stski_{i+1},\\
    \end{arrayprfeq}    
\]
where the entailment of~\pref{c:03} by~\pref{c:02} is justified by~\pref{b:22}
and \pref{b:23}.  This concludes the proof of the induction step for the case
$\w_{i+1} \in \xwf$.

\item Case $\w_{i+1} \in \ywg$: Then $\R_i = \forall$, $\stski_{i+1} =
  \stski_i$ and $\ywg_i = \{\w_{i+1}\} \cup \ywg_{i+1}$.
  Hence
  \[
  \begin{arrayprfeq}
   \prl{d:01} && \R_1 \w_1 \ldots \R_i \w_i \forall \ywg_i\, \FQ\stski_i\\
\prl{d:02} & =  & \R_1 \w_1 \ldots \R_i \w_i \forall  \w_{i+1} \forall \ywg_{i+1}\, \FQ\stski_{i+1}\\
\prl{d:03} & = & \R_1 \w_1 \ldots \R_{i+1} \w_{i+1} \forall \ywg_{i+1}\, \FQ\stski_{i+1},
  \end{arrayprfeq}
  \]
  which concludes the proof of the induction step for the case $\w_{i+1} \in
  \ywg$.   \qed
  \end{itemize}
  \end{proof}

\pagebreak
\begin{coro}[Semantic Justification of Interpolant Lifting:
    From the Interpolant to~$\bm{G}$]
  \label{cor-lift-sem}
  \[Q_1 v_1 \ldots Q_n v_n\, \invsubst{\HG}{\stt} \entails G.\]
\end{coro}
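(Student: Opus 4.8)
The plan is to obtain Corollary~\ref{cor-lift-sem} from Lemma~\ref{lem-lift-sem} by a duality argument, so that the inductive work of the previous proof need not be repeated. The underlying observation is that Craig-Lyndon interpolation is self-dual under complementation: if $H$ is a Craig-Lyndon interpolant of $F$ and $G$, then $\lnot H$ is a Craig-Lyndon interpolant of $\lnot G$ and $\lnot F$, since $F \entails H \entails G$ becomes $\lnot G \entails \lnot H \entails \lnot F$, and the requirements on $\voc{\cdot}$ and $\free{\cdot}$ in Definition~\ref{def-cli} are preserved when one simultaneously complements all three formulas (complementation leaves $\fun{\cdot}$ and $\free{\cdot}$ unchanged and flips every predicate polarity, hence commutes with the relevant inclusions and intersections).

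First I would show that the hypothesis of the lifting theorem is likewise self-dual: given an interpolant \lbase $\la F, G, \ffs, \ggs, \HG, \FE, \GE, \sth\ra$, the swapped tuple
\[\la \lnot G,\; \lnot F,\; \ggs,\; \ffs,\; \lnot\HG,\; \lnot\GE,\; \lnot\FE,\; \sth\ra\]
is again an interpolant \lbase. This is a mechanical but careful check of the conditions of Definition~\ref{def-ib}: after pushing negations through quantifier prefixes, condition \ibref{ib:sem} for the swapped tuple is exactly \ibrefm{ib:sem} for the original one and conversely (both via a contrapositive); conditions \ibref{ib:pred}, \ibref{ib:fe}, \ibref{ib:fcap} and their primed counterparts follow using $\fun{\lnot E} = \fun{E}$ and the fact that $\pred{\lnot E}$ is the polarity-flip of $\pred{E}$; conditions \ibref{ib:domh} and \ibref{ib:rngh} are unchanged because $\var{\lnot E} = \var{E}$ and $\ggs \cup \ffs = \ffs \cup \ggs$; and \ibref{ib:ipol} for the swapped tuple --- that $\lnot\HG$ is a Craig-Lyndon interpolant of $\lnot\GE\sth$ and $\lnot\FE\sth$ --- is precisely the self-duality of interpolation noted above applied to $\HG$.

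Next I would run the constructions of Theorem~\ref{thm-lifting} and Lemma~\ref{lem-lift-sem} on the swapped base. Since negation does not touch terms, the $\fgs\sterms$ with an $\fgs\sterms$-maximal occurrence in $\lnot\HG$ form the same set $\{t_1,\ldots,t_n\}$ with the same admissible orderings, so the fresh variables $v_1,\ldots,v_n$ and the lifting substitution $\stt$ stay the same. The quantifier attached to $v_i$, however, is dualised: for the swapped base the role of $\ffs$ is taken by $\ggs$, so $v_i$ is quantified existentially exactly when $v_i\stt \in \ggs\sterms$, i.e.\ the $i$-th quantifier is $\overline{Q_i}$, the dual of the $Q_i$ used in the original application. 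Applying Lemma~\ref{lem-lift-sem} to the swapped \lbase therefore gives
\[\lnot G \entails \overline{Q_1} v_1 \ldots \overline{Q_n} v_n\, \invsubst{(\lnot\HG)}{\stt}.\]

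Finally I would rewrite this into the claimed form by purely syntactic steps. Inverse application of $\stt$ only replaces term occurrences, so $\invsubst{(\lnot\HG)}{\stt}$ equals $\lnot\,\invsubst{\HG}{\stt}$; taking the contrapositive of the entailment yields $\lnot\bigl(\overline{Q_1} v_1 \ldots \overline{Q_n} v_n\, \lnot\,\invsubst{\HG}{\stt}\bigr) \entails G$, and moving the outer negation through the prefix (each $\overline{Q_i}$ dualises back to $Q_i$ and the two inner negations cancel) turns the left-hand side into $Q_1 v_1 \ldots Q_n v_n\, \invsubst{\HG}{\stt}$, which is exactly Corollary~\ref{cor-lift-sem}. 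The only part demanding genuine attention --- the main obstacle --- is the verification in the second paragraph that the swapped tuple satisfies Definition~\ref{def-ib}, in particular that the polarity bookkeeping of conditions \ibref{ib:pred}/\ibrefm{ib:pred} and of condition~\ref{def-cli-pred} of the Craig-Lyndon definition behaves correctly under complementation; the remaining steps are elementary quantifier manipulations.
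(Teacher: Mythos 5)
Your proposal is correct and follows essentially the same route as the paper's proof: both exploit the self-duality of Craig-Lyndon interpolation under complementation, pass to the swapped lifting base $\la \lnot G, \lnot F, \ggs, \ffs, \lnot\HG, \lnot\GE, \lnot\FE, \sth\ra$, apply Lemma~\ref{lem-lift-sem} to that base with the dual quantifier prefix $\NQ_1 v_1 \ldots \NQ_n v_n$, and then read off the corollary by contraposition. Your version is if anything a little more explicit than the paper's sketch, since it actually walks through the verification of Definition~\ref{def-ib} for the swapped tuple, which the paper dismisses as ``not difficult to verify.''
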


\begin{proof}[Sketch] A formula~$H$ is a
  Craig-Lyndon interpolant of formulas $F$ and~$G$ if and only if $\lnot H$ is
  a Craig-Lyndon interpolant of $\lnot G$ and $\lnot F$. Let $\NQ_i \eqdef
  \forall$ if $Q_i = \exists$ and $\NQ_i \eqdef \exists$ if $Q_i =
  \forall$. The corollary statement then can then be expressed as
  \begin{equation}
    \label{eq-cor-lift-sem}
    \lnot G \entails \NQ_1 v_1 \ldots
    \NQ_n v_n\, \lnot \invsubst{\HG}{\stt}.\tag{$*$}
  \end{equation}
  Given that $\la F,
  G, \ffs, \ggs, \HG, \FE, \lnot \GE, \sth\ra$ is a lifting base,
  it is not difficult to verify that also
  \[\la \lnot G, \lnot F, \ggs, \ffs, \lnot \HG, \lnot \GE, \lnot \FE,
  \sth\ra\] forms a lifting base.  If we derive Lemma~\ref{lem-lift-sem} from
  this lifting base and $\NQ_1, \ldots, \NQ_n$ instead of $Q_1,\ldots, Q_n$ as
  starting points, then the lemma reads exactly as (\ref{eq-cor-lift-sem}).
  \qed
\end{proof}

Finally, the proof of the interpolant lifting theorem can be stated, based on
Lemma~\ref{lem-lift-sem} and considerations on syntactic properties of the
formula obtained by interpolant lifting.
\begin{proof}[Theorem~\ref{thm-lifting}]
  Let $H \eqdef Q_1 v_1 \ldots Q_n v_n\, \invsubst{\HG}{\stt}$. According to
  the definition of \name{Craig-Lyndon interpolant} (Definition~\ref{def-cli})
  we have to verify semantic and syntactic properties.  The semantic
  properties are $F \entails H$, which has been shown as
  Lemma~\ref{lem-lift-sem}, and $H \entails G$, stated as
  Corollary~\ref{cor-lift-sem}. The syntactic property is $\voc{H} \subseteq
  \voc{F} \cap \voc{G}$, or, equivalently, $\pred{H} \subseteq \pred{F} \cap
  \pred{G}$ \textit{and} $\fun{H} \subseteq \fun{F} \cap \fun{G}$.  The
  requirement on predicate occurrences hold since by conditions
  \ibref{ib:ipol}, \ibref{ib:pred} and~\ibrefm{ib:pred} of the
  definition of interpolant lifting base (Definition~\ref{def-ib}) it follows
  that
  \[\pred{H} \subseteq \pred{\HG} \subseteq \pred{\FE}
  \cap \pred{\GE} \subseteq \pred{F} \cap \pred{G}.
  \]
  The requirement on function symbols can be shown as follows: Since the
  construction of $H$ involves the inverse application of the
  substitution~$\stt$ to the ground interpolant~$\HG$, which causes all terms
  in which members of $\ffs \cup \ggs$ occur to be replaced by variables, it
  holds that $\fun{H} \cap (\ffs \cup \ggs) = \emptyset$.  From conditions
  \ibref{ib:ipol}, \ibref{ib:rngh}, \ibref{ib:fe} and~\ibrefm{ib:fe} of the
  definition of interpolant lifting base it follows that
  \[\fun{H} \subseteq \fun{\HG} \subseteq
  \fun{\FE} \cup \fun{\GE} \cup \ffs \cup \ggs \subseteq (\fun{F} \cap
  \fun{G}) \cup \ffs \cup \ggs,\] which together with $\fun{H} \cap (\ffs \cup
  \ggs) = \emptyset$ implies $\fun{H} \subseteq \fun{F} \cap \fun{G}$.  \qed
\end{proof}

\section{Related Work, Refinements and Issues}
\label{sec-cli-related}

\subsection{Related Work on Interpolant Lifting}
\label{sec-lift-related}

The construction of the lifting result according to Theorem~\ref{thm-lifting}
has in essence been already shown by Huang \cite{huang:95}.  A minor
difference is that Huang orders variables in the quantifier prefix by the
\emph{length} of the associated terms, which is more constrained than the
subterm relationship used in Theorem~\ref{thm-lifting}.

Although the construction of the lifting result can be expressed as a simple
formula conversion, independently of any particular calculus, its correctness
seems not trivial to prove and subtle issues arise.  The proof in
\cite{huang:95} depends with \cite[Lemma 12]{huang:95} on an induction over a
specific representation of a proof by resolution, paramodulation and
factoring.  The formula on which lifting is applied is characterized there by
two properties: First, it is a \name{relational interpolant}, which means that
it satisfies the constraints on a Craig interpolant except that function
symbols which are not shared by both input formulas are permitted.  Second, it
was obtained in the first stage with a specific inductive algorithm from a
propositional deduction tree that in turn was obtained by a specific method
from the first-order proof.  The correctness of Huang's interpolation method
for first-order logic with equality seems so far not settled. In
\cite{huang:95} it is not thoroughly proven and, as noted in
\cite[Sect.~6]{bonacina:15:on}, its specification \cite{huang:95} leaves
ambiguities concerning paramodulation.  In \cite[Sect.~6]{kovacs:17} it is
suggested that Huang's method is in presence of equality incorrect, but the
example given to substantiate this is not traceable. It may be adapted,
however, to illustrate that the notion of \name{relational interpolant} alone
is, at least in presence of equality, not sufficient to characterize the
lifted formulas: If the objective is to compute a Craig interpolant of $F =
(\fa=\fb)$ and $G = (\fg(\fa) = \fg(\fb))$, then Huang's first stage would
yield the relational interpolant $\fa=\fb$, which is also Craig
interpolant. It would not produce $\fg(\fa) = \fg(\fb)$, which is another
relational interpolant but whose lifting, that is, $\forall v_1 \forall v_2\,
v_1\!=v_2$, is not a Craig interpolant of~$F$ and~$G$ because it is not
entailed by~$F$.  Seemingly independently from \cite{huang:95}, the
correctness of interpolant lifting has been proven (for the case without
equality) in \cite[Lemma~8.2.2]{baaz:11} based on natural deduction proofs as
data structures (lifting is called \name{abstraction} in \cite{baaz:11}).  In
contrast, our justification of the lifting step (for the case without
equality) is based more abstractly on Herbrand's theorem instead of a
resolution or natural deduction proof structure. The formula to be lifted is
characterized as an actual Craig-Lyndon interpolant of two intermediate
formulas whose existence is ensured but which do not have to be constructed at
interpolant computation.

Huang's relational interpolants permit free variables upon which extra
quantifiers will be added after lifting.  As indicated in
\cite[p.~188]{huang:95}, this can be done in an arbitrary way: the extra
quantifiers can be existential or universal, at any position in the prefix.
In our formalization, the base formulas used for lifting have to be ground.
The effects described by Huang appear to be subsumed by the alternate
possibilities to instantiate non-ground tableaux delivered by provers as
discussed in Sect.~\ref{sec-grounding}.  The method of \cite{baaz:11} to
construct the results of the first stage, called \name{weak interpolants}
there, involves certain cases where quantified variables are introduced.

The special case of the lifting theorem where \emph{constants} are the only
functions to be replaced by variables was shown already by Craig in his proof
of the interpolation theorem for equality-free relational formulas
\cite[Lemma~1]{craig:uses} (as observed in \cite{bonacina:15:on}) and for
application to an interpolation method for proofs by resolution and
superposition as well as support for theory reasoning in
\cite{bonacina:15:on}. Both proofs are independent of a particular calculus or
proof data structures.

\subsection{Choices in Grounding and Side Assignment}
\label{sec-grounding} 

The \CTIF method (Procedure~\ref{proc-ctif}) for the construction of
first-order interpolants leaves at several stages alternate choices that have
effect on the formula returned as interpolant.
We discuss some of these here, although a thorough investigation of ways to
integrate the exploration and evaluation of these into interpolant
construction seems a nontrivial topic on its own.

The first considered choice concerns the tableau grounding step
(step~\ref{step-grounding} of the procedure).  Typically, provers instantiate
variables just as much ``as needed'' by the calculus to compute a closed
tableau.  To match with our inductive interpolant extraction and interpolant
lifting, these rigid variables in the literal labels of such free-variable
tableaux have to be instantiated by ground terms. There are different
possibilities to do so, all preserving the property that the tableau is a
leaf-closed \sided tableau for $F'$ and $G'$, but leading to different
interpolants: A variable can be instantiated by a term whose functions all
occur in both interpolation inputs.  The term may then occur in the
interpolant.  Alternatively, the variable can be instantiated by a term whose
outermost function symbol occurs in just one of the input formulas or has been
introduced at Skolemization.  By interpolant lifting the term will then be
replaced with a variable whose kind, existential or universal, depends on the
outermost symbol of the term, and whose quantifier position in the prefix is
constrained by its subterms.

Aside of these alternate possibilities that concern the instantiation of each
variable individually, there are also choices to instantiate different
variables by the same term or by different terms: Arbitrary subsets of the
free variables of the literal labels of the tableau can be instantiated with
the same ground term, leading in the interpolant to fewer quantified variables
but to more ``variable sharing'', that is, an increase in the number of
occurrences of each variable throughout the formula.

A second choice point concerns the side assignment
(step~\ref{step-side-assignment} of the procedure): A tableau clause can be an
instance of some clause $F'$ as well as of some clause in $G'$, such that its
side (i.e., the side labels of the nodes labeled with its literals) can be
assigned to $\aaa$ or~$\bbb$, where both assignments may lead to different
interpolants. This possibility may occur if a clause in $F'$ and a clause in
$G'$ have a common instance, including the cases where they are identical or
are variants, that is, identical modulo systematic variable renaming.  Effects
of alternate choices at side assignment were illustrated with
Example~\ref{examp-sides} on page~\pageref{examp-sides}.

\subsection{Preprocessing and Structure-Preserving Normalization}
\label{sec-preproc}

Sophisticated preprocessing is a crucial component of automated reasoning
systems with high performance. While formula simplifications such as removal
of subsumed clauses and removal of tautological clauses preserve equivalence,
others only preserve unsatisfiability.  For example, \name{purity
  simplification}, that is, removal of clauses that contain a literal with a
predicate that occurs with only one polarity in the formula.  Many
simplifications of the latter kind actually preserve not just
unsatisfiability, but, moreover, equivalence \emph{with respect to a set of
  predicates}, or, more precisely, a second-order equivalence
\begin{equation}
\label{eq-so-simp}
\exists p_1 \ldots \exists p_n\, F\; \equiv\;
\exists p_1 \ldots \exists p_n\, \f{simplify}(F),
\tag{$*$}
\end{equation}
where $\f{simplify}(F)$ stands for the result of the simplification operation
applied to~$F$. One might say that the \emph{semantics of the predicates not
  in $\{p_1,\ldots,p_n\}$ is preserved} by the simplification. For the
computation of Craig-Lyndon interpolants it is possible to preprocess the
first as well as the negated second input formula independently from each
other in ways such that the semantics of the predicates occurring in both
formulas is preserved in this sense.  The used preprocessors then should
support parameterization with the set of these predicates (see
\cite[Sect.~2.5]{cw-pie} for a discussion).

For clausal tableau methods some of these simplifications are particularly
relevant as they complement tableau construction with techniques which break
apart and join clauses and may thus introduce some of the benefits of
resolution.
Techniques for propositional logic that preserve
equivalence~(\ref{eq-so-simp}) for certain sets of predicates include variable
elimination by resolution \cite{biere:elim} and blocked clause elimination
\cite{jarv:blocked}. The preprocessing of \name{Prover9} \cite{prover9-mace4}
applies by default a form of predicate elimination.  In general, for
first-order generalizations of such elimination-based techniques the handling
of equality seems the most difficult issue.  Predicate elimination can
introduce equality also for inputs without equality.  In a semantic framework
where the Herbrand universe is taken as domain this can be avoided to some
degree, as shown in \cite{cw-skp} with a variant of the SCAN algorithm
\cite{scan} for predicate elimination.  Blocked clause elimination in
first-order logic \cite{blocked:fol:2017} comes in two variants, for formulas
without and with equality, respectively.

Another way to use equivalence~(\ref{eq-so-simp}) is by \emph{introducing}
fresh ``definer'' predicates for example by structure-preserving (also known
as \name{definitional}) normal forms such as the Tseitin transformation and
first-order generalizations of it
\cite{scott:twovars,tseitin,eder:def:85,plaisted:greenbaum}.  If disjoint sets
of definer predicates are used for the first and for the second interpolation
input, then, by the definition of \name{Craig-Lyndon interpolant}, definer
predicates do not occur in the interpolant.

In certain situations, which need further investigation, it might be useful to
relax this constraint.  For example, if two definer predicates have the same
subformula as definiens, it is in general useful to identify both predicates,
that is, to remove the defining formulas for one of them and replace its
definer predicate by the other, retained, definer predicate. If the two
definer predicates each stem from separate preprocessing of the first and
second interpolation inputs, respectively, the merged definer predicate would
occur in both inputs and might occur in the interpolant.  Another example
would be allowing definers occurring in the interpolant in cases where this
permits a condensed representation of a formula whose equivalent without the
definers would be much larger but straightforward to obtain.

\subsection{Equality Handling}
\label{sec-cli-related-equality}

So far we considered only first-order logic \emph{without equality}.
Nevertheless, our method to compute interpolants can be used together with the
well-known encoding of equality as a binary predicate with axioms that express
its reflexivity, symmetry and transitivity as well as axioms that express
substitutivity of predicates and functions.
If the input formulas of interpolant computation involve equality, these
axioms have to be added. The interpolation inputs are then formulas $E_{\aaa}
\land F$ and~$E_{\bbb} \imp G$ instead of $F$ and $G$, respectively, where
$E_{\aaa}$ and~$E_{\bbb}$ are conjunctions of equality axioms: The
substitutivity axioms for predicate and function symbols that occur only in
one of~$F$ or $G$ are placed in $E_{\aaa}$ or $E_{\bbb}$, respectively,
whereas the axioms that express reflexivity, symmetry and transitivity as well
as substitutivity axioms for symbols that occur in both~$F$ and~$G$ can be
placed arbitrarily in~$E_{\aaa}$, in~$E_{\bbb}$, or in both of them.

For formulas without function symbols with exception of constants in which
equality occurs only in one of the inputs, say $F$, more can be said about the
\emph{polarity} in which it can occur in the interpolant: The axioms
expressing reflexivity, symmetry and transitivity can be added to
$E_{\aaa}$. After adding the substitutivity axioms, for example $\forall x
\forall y\, (\fp(x) \land x=y \imp \fp(y))$, to $E_{\aaa}$ and $E_{\bbb}$ as
described above, all occurrences of equality in $E_{\bbb} \imp G$ are in
substitutivity axioms in which they have negative polarity.  Hence, in
$E_{\bbb} \imp G$ they have positive polarity and in a Craig-Lyndon
interpolant of $E_{\aaa} \land F$ and $E_{\bbb} \imp G$ they must also have
positive polarity. Analogously it can be shown that if equality occurs only in
the other input $G$, then it can occur in the interpolant only with negative
polarity.  Stronger constraints on interpolants with respect to equality are
stated in an interpolation theorem due to Oberschelp and Fujiwara (see
\cite{motohashi:84}).

Equality handling with goal-directed clausal tableau provers is notoriously
difficult.  The modern \name{leanCoP} system \cite{leancop} indeed encodes
equality just as a predicate with axioms as mentioned, with some practical
success for the price of loosing completeness due to \name{restricted
  backtracking}, a technique to cut off parts of the explored search space.
Recently an equality preprocessing technique for \name{leanCoP} has been
described \cite{oliver:otten:equality:2020}, which can be understood as
deriving specific clauses involving equality followed by performing
simplifications.  The best experimental results were obtained with an
incomplete variant in which also certain non-redundant clauses are deleted and
no equality axioms are added.  The historic \name{SETHEO} system
\cite{setheo:97} successfully integrated a refinement of Brand's
STE-modification \cite{brand:75}, a transformation of the source axioms that
makes the equality axioms redundant.  A further refinement of Brand's method
that takes term ordering constraints into account is described in
\cite{steq}. \CMProver \cite{cw-pie,cw-pie:2020} supports both, the axiom-based
equality representation and an implementation of the transformation from
\cite{steq}.  In principle, variants of Brand's transformation can be expected
to be applicable for interpolation, although their semantic properties shown
in the literature \cite{brand:75,steq} are just up to the preservation of
satisfiability and unsatisfiability, whereas for interpolation equivalence, or
at least the preservation of the second-order equivalence discussed in
Sect.~\ref{sec-preproc}, is required.

\subsection{Issues with Top-Down and Bottom-Up Clausal Tableau Provers} 

As demonstrated with the two different tableaux for the same inputs in
Example~\ref{ex-ipol-prop} on page~\pageref{ex-ipol-prop}, there exist in
general quite different closed clausal tableaux for a given clausal formula,
leading to different extracted interpolants.
For top-down methods such as model elimination \cite{loveland:1978} and the
connection method \cite{bibel:ar:1982,bo:C46}, the constructed tableau is
often largely determined by the chosen start clause, that is, the clause
attached to the root. The addition of further clauses is then guided by the
requirement that it closes an open branch through the connection condition,
that is, the last literal on the branch is the complement of a literal in the
clause.  Such provers typically consider a specific subset of the input
clauses as start clauses.  Without loss of completeness the set of negative
clauses can, for example, be taken as this subset, or, if a theorem is to be
proven from a consistent set of axioms, the clauses representing the (negated)
theorem. It remains to be investigated what choices of start clauses are
particularly useful for the computation of interpolants.

Bottom-up methods such as the \hypertableau calculus \cite{hypertab} as such
typically construct variations of clausal tableaux that do not match our
requirements as they may contain non-rigid variables.  Translations into
tableaux with only rigid variables need to be developed, which should be
facilitated by the constraint that different literals in the clause of a
\hypertableau are not allowed to share variables.  When applied to a clausal
formula that is range restricted (all variables in a clause occur in a
negative literal), an important case in practice, the \hypertableau calculus
constructs a tableau that is \emph{ground} -- and thus is trivially a tableau
with only rigid variables that can be directly used in the \CTIF procedure.

\subsection{An Implementation} 
\label{sec-cli-implem}

An implementation of the \CTIF method, which is available as free software, is
integrated in the \name{PIE} environment \cite{cw-pie,cw-pie:2020}. The
construction of the tableaux for interpolation is performed there with the
first-order prover \CMProver that proceeds in the goal-oriented
top-down way.\footnote{Inspired by the \name{Prolog Technology Theorem Prover}
  \cite{pttp} and \name{SETHEO} \cite{setheo:92}, \CMProver was originally
  written in 1993 but had been revived in 1996 \cite{cw-mathlib} and in
  2016. It was evaluated in 2018 on all suitable TPTP problems, that is,
  problems that have a distinguished theorem, are not classified as
  satisfiable and are in clausal or quantified first-order form: Of these, it
  can solve about 76\% of the 2143 problems without equality (in 9
  configurations) and about 26\% of the 11321 problems with equality (in 4
  configurations). The timeout was 600s, the TPTP version was 7.1.0. See
  \url{http://cs.christophwernhard.com/pie/cmprover/} for details.}
Experimental support is also provided for bottom-up tableau construction with
the \name{Hyper} \cite{cw-krhyper,cw-ekrhyper,hyper:2013} \hypertableau
system.  The clausal tableaux used for interpolant extraction are represented
as Prolog terms, providing an interface to integrate further provers.

Configurable preprocessing provides simplifications that respect preservation
of predicate semantics as required for interpolation, supports
structure-preserving clausification and handles the adding of equality axioms.
Configurable postprocessing allows to integrate into the interpolant
extraction ground formula simplifications that are aware of equality, e.g.,
rewrite $\fa=\fa$ to $\true$, and to apply different first-order
simplifications to the overall result.  \name{Symmetric} interpolation
\cite[Lemma~2]{craig:uses} (the name is due to \cite{mcmillan:symmetric}) with
consideration of predicate polarity is implemented as iterated Craig-Lyndon
interpolation.

So far, the implemented \CTIF method has not yet been experimentally compared
with other implementations of interpolation in first-order logic, such as an
extension of \name{Vampire} \cite{vampire:interpol:2012} and \name{Princess}
\cite{ruemmer:ipol:jar:2011}. Both of them are mainly targeted at applications
in verification that involve theory reasoning.  Experiments with applications
of interpolation to ground formulas for query reformulation are described in
\cite{benedikt:2017}. Aside of \name{Vampire} also implementations of
variations of the inductive algorithms from
\cite{huang:95,bonacina:11,mcmillan:2003} have been tried in
\cite{benedikt:2017} on the basis of resolution proofs returned by the
\name{MathSAT} SMT solver \cite{mathsat:13} and the \name{E} first-order
prover \cite{eprover:19}. In addition, a method based on the chase technique,
implemented with \name{DLV} \cite{dlv} as model generator was compared.  In
cases where the expected reformulation is a disjunctive normal form, the
\name{DLV}-based approach was only slightly worse than the best
resolution-based approach.  Another lesson reported in \cite{benedikt:2017}
was that the requirements on interpolation for query reformulation seem quite
different than for verification.  The \name{Vampire} extension, for example,
seems to compute just Craig (in contrast to Craig-Lyndon) interpolants.

\section{Conclusion}
\label{sec-conclusion}

We have investigated the computation of Craig interpolants with automated
theorem provers that compute clausal first-order tableaux.  The presented
method proceeds in two stages, similar to some interpolation methods for
resolution proofs \cite{huang:95,bonacina:15:on}. In the first stage an
intermediate formula is computed by an induction on the proof representation
returned by a theorem prover. The proof representation is in our case a
clausal tableau.  In the second stage the intermediate formula is converted to
an actual interpolant by lifting, that is, replacing terms with variables and
prepending a quantifier prefix.

The involved induction on clausal tableaux is an adaptation of an
interpolation method for analytic tableaux that deconstructs them bottom-up
\cite{smullyan:book:68}.  The version for clausal tableaux reveals striking
parallels with the induction on resolution deduction trees performed by the
interpolation methods surveyed in \cite{bonacina:15:ground}.  The parallels
involve some dualities and seem related to inherent correspondences of
resolution and clausal tableaux.  Exploring them in depth is an issue for
future research.
Based on a known linear simulation of tree resolution by clausal tableaux
\cite{letz:habil} we have shown that in propositional logic interpolation with
clausal tableaux can linearly simulate the most prominent inductive
interpolation methods for resolution proofs.  The obtained clausal tableaux
need to permit \name{atomic cuts}, that is, tautological clauses of the form
$\lnot A \lor A$. On the resolution side, tree-shaped proofs are required,
such that the potential sharing of subproofs in dags seems not retainable in
the simulations.
While the integration of cuts into theorem proving methods that
\emph{construct} clausal tableaux is nontrivial \cite{letz:cut:1994}, it is
straightforward to convert a \emph{given} resolution proof into a clausal
tableau with atomic cuts.
The simulations offer an interesting view on different interpolation methods
for resolution proofs, as their differences are reflected in small variations
of the clausal tableau translation. Future research might investigate the
variations systematically, potentially discovering further methods for
resolution proofs as backward translations of variations.

Our interpolation method computes Craig-\emph{Lyndon} interpolants, that is,
predicates occur in interpolants only in polarities in which they occur in
both input formulas.  In particular for applications in query reformulation
also further potential properties of interpolants are of interest. For
example, the Horn property or that quantifiers occur only together with atoms
that relativize their variables, as for example in range restricted formulas
(e.g., \cite{vgt}) or in access interpolants \cite{benedikt:book}.  Clausal
tableaux seem to provide a suitable basis for versions of interpolation that
ensure such properties.  As indicated in \cite{cw-report}, this is facilitated
if they are structured in such a way that the leaves are exactly the nodes
with negative literals (like \hypertableaux \cite{hypertab}, except that
variables are rigid).  Tableau conversions may be applied to tableaux returned
by provers to achieve that form \cite{cw-report}. This is an issue for future
research.

Interpolant lifting only creates interpolants that are in prenex form, which
might be undesirable or, for forms of interpolation where all occurrences of
quantifications in an interpolant must be relativized by atoms, not
possible. The construction of interpolants with relativized quantifications in
\cite{benedikt:book} is based on an analytic tableau method, where the
introduction of quantifiers may be considered as a special form of lifting
that is applied during interpolant construction to subformulas.  The
comparison and exploration of possible ways of lifting is another issue for
future research.

That the presented approach indeed provides a basis for implementing
interpolation with efficient machine-oriented theorem provers for first-order
logic that can be understood as constructing clausal tableaux has been
demonstrated with an implementation \cite{cw-pie,cw-pie:2020}, which, however,
has so far not been much tested or compared with other systems that are
capable of computing first-order interpolants.

A major drawback compared to approaches based on resolution with
paramodulation and superposition might be the inherently poor equality support
of clausal tableau provers, in particular provers that proceed top down and
create tableaux with rigid variables.  The simulations of propositional
interpolation methods for resolution proofs by clausal tableaux suggest, if
equality is important, another course to practical interpolation by computing
a proof with a system based on resolution and superposition, followed by
translating its equality reasoning steps into applications of equality axioms
and then converting the proof into a clausal tableau for interpolant
extraction.  Investigation of this idea, whether the conversion of a
first-order proof with equality reasoning steps can also be done efficiently,
and whether, in analogy to the propositional case, known interpolation methods
for resolution proofs can be simulated is a further issue for future research.

\subsubsection*{Acknowledgments}
This research was in part supported by Deutsche Forschungsgemeinschaft (DFG)
with grant~\mbox{WE~5641/1-1}.  The author thanks anonymous reviewers of
previous versions for numerous very helpful suggestions and remarks.

\bibliographystyle{spmpsci}
\bibliography{bibelim11short}

\closeout\plabelsfile
\closeout\plabelslogfile
\end{document}